\theoremstyle{theorem}
\newtheorem{theorem}{Theorem}[section]
\newtheorem{lemma}{Lemma}[section]
\newtheorem{corollary}{Corollary}[section]
\newtheorem{proposition}{Proposition}[section]
\DeclareMathAlphabet{\mathpzc}{OT1}{pzc}{m}{it}
\theoremstyle{remark}
\newtheorem{remark}{Remark}[section]
\theoremstyle{example}
\theoremstyle{definition}
\newtheorem{definition}{Definition}[section]
\numberwithin{equation}{section}
\newcommand{\beq}{\begin{eqnarray}}
\newcommand{\eeq}{\end{eqnarray}}
\def\ker{\operatorname{ker}} 
\def\s{\mathfrak s}
\def\g{\mathfrak g}
\def\h{\mathfrak h}
\def\bbZ{\mathbb{Z}}
\def\R{\mathbb{R}}
\def\demi{{\frac{1}{2}}}
\newcommand{\cF}{{\cal F}}
\newcommand{\cG}{{\cal G}}
\newcommand{\cH}{{\cal H}}
\begin{document}

\title{Gauging the Gauge and Anomaly Resolution}
\author[1,2]{{ \sf Hank Chen}\thanks{hank.chen@uwaterloo.ca}}
\author[1]{{\sf Florian Girelli}\thanks{fgirelli@uwaterloo.ca}}
\affil[1]{\small Department of Applied Mathematics, University of Waterloo, 200 University Avenue West, Waterloo, Ontario, Canada, N2L 3G1}
\affil[2]{\small Beijing Institute of Mathematical Sciences and Applications, Beijing 101408, China}

\maketitle

\begin{abstract}
    In this paper, we explore the algebraic and geometric structures that arise from a procedure we dub "gauging the gauge", which involves the promotion of a certain global, coordinate independent symmetry to a local one. By gauging the global 1-form shift symmetry in a gauge theory, we demonstrate that the structure of a {Lie algebra crossed-module} and its associated 2-gauge theory arises. Moreover, performing this procedure once again on a 2-gauge theory generates a 3-gauge theory, based on {Lie algebra 2-crossed-modules}. As such, we show that the physical procedure of "gauging the gauge" can be understood mathematically as a \textit{categorification}. Applications of such higher-gauge structures are considered, including general relativity, high-energy physics and condensed matter theory. Of particular interest is the mechanism of \textit{anomaly resolution}, in which one introduces a higher-gauge structure to absorb curvature defects. This mechanism has been shown to allow one to consistently gauge an anomalous background symmetry in QFT 
    . 
\end{abstract}

\tableofcontents
\setcounter{section}{-1}

\section{Introduction}
{Understanding}  the symmetries of a physical system is tantamount to constructing a mathematical model that describes its properties. From condensed matter to high-energy physics, models such as Ginzburg--Landau theory of phase transitions and the Standard Model, are based upon this fact. In fact, sometimes a handle on symmetry is all we have in more esoteric areas of theoretical physics, such as string theory and quantum gravity, as no sufficient experimental data and/or techniques are yet available.

In general, there are two notions of symmetry in a physical system: { {global}} and { {local}}. Global symmetries act uniformly upon the entire system, while the action of the local symmetry depends on the particular physical configuration in $X$. Of course, the latter is the more general notion, and a global symmetry can be made local by the procedure of { {gauging}}: we promote the constant group elements to depend on points in $X$. Because of the locality requirement for the symmetry, the notion of derivative is typically not transforming covariantly unless one adds some compensating term, a connection. In a sense, the notion of connection encodes the break down of the covariance of the transformation of the derivative.

The structure of physical systems with  local symmetries is described geometrically by a { {principal bundle}} $P\rightarrow X$ \cite{book-baez}; solutions to the equations of motion determine configurations on $P$ that represent a class of { {physical}} configurations on $X$ under the action of transformations by $G$. Many local geometric quantities on $P$, such as connection and curvature, are important for understanding the physical system.

Moreover, recent developments in condensed matter theory \cite{book-wen,Halperin:1978,Bilal:2008} had also shown that, aside from symmetry, topology plays a central role in the characterization of physical systems as well. In particular, the topology of the principal bundle $P\rightarrow X$ determines the presence of { {defects}} in the physical system, which can alter its physical properties in a drastic, non-perturbative manner \cite{Mermin:1979,Freed:2014}. Also from a quantum gravity point of view, understanding topology can be relevant. Indeed, one might need to sum over all the possible topological configurations to build the transition amplitudes. In fact, for example, 3D gravity is essentially about the topology of spacetime, since it does not have local degrees of freedom.

It is important to note that there are different notions of ``topological features''. We can characterize the topology of the manifold of interest (such as spacetime), the topology of the principal bundle in the case of a gauge theory, which is related to the cohomological features of the gauge symmetry. The notion of magnetic monopole is related to the topology and geometry of the principal bundle. Such topological features in the gauge symmetry structure appear for example in quantum field theory, where they manifest as obstructions to defining the path integral \cite{Bilal:2008,adler2004anomalies}. 
Typically, all such non-trivial topological features can be cast as {{anomalies}}, as due to their presence, some quantities (such as $d_AF$, or the 1- or 2-curvature) are non-trivial. 


Given a physical system, one typically tries to identify the symmetries in order to characterize the physical conserved quantities or the quantum theory (built as representations of this symmetry). Conversely, one can   construct new symmetry structures and then try to identify some physical systems associated to them. One way to proceed is to {{categorify}} the notion of symmetries. It consists in using category theory tools to build new types of symmetries \cite{Baez:2003fs}. For example the notion of 2-group comes up naturally when considering higher-dimensional homotopy types \cite{Brown,Ang2018}. This approach relies on beautiful elaborate mathematics which might leave the interested reader wondering what are the physical motivations behind it. One goal of this paper is to argue that this categorification process can be seen as a gauging principle so that dealing with categorified local symmetries here is nothing else than gauging (a gauge). Hence a $(n+1)$-gauge theory can be interpreted as a gauged $n$-gauge theory. This idea is  not new to the expert in categorical symmetries, but we try to make this point in a pedagogical way, using mostly basic tools of gauge theory and exploring how and when there could be some generalizations.    


Once we have in hand some new symmetries, one needs to identify systems where such symmetries are realized. It has been known since some time that categorified symmetries are the natural structure to probe the notion of topology, thanks to the categorical ladder \cite{Crane:1994fw,Mackaay:ek}. For example the moduli space of flat connection can be used to characterize topological features of a 3D manifold \cite{Fock:1998nu, Fock:1998nu}. One can expect the moduli space of flat 2-connections (i.e., with a categorified gauge symmetry) to probe some topological features of a 4D manifold \cite{Alvarez:2021, Cui_2017, Douglas:2018}. These categorified symmetries have been explored from the physics perspective, mainly in the context of condensed matter \cite{Kapustin:2013uxa,Zhu:2019,Kong:2020} or string theory~\cite{Carey:2005,Sati:2009ic}. 


Now why do these categorified symmetries take the form they do, and why are they so useful?  In this paper, we seek to answer this question through a procedure of ``gauging-the-gauge'', which highlights  the fact that these higher symmetries reproduce some of the feature of the topological anomalies we discussed above. For example, it is well-known that a magnetic monopole corresponds to a singular structure in the principal bundle, namely the Bianchi identity is violated, due to the fact that $d^2\neq 0$. Recent work has pointed out that such magnetic monopole physics can be recovered using a 2-gauge theory, without a violation of the Bianchi identity \cite{Cordova:2018cvg,Benini_2019,Dubinkin:2020kxo}. Furthermore, we show that from the purely field theoretic requirement of covariance, we can completely recover the structure of higher-gauge theory. This procedure in fact stack: we will demonstrate that, by absorbing a violation of the 2-Bianchi identity, we recover a non-singular 3-gauge theory.

\subsection{Organization of the Paper}

The article is organized as follows. 

In Section \ref{sec:gauge0}, we recall quickly  what is the usual notion of gauging. Then in Section \ref{sec:gau1gau}, we proceed to gauge the gauge by making local, in some sense, a  hidden shift symmetry for the 1-curvature. The lack of covariance is encoded in a 2-connection. To have a general structure, the notion of a Lie group crossed-module, equivalent to a strict Lie 2-group, is naturally introduced. We then discuss the different possible generalizations of the gauge symmetry structure. In particular, we describe how the structures of a weak Lie 2-algebra \cite{Chen:2012gz,Baez:2003fs} manifests when the (1-)Bianchi identity is relaxed. We also point out how  having a specific non-zero 2-curvature can be related to the topological properties of the corresponding Lie algebra crossed-module, encoded by the so-called differential {{Postnikov class}} \cite{Baez:2005sn,Kapustin:2013uxa}. We discuss then how the gauge symmetry structure needs to be adapted to account for such case.


In Section \ref{sec:2app}, we discuss some applications of  2-gauge theory. The main example is a topological theory, 2-BF theory \cite{Girelli:2007tt, Martins:2010ry}, which can be seen in some sense as a sort of BF theory. According to the space-time dimension, there are different types of applications. For example, in 3D, BF theory (i.e., in particular 3D gravity) can be seen as a specific (2-)gauge fixed 2-BF theory. This had not been pinpointed before, to the best of our knowledge, and it could provide some interesting  new directions to explore 3D gravity. In 4D, 2-BF theory is somehow the canonical topological theory. Its symmetry structure is associated to a 2-Drinfeld double and one would expect that topological  excitations (string like or point like) should be encoded in terms of representations of 2-Drinfeld double, a direct generalization of the Drinfeld double in the 3D case. In 5D, it was known that 2-gauge symmetries could be related to the $w_2$ and $w_3$ Stiefel--Whitney classes \cite{JuvenWang,Thorngren2015}. We highlight how this can be done through  a 2-BF theory, which was not emphasized previously to the best of our knowledge. 

We also discuss how the magnetic monopole can be recovered from considering a 2-gauge theory, with some interesting mobility conditions on the currents \cite{Dubinkin:2020kxo}. This is illustrating the notion of {{anomaly resolution}}, which exchanges a system with non-trivial topological feature (in this case the non-trivial topology for the principal bundle) for a system with an extra gauge symmetry structure which allows to reproduce the physics of the non-trivial topological feature \cite{Cordova:2018cvg,Benini_2019}. A closely related notion of { {anomaly inflow}} \cite{Witten:2019}, and its relation to higher symmetries \cite{Thorngren:2015gtw}, has also been studied recently in the high-energy physics literature.

In Section \ref{sec:gaug2gau}, we gauge the 2-gauge, to obtain a 3-gauge theory. The gauging follows the same step as in the 2-gauge case. We introduce some more general  shift transformation, which do not leave the  derivative (of the 2-connection) covariant. The lack of covariance is encoded in a 3-connection.   We highlight where some constraints, such as the 1-Bianchi identity and the Peiffer condition, can be weakened. We construct the 3-gauge transformations given in terms of a ``Lie algebra 2-crossed-module''. As a direct generalization of the Lie algebra crossed-module case, we discuss how the presence of a non-trivial (first) differential Postnikov class for the 2-crossed-module is associated with a non-zero 3-curvature and non-trivial 1-gauge transformation. 

In Section \ref{sec:3app}, we discuss some applications when dealing with a 3-gauge theory. We recall some of the main features of a 3-BF theory, following \cite{Radenkovic:2019qme}. Extending this, we point out that a 5D spacetime hosting a 3-BF theory (resp. more generally a ``3-Chern--Simons theory'') would have a 3-Drinfeld double (resp. ``Hopf 3-algebra'') symmetry at play. We then  study the 3-Yang--Mills theory $S_\text{3YM}$ based on Lie 3-group gauge principle. The associated 3-conservation laws yield new and interesting higher-mobility constraints that have not appeared previously, as far as we know.


In Appendix \ref{algxmod}, we reproduce in full detail the classification of strict Lie 2-algebras/Lie algebra crossed-modules \cite{Wag:2006}. In Appendix \ref{weakpost}, we explore the relationship between weak 2-gauge theories based on weak Lie 2-algebras \cite{Kim:2019owc,Baez:2005sn} and the Postnikov anomaly. In Appendix \ref{framedquasi}, we examine the topology of framed bordisms/manifolds and their relevance to topological phases (in particular the 5D $w_2w_3$ gravitational anomaly).

\subsection{Original Contributions}
Aside from the demonstration that the \textit{{gauging-the-gauge}} procedure provides an {ab ovo} 
 field theoretic explanation of higher-gauge principles and anomaly cancellation, other original contents of this paper are described in the following.

\begin{itemize}
    \item In Section \ref{strpt} we show that the anomalous 4D boundary phase of the 5D topological order $w_2w_3$ \textit{{cannot}} be described as a discrete 2-BF theory. We propose a resolution of this problem by twisting the \textit{{dual}} gauge sector, effectively implementing a \textit{{non-double}}/non-BF field theoretic description of the 4D phase.
    \item In Section \ref{loopgauge}, we provide a higher-gauge principle suitable for the \textit{{weak}} string 2-algebra, which manifests the appropriate Dixmier-Douady class for the string 2-group as a 2-curvature anomaly. We then applied the gauging-the-gauge procedure to resolve the string structure obstruction through a 3-gauge theory. 
\end{itemize}

{The} ``loop model'' of the \textit{{string 2-algebra}} is a crossed-module equivalent to the weak string 2-algebra defined in \cite{Baez:2005sn}. This resolution of the string structure obstruction is an example of a generalization of the symmetry-enrichment procedure, and generalizes the 2-group Green--Schwarz mechanism in \cite{Benini_2019,Cordova:2018cvg}.

\subsubsection*{Acknowledgements}
Part of this work was completed when HC was affiliated with the Beijing Institute of Mathematical Sciences and Applications. HC is supported by the National Foundation of Science of China (grant number: W2533012).

\section{Gauging the 0-Gauge}\label{sec:gauge0}
In the following, we first review in a pedestrian way the notion of gauging a global symmetry. 
This is standard material, for which one can find many introductions (e.g., \cite{book-baez}). 

Let $X$ denote a $d$-dimensional smooth manifold admitting an action by a Lie group $G$. Consider a (smooth) function $\phi\in C^{\infty}(X)$ transforming under a representation $\pi:G\rightarrow \operatorname{GL}(V)$ of the group $G$ for some vector space $V$, that is  $\phi\in C^{\infty}(X)\otimes V$, namely $\phi$ lies in the algebra of $V$-valued smooth functions on $X$. 

Note that $\pi$ is an homomorphism, and the field $\phi$ transforms as 
\begin{equation}
    \phi(x)\rightarrow \pi (g) \phi(x),\qquad g\in G.\nonumber
\end{equation}
 {If} 
 $g\in G$ is not a $G$-valued function of $X$, then the derivative $d\phi$ transforms covariantly,
 \begin{equation}
     d\phi \rightarrow d(\pi(g)\phi)=\pi(g) d\phi,\nonumber
 \end{equation}
and  $G$ encodes a  (global) \textit{{0-gauge symmetry}}. 

We can promote $g$ to be a $G$-valued function of $X$ itself, such that we still have the transformation law
\begin{equation}
    \phi(x)\rightarrow \pi({g(x)})\phi (x) \equiv g(x)\cdot \phi(x) =\phi'.\nonumber
\end{equation}
{In} this case we are dealing with a principal bundle with fiber $G$ and base $X$. {The representation $\pi$ of $G$ induces a representation of its Lie algebra $\operatorname{Lie}G=\mathfrak{g}$. For notational simplicity, we will no longer indicate $\pi$.}

The Leibniz rule for the exterior derivative $d$ dictates that 
\begin{equation}\label{d-cov}
    d\phi \rightarrow g(d+g^{-1}dg)\cdot \phi.\nonumber 
\end{equation}
{As} such it is not $d\phi$ that transforms covariantly, but the covariant derivative  $\nabla\phi\equiv (d+ g^{-1} dg)\phi$. Indeed, we can introduce the connection $A=g^{-1}dg\in \Omega^1(X)\otimes\mathfrak{g}$, to compensate for the lack of {covariance,} 
  \begin{equation}
    gA \phi = d \phi'- gd\phi  \rightarrow A = g^{-1} d g. \label{lack}
\end{equation}
{Notice} that this connection has a natural invariance symmetry under the left translation for all $h\in G$ constant (i.e., $dh=0$).  
\begin{equation}
(hg)^{-1} d(hg ) = g ^{-1}dg\label{global0form}
\end{equation}
{This} is the well-known fact that  this is a left-invariant form.


Given the covariant derivative $\nabla = d+g^{-1}dg$, its associated curvature
\begin{equation}
    \operatorname{cur}\nabla  = [\nabla,\nabla] = d(g^{-1}dg) + (g^{-1}dg)\wedge (g^{-1}dg) = 0\nonumber
\end{equation}
vanishes, where we have used the identity $d(1) = d(g^{-1}g) = (dg^{-1})g+g^{-1}dg = 0$. This means that the connection $A=g^{-1}dg$ is \textit{{flat}}.

\subsection{The 0-Form Symmetry and 1-Gauge Transformations} The connection 1-form in an arbitrary gauge, $A\in\Omega^1(X)\otimes\mathfrak{g}$ and the associated curvature 2-form $\operatorname{cur}A=F=d_AA=dA+\frac{1}{2}[A\wedge A]$ transform as
\begin{equation}\label{1-gauge}
    A\rightarrow A^g=g^{-1}Ag + g^{-1}dg,\qquad F\rightarrow F^g=g^{-1}Fg.
\end{equation}
{Expressing} $g = \exp \lambda \approx 1+\lambda$ in terms of the infinitesimal gauge parameter $\lambda\in\Omega^0(X)\otimes\mathfrak{g}$, we achieve the (infinitesimal) {(1-)gauge transformation laws}
\begin{eqnarray}
A&\rightarrow& A^\lambda = A + [A,\lambda] + d\lambda \equiv A + d_A\lambda, \nonumber\\
F&\rightarrow& F^\lambda = F + [F,\lambda].\nonumber
\end{eqnarray} 
{They} endow the bundle $P\rightarrow X$ with a \textit{{0-form gauge symmetry}} parameterized by $\lambda$. 


The {\it {Bianchi identity}} reads $d_AF = dF + [A\wedge F]=0$, which holds in general for any principal $G$-bundle with connection $A$. Since $F$ transforms covariantly, $d_AF$ also transforms  covariantly
\begin{equation}
   d_AF\rightarrow d_{A^\lambda}F^\lambda = d_AF + [d_AF,\lambda].\nonumber
\end{equation} 
{It} is possible (and consistent) to achieve a {\it {1-curvature anomaly}} $F=\sigma\neq0$, as long as $\sigma\in \Omega^2(X)\otimes\mathfrak{g}$ satisfies $d_A\sigma=0$, and transforms covariantly $\sigma\rightarrow g^{-1}\sigma g$. 


\subsection{Global 1-Form Symmetry} 
What we have recalled here is that, by gauging the global symmetry understood as a ``0-gauge'' symmetry, we obtain an ordinary 1-gauge bundle $P\rightarrow X$ that is flat. However, one may notice that the curvature 2-form $F=d_AA$ has a hidden symmetry in the presence of a non-trivial centre $Z(\mathfrak{g})$. This symmetry is  given by
\begin{equation}
    A\rightarrow A+\alpha,\label{global1form}
\end{equation}
where $\alpha$ is a closed 1-form valued in the centre  $Z(\mathfrak{g})$ 
of the Lie algebra $\g$, that is $\alpha\in \Omega^1_0(X)\otimes Z(\mathfrak{g})$. As such the above gauge structure in fact manifests a ``1-form  symmetry'' parameterized by $\alpha$, on top of the pre-existing 1-gauge 0-form symmetry parameterized by $\lambda$. This  1-form symmetry is affecting the connection $A$ but not its curvature.

\section{Gauging the 1-Gauge}\label{sec:gau1gau}
In the 1-gauge case, we have highlighted two different types of invariance, one specified by a left multiplication, in \eqref{global0form}, the other one by a 1-form shift in \eqref{global1form}. It is natural to ask what happens when we gauge each symmetry, i.e., we make them non-constant. For the former, making $h$ non-constant amounts to just another gauge transformation, so there is nothing new to be gained. The latter is more interesting, as it leads to some new structures.   

Relaxing the condition that $\alpha$ in  \eqref{global1form} is constant and valued in the centre $Z(\mathfrak{g})$ will be called ``gauging the 1-form gauge''. So we allow $\alpha\rightarrow a$ to become a generic 1-form $a\in\Omega^1(X)\otimes\mathfrak{g}$ that has non-trivial coordinate dependence on $X$, similar to the gauging procedure for the global/0-gauge symmetry.

\subsection{Shifting the Connection}
Typically, one may \textit{{a priori}} take a gauge bundle $P\rightarrow X$ with the non-trivial curvature $F=\sigma\neq 0$, then study the associated gauge theory. Alternatively, we may perform a {\it {particular}} 1-form shift such that $F \rightarrow F'$ is transformed to a non-trivial value. 


Indeed, under a generic 1-form shift.   
\begin{equation}
A\rightarrow A'=A+a,    \nonumber
\end{equation}
we see that the curvature transforms accordingly as
\begin{equation}\label{arbshift}
    F \rightarrow F'=d_{A'}A'=F +d_Aa+\frac{1}{2}[a\wedge a] =F+ d_Aa+\frac{1}{2}[a\wedge a].
\end{equation}
{In} the gauge where $A=0$, we just have 
\begin{equation}
    F'=da+\frac{1}{2}[a\wedge a],\nonumber
\end{equation}
which is the curvature of $a$ considered as a $G$-connection. As such we may shift the curvature to any value from zero, which serves as the central key fact for anomaly resolution discussed later. Usually, the ``gauging'' story ends here, and we deal with an arbitrary curvature associated to the connection in a particular 1-form gauge $A=a$. 

However, the above also shows that, by considering the 1-form shift as a higher-form gauge symmetry, the (1-)curvature quantity $F$ is a {\it {gauge datum}}, the notion of curvature is gauge dependent. We have then a pair of gauge structures, one encoded in $g$ which in a sense encodes the arbitrariness of the frame we deal with, and one encoded in $a$, which encodes the arbitrariness of the curvature. 



One can realize that the transformation  \eqref{arbshift} can be seen as lack of covariance of the curvature 2-form under the arbitrary shift, analogous to the one of the derivative of the field $\phi$ under $\pi(g)$. To amend for the lack of covariance, we introduced a non-zero connection $A=gdg^{-1}$ in \eqref{lack}. 

Hence in a similar manner, to amend for the lack of covariance of the curvature under the arbitrary shift, we  introduce  a {\it {2-form}} gauge connection $\Sigma\in\Omega^2(X)\otimes\mathfrak{g}$ such that, in the gauge where $A=0$
\begin{equation}
    \Sigma\equiv (F'-F)= F'=da+\frac{1}{2}[a\wedge a] .\label{eq:2gau}
\end{equation}
{If} we define the curvature of $\Sigma$, as the 2-curvature, \begin{equation}
K=d_A\Sigma,     \nonumber
\end{equation}
then we see that by the Bianchi identity 
\begin{equation}
   d_{A}\Sigma= d_{A}F=0,  \nonumber
\end{equation}
so that this 2-connection is flat. Indeed as we shall see later, this 2-connection $\Sigma = da+\frac{1}{2}[a\wedge a]$ is a ``pure 2-gauge'', analogous to the flat pure 1-gauge $A=gdg^{-1}$ obtained from gauging the 0-gauge. 



The construction so far is restrictive, in a sense since we focus on a 2-connection with value in the same Lie algebra  $\mathfrak{g}$. It seems natural to make it valued   in some other Lie algebra $\mathfrak{h}$, together with a map $t:\mathfrak{h}\rightarrow\mathfrak{g}$ (a homomorphism of Lie algebras), which plays in a sense the same role as the representation $\pi$ when we dealt with a regular 1-gauge. The most natural notion to use is that of a Lie 2-algebra \cite{Baez:2003fs}. There are different notions of  it. The first we are interested in is the notion of {\it {strict}} Lie 2-algebra, which can be equivalently viewed as a Lie algebra crossed-module \cite{Wag:2006}.  The crossed-module formulation is most convenient to discuss the notion of 2-gauge theory. We shall also see how the notion of a weak Lie 2-algebra can be relevant in this setting.

\subsection{Lie Algebra Crossed-Modules} 
We first define the notion of Lie algebra crossed-modules, and introduce the fields relevant to building a 2-gauge theory. We will then seek to develop all the structures of a principal 2-bundle (see e.g., \cite{Wockel2008Principal2A}) from field-theoretic considerations.

\begin{definition}
Consider a pair of Lie algebras $\mathfrak{g}$ and $\mathfrak{h}$,  such that there is an action of $\mathfrak{g}$ on $\mathfrak{h}$ noted $\rhd$. We also introduce $t$  a Lie algebra homomorphism  such that we have  2-term complex 
\begin{equation}
    \mathfrak{G}: \mathfrak{h}\xrightarrow{t}\mathfrak{g},\label{eq:2comp}
\end{equation}
{The} 2-term algebra complex  \eqref{eq:2comp} is a \textit{ Lie algebra crossed-module} if the action and the $t$-map satisfy 
\begin{enumerate}
    \item  the Peiffer conditions, respectively the \textit{$\g$-equivariance of $t$} and the \textit{Peiffer identity},
\begin{equation}
    t(x\rhd y) = [x,t(y)],\qquad t(y)\rhd y' = [y,y'],\nonumber
\end{equation}
and 
\item the \textit{2-Jacobi identities}
\begin{eqnarray}
    [[x,x'],x''] + [[x',x''],x] + [[x'',x],x'] =0,\nonumber \\
    x\rhd (x'\rhd y) - x'\rhd (x\rhd y) - [x,x']\rhd y = 0,\nonumber
\end{eqnarray}
\end{enumerate}
for each $x,x',x''\in\mathfrak{g}$ and $y,y'\in\mathfrak{h}$.
\end{definition}

 {An important} 
consequence of the Peiffer identity is that $\ker t\subset Z(\mathfrak{h})$ is contained in the centre of $\mathfrak{h}$.

The $\g$-equivariance property of $t$ can be summarized by the following diagram
\begin{equation}
\begin{tikzcd}
\mathfrak{h} \arrow[r, "t"] \arrow[d]                                    & \mathfrak{g} \arrow[d] \arrow[ld, "\rhd"']\\
\operatorname{Inn}\mathfrak{h} \arrow[r, "t"] & \operatorname{Inn}\mathfrak{g} 
\end{tikzcd},\label{equivdiag}
\end{equation} 
where $\operatorname{Inn}\g$ denotes the space of inner derivations on the Lie algebra $\g$.


Let us consider now the relevant connections: the 1-form connection $A$ is valued in $\g$, while the 2-form connection $\Sigma$ is valued in $\h$. As we will see in Section \ref{sec:fake-flat}, $t$ is a Lie algebra homomorphism that allows us to connect fields valued in $\h$ to ones valued in $\g$.  This action $\rhd$ can be viewed in a sense as the gauge transformations induced by $\mathfrak{g}$ on the fields/2-gauge parameters with value in $\mathfrak{h}$. This will be discussed in Section \ref{1-2gauge}. 

The covariant derivative we will use is still $d_A$, i.e., it is defined in terms of the 1-connection $A$. We will therefore use the action to define the covariant derivative of a form with value in $\h$.  Taking an arbitrary $\mathfrak{h}$-valued n-form  $S\in\Omega^n(X)\otimes\mathfrak{h}$, we introduce the wedge product $\wedge^\rhd$ between a 1-form and and n-form,  $$\wedge^\rhd: (\Omega^1(X)\otimes \mathfrak{g})\otimes(\Omega^n(X)\otimes\mathfrak{h}) \xrightarrow{\wedge} \Omega^{n+1}(X)\otimes(\mathfrak{g}\otimes\mathfrak{h}) \xrightarrow{\rhd} \Omega^{n+1}(X)\otimes\mathfrak{h}.$$ 
{This} allows to define the covariant derivative of $S\in\Omega^n(X)\otimes\mathfrak{h}$, 
$$
d_AS\equiv dS + A \wedge^\rhd S. 
$$
{Putting} together the  differential $d_A\cdot =d\cdot  +A\wedge^\rhd \cdot$ on $\Omega^n(X)\otimes\mathfrak{h}$ with the t-map,  and using the $\g$-equivariance {(We have $t(A \wedge^\rhd S)= [A\wedge t(S)]$.)} implies that the covariant derivative $d_A$ on $\mathfrak{h}$-valued forms is mapped under $t$ to the covariant differential $d_A$ on $\mathfrak{g}$-valued forms. This can be expressed compactly as 
\begin{equation}\label{td=dt}
    td_A = d_At.
\end{equation}

\begin{remark}\label{xmod2alg}
It is well-known that Lie algebra crossed-modules $\mathfrak{G}$ are equivalent to {\it strict} Lie 2-algebras \cite{Wag:2006,Bai_2013}. Following the mathematical literature, we call the strict Lie 2-algebra {\it skeletal} if $t=0$, and {\it trivial} if $t=\operatorname{id}$.

\end{remark}



\subsection{Curvatures and Bianchi Identities}\label{2alg}
Given the general 2-Lie algebra framework, we explore the different notions of curvatures that appear. First we have the notion of fake flatness which relates the 2-connection to the 1-curvature up to the t-map. We then express the properties of the 2-curvature and highlight it also satisfies a type of Bianchi identity. Finally, we discuss how the one kind of violation of the 1-Bianchi identity can be recast in terms of a 2-gauge theory based on a \textit{{weak}} 2-Lie algebra.

\subsubsection{Fake-Curvature}\label{sec:fake-flat}
When using the crossed-module formalism, the relation between the 2-connection and the curvature   we introduced in   \eqref{eq:2gau} can be rewritten  as
\begin{equation} 
t(\Sigma)=F'=da+a\wedge a,     \nonumber
\end{equation}
with  $\Sigma = dL + \demi [L\wedge L]$, provided that  $t(L) =a$. In fact  \eqref{eq:2gau} can be readily obtained if $\mathfrak{h}=\mathfrak{g}$ and the t map is the identity. Hence the construction in \eqref{eq:2gau} can be seen as an example of a 2-gauge theory based on the identity crossed-module.

The relation  \eqref{eq:2gau} can also be interpreted as a generalized notion of curvature 
\beq
\mathcal{F}= F'-t(\Sigma),\nonumber
\eeq
which is known as {\it {fake-curvature}}. The condition in which it is constrained to be zero,
\beq\label{fakeflat}
\mathcal{F}= F'-t(\Sigma)=0,
\eeq
is known as the \textit{{fake-flatness condition}}. A na{\" i}ve notion of ``2-parallel transport'' serves as a geometric motivation for imposing  \eqref{fakeflat} \cite{Yekuteli:2015}, but we need not assume it at the infinitesimal level based on a Lie algebra crossed-module/\textit{{strict}} Lie 2-algebra. We will see nevertheless that such condition can also appear when we consider 1- or 2-gauge transformations in Section \ref{1-2gauge}. 

\begin{remark}\label{adjpt}
It is possible to define a notion of higher-parallel transport \textit{without} fake-flatness $\mathcal{F}\neq 0$, which would move us into the realm of \textit{adjusted 2-parallel transport} \cite{Kim:2019owc}. We shall not consider this in detail here.
\end{remark}

As mentioned previously, we note that  \eqref{fakeflat} can be interpreted as sourcing the curvature with $t(\Sigma)$, allowing us to break away from a flat 1-connection. We will come back to this interpretation in Section \ref{sec:2app}.

\subsubsection{2-Curvature and 2-Bianchi Identity}
The 2-curvature is defined as the tensor $K=d_A\Sigma\in \Omega^3(X)\otimes \h$. When the 2-connection is pure 2-gauge $\Sigma=dL+\frac12[L\wedge L]$, we have as expected $K=0$,
\begin{align}
    d_A\Sigma = d^2L + \frac12d[L\wedge L]+ t(L)\rhd (dL+\frac12[L\wedge L])=0
\end{align}
where for simplicity we picked the 1-gauge where  $A=t(L)$ and we used that $d^2=0$, the Peiffer identity and the Jacobi identity for $\mathfrak{h}$.  

One may insert a {\it {2-curvature anomaly}} $\kappa\neq0$,  $K=\kappa$ to go away from the pure 2-gauge case. We will study this in Section \ref{twisting}.  As we are going to show, $K$ is valued in $\operatorname{ker}t$, and so must $\kappa$. Indeed, for any 2-connection, as a consequence of the fake-flatness condition and the 1-Bianchi identity, the 2-curvature must be valued in $\operatorname{ker}t\subset \mathfrak{h}$. 
\begin{equation} \label{Kkert}
t(K)= t(d_{A}\Sigma)= d_{A} t(\Sigma)   =d_{A} F=0.      
\end{equation}
{As} a consequence of the Bianchi identity, we  have that $d_AK\in \ker t$.

On the other hand,  by the graded Leibniz rule, the 2-curvature $K$ satisfies 
\begin{equation}
    d_A K = d_A(d_A\Sigma)=
    F\wedge^\rhd \Sigma= t(\Sigma)\wedge^\rhd \Sigma = [\Sigma\wedge \Sigma]|_{\operatorname{ker}t},\nonumber
\end{equation}
where we used the Peiffer conditions. Note that since $ d_A K$ is valued in $\ker t$, we should project the commutator $[\Sigma\wedge \Sigma]$ to $\ker t$. However, since $\Sigma$ is a 2-form and $[\cdot,\cdot] = (t\cdot )\rhd \cdot$ is skew-symmetric, 
this term vanishes and hence we achieve the  {\it {2-Bianchi identity}}
\begin{equation}
    d_AK =0.\label{2-bianch}
\end{equation}
{We} shall discuss in Section \ref{sec:gaug2gau} how such identity can be weakened.



\subsubsection{1-Bianchi Anomaly and Weak 2-Lie Algebras}\label{weak}
Now suppose we forgo the 1-Bianchi identity, then  $K$ needs not be valued in $\operatorname{ker}t$. 
\begin{align}
  tK =  d_AF &= dF + [A,F] = d^2A+\frac{1}{2}d[A\wedge A] + [A\wedge dA] + \frac{1}{2}[A\wedge [A\wedge A]] 
  \nonumber\\ &= d^2A+\frac{1}{2}[A\wedge [A\wedge A]]\neq 0, \nonumber  
\end{align}
where we used that $d[A\wedge A] = [dA\wedge A] - [A\wedge dA] = -2[A\wedge dA]$.  There are two different ways to do this, one is to let $d^2A\neq0$ (globally), in which case we have a monopole. The other way is if the second term is non-vanishing, which occurs when we let go of the Jacobi identity on $\g$. In this case, $\g$ is strictly speaking no longer a Lie algebra; however, we shall see that the following structure we shall derive can also be applied to the case where $\g$ is a Lie algebra, but $t=0$ must be identically zero. 

\begin{remark}
The two ways in which the 1-Bianchi identity is violated are distinct. The violation of the Jacobi identity $[A\wedge [A\wedge A]]$ is of an algebraic nature, and hence introduces non-trivial modifications to our Lie 2-algebra structure; we shall focus on this case in the following. On the other hand, the monopole case $d^2A\neq 0$ is of  differential geometric  nature, which indicates a non-trivial topology of the 1-gauge theory. We shall discuss how this 1-gauge  topological feature can  be treated using a 2-gauge formalism, \textit{without} a violation of the Bianchi identity in Section \ref{QED-anom}. This will be an example of the notion of \textit{anomaly resolution}. 
\end{remark}



By relinquishing the Jacobi identity, we may write this term as a contribution to $K$ by lifting it along $t$ up to $\mathfrak{h}$. In other words, we introduce a skew-trilinear map---called appropriately the { {Jacobiator}
}---satisfying
\begin{equation}
    \mu:\mathfrak{g}^{\wedge 3}\rightarrow\mathfrak{h},\qquad \frac{1}{3!}t\mu(A,A,A)=[A\wedge [A\wedge A]],\label{weakmu}
\end{equation}
such that the { {modified 2-curvature}} reads \cite{Kim:2019owc}
\begin{equation}
    K = d_A\Sigma - \frac{1}{3!}\mu(A,A,A) = 0.\label{eq:1biananom}
\end{equation}
{Since} the term $\mu(A,A,A)$ arises due to the failure of the 1-Bianchi identity, we call it the {\it {1-Bianchi anomaly}}. Note $\mu$ only appears for non-Abelian $\mathfrak{g}$. We note that the 1-gauge transformations need to be carefully analyzed in this case as $\mu(A,A,A)$ will not be a tensor. We discuss this in Section \ref{1-2gauge}.


This map $\mu$ is in fact precisely the {\it {homotopy map}} 
of a { {weak}} \cite{Chen:2012gz,Kim:2019owc} or a { {semistrict}}~\cite{BaezRogers} Lie 2-algebra. More precisely, the homotopy map is a trilinear skew-symmetric map $\mu:\mathfrak{g}^{\wedge 3}\rightarrow\mathfrak{h}$ satisfying
\begin{eqnarray}
[x,[x',x'']]+[x',[x'',x]]+[x'',[x,x']]=t\mu(x,x',x''),\nonumber \\
x\rhd (x' \rhd y) - x' \rhd (x\rhd y) - [x,x']\rhd y = \mu(x,x',t(y))\label{homotopy}
\end{eqnarray}
for each $x,x',x''\in\mathfrak{g}$ and $y\in\mathfrak{h}$. Indeed,  \eqref{weakmu} is equivalent to the first of these conditions.

An important additional property that $\mu$ {must} satisfy is its $\mathfrak{g}$-equivariance:
\begin{equation}
    x\rhd \mu(x_1,x_2,x_3) = \mu([x,x_1],x_2,x_3) + \mu(x_1,[x,x_2],x_3) + \mu(x_1,x_2,[x,x_3]).\label{gequiv}
\end{equation}
{As} such, we can compute
\begin{eqnarray}
    d_A\mu(A,A,A) &=& d(\mu(A,A,A)) + A\wedge^\rhd \mu(A,A,A) \qquad\text{$\mathfrak{g}$-equivariance and Leibniz rule} \nonumber \\
    &=& \big(3\mu(dA,A,A)) + \frac{3}{2} \mu([A,A],A,A)\big) \qquad \text{Trilinearity of $\mu$} \nonumber \\
    &=& 3\mu(F,A,A),\nonumber
\end{eqnarray}
where $\circlearrowright$ denotes a summation over cyclic permutations.  The factor of $\frac{3}{2}$ appears in the {second} line due to the fact that $\mu([A,A],A,A)$ is symmetric under an exchange of the first argument $[A,A]$ and the last two arguments $A,A$.
This gives rise to the { {modified 2-Bianchi identity}}
\begin{equation}
d_AK = F\wedge^\rhd \Sigma - \frac{1}{2}\mu(F,A,A)=0,\nonumber
\end{equation}
which has also appeared in the context of the gauge theory based on a weak Lie 2-algebra~\cite{Kim:2019owc}.


\begin{remark}\label{rem:string}
Notice that if the weak Lie 2-algebra is skeletal, namely $t= 0$, there is no violation to the Jacobi identity in the component $\g$. An example is the skeletal model { {string Lie 2-algebra}} $\mathfrak{string}_k(\mathfrak{g})$ of a simple Lie algebra $\g$ \cite{Kim:2019owc,Baez:2005sn}, where $k\in\bbZ$ is called the {\it level}. The Lie 2-algebra structure is given by $t=0,\rhd=0$, and the Jacobiator is $\mu = k\omega$, where $\omega$ is the fundamental 3-cocycle
\begin{equation}
    \omega = \langle\cdot,[\cdot,\cdot]\rangle \in Z^3(\mathfrak{g},\mathbb{R}).\nonumber
\end{equation}
{This} is one of the most commonly-seen weak Lie 2-algebras in the physics literature. The bundle gerbe associated to the string Lie 2-algebra describes the \textit{string structure} appearing in string theory~\cite{Carey:2005,Sati:2009ic}.
\end{remark}



\subsection{Gauge Transformations}\label{1-2gauge}
In this section, we review the different transformations we can perform and the inherited compatibility conditions.

\subsubsection{1-Gauge Transformations}
In order to preserve the fake flatness condition, we derive the transformations of $\Sigma$ and then $K$, from the transformation of the curvature 2-form  \eqref{1-gauge}. 
\begin{eqnarray}
F&\rightarrow& F^\lambda = F + [F,\lambda]\Rightarrow t(\Sigma) \rightarrow t(\Sigma)+ [t(\Sigma),\lambda] = t(\Sigma)- t(\lambda \rhd \Sigma) \nonumber\\
\Sigma &\rightarrow& \Sigma - \lambda \rhd \Sigma \nonumber\\
K=d_A \Sigma &\rightarrow& K- \lambda \rhd K,\label{1gau1}
\end{eqnarray}
where $\lambda\in\Omega^0(X)\otimes\mathfrak{g}$. 

Now suppose the underlying Lie 2-algebra is weak, with $\mu\neq 0$. We shall see that, provided $\Sigma$ acquires an additional term \cite{Kim:2019owc}
\begin{equation}
    \Sigma\rightarrow \Sigma^\lambda = \Sigma- \lambda\rhd\Sigma - \frac{1}{2}\mu(\lambda,A,A)\label{modified1gau}
\end{equation}
under 1-gauge transformation, then we preserve the covariance of the 2-curvature under the 1-gauge transformations,
\begin{equation}
    K\rightarrow K^\lambda = K - \lambda\rhd K + \mu(\lambda,A,\mathcal{F}).\nonumber
\end{equation}
{Indeed,} working with the modified 2-curvature  \eqref{eq:1biananom}, we have from the definition  \eqref{homotopy},
\begin{small}
\begin{equation}
    -A\wedge^\rhd(\lambda\rhd\Sigma) + [A,\lambda]\wedge^\rhd\Sigma = -\mu(A,\lambda,t\Sigma) - \lambda\rhd(A\wedge^\rhd\Sigma)= \mu(\lambda,A,t\Sigma) - \lambda\rhd(A\wedge^\rhd\Sigma).
\end{equation}\end{small}
{On} the other hand, we have by the $\g$-equivariance of $\mu$, \eqref{gequiv}, that
\begin{eqnarray}
    \mu(d_A\lambda,A,A) &=& \mu(d\lambda,A,A) - \frac{1}{2}(\mu([\lambda,A],A,A) - \mu([A,\lambda],A,A)) \nonumber\\
    &=& d(\mu(\lambda,A,A)) + 2\mu(\lambda,A,dA)\nonumber\\
    &\qquad& + \frac{1}{2}(\frac{2}{3}\lambda\rhd\mu(A,A,A)+2\mu(\lambda,A,[A\wedge A]) \nonumber\\
    &\qquad&-A\wedge^\rhd\mu(\lambda,A,A))\nonumber\\
    &=& 2\mu(\lambda,A,F)+\frac{1}{3}\lambda\rhd\mu(A,A,A)-d_A\mu(\lambda,A,A).\nonumber
\end{eqnarray}
{There} are three such terms, hence we have
\begin{equation}
    \frac{1}{3!}\mu(A,A,A)\rightarrow  \frac{1}{3!}\mu(A,A,A)+\mu(\lambda,A,F) +\frac{1}{3!}\lambda\rhd\mu(A,A,A)- \frac{1}{2}d_A\mu(\lambda,A,A) + o(\lambda^2)\nonumber 
\end{equation}
modulo terms of higher order in $\lambda$. These terms precisely cancel the $d_A\mu(\lambda,A,A)$ term in the 1-gauge transformation of $K$, as desired.

\subsubsection{2-Gauge Transformations}
The shift of the 1-connection parameterized by $L$ such that $a=t(L)$ is interpreted as \textit{{the 2-gauge transformation}}. Indeed, the 2-connection $\Sigma$ was introduced such that the 1-form shift $A\rightarrow A'=A+t(L)$ in the 1-connection was interpreted as a (2-)\textit{{gauge symmetry}}. 

Given the 2-form connection $\Sigma$ undergoes a corresponding 2-gauge transformation, 
\begin{align}\label{2gau}
\Sigma\rightarrow \Sigma'=\Sigma + d_AL + \frac{1}{2}[L\wedge L],
\end{align}
parameterized by a 1-form $L\in \Omega^1(X)\otimes\mathfrak{h}$, we see that the fake-curvature $\mathcal{F}=F-t\Sigma$ is kept invariant, as desired.
The 2-curvature is covariant under this 1-form shift transformation since, with $A'=A+t(L)$, 
\begin{eqnarray}
    K \rightarrow K' &=& d_{A'}\Sigma' = d_A\Sigma +t(L)\wedge^\rhd \Sigma + d_{A+t(L)} (d_AL + \frac{1}{2}[L\wedge L])\nonumber\\
    &=& K+ [L\wedge \Sigma] +F\wedge ^\rhd L + \frac{1}{2}d_A [L\wedge L] + t(L) \wedge^\rhd d_AL + \frac{1}{2} t(L) \wedge^\rhd [L\wedge L]\nonumber\\
     &=& K- t\Sigma\wedge^\rhd L +F \wedge^\rhd  L  + \frac{1}{2}d_A [L\wedge L] + [L \wedge  d_AL] + \frac{1}{4} [L \wedge [L\wedge L]]\nonumber\\
     &=& K + \cF \wedge^\rhd L \sim K\label{gtff}
\end{eqnarray}
where we used extensively the Peiffer conditions, and the Jacobi identity for the cubic term in $L$. Note $K$ is invariant {on-shell} of the fake-flatness condition $\mathcal{F}=0$.


Now let us consider how the modified 2-curvature $K$  \eqref{eq:1biananom} transforms in the weak case $\mu\neq 0$. We seek to pick out terms in the computation of  \eqref{gtff} that implicitly use the 2-Jacobi identities. All such terms occur in the quantity
\begin{equation}
    d_{A+t(L)} (d_AL + \frac{1}{2}[L\wedge L]),\nonumber
\end{equation}
which can be organized into three parts:
\begin{equation}
    o(L):~ d_Ad_AL,\qquad o(L^2):~ d_{tL}d_AL + \frac{1}{2}d_A[L\wedge L],\qquad o(L^3):~ \frac{1}{2}tL\wedge^\rhd [L\wedge L].\nonumber
\end{equation}
{Consider} first the term linear in $L$, which gives
\begin{equation}
    d_Ad_AL = (dA)\wedge^\rhd L + A\wedge^\rhd(A\wedge^\rhd L)= F\wedge^\rhd L + \frac{1}{2}\mu(A,A,tL)\nonumber
\end{equation}
by using  \eqref{homotopy}. The additional $\mu$-term here is compensated precisely by the linear $o(L)$-terms in the 2-gauge transformation of $\mu(A,A,A)$:
\begin{equation}
    \frac{1}{3!}\mu(A,A,A)\rightarrow \frac{1}{3!}\mu(A,A,A) + \frac{1}{2}\mu(A,A,tL)+o(L^2).\nonumber
\end{equation}

Next we look at the terms quadratic in $L$. This gives
\begin{equation}
    d_{tL}d_AL+\frac{1}{2}d_A[L\wedge L] = \frac{1}{2}A\wedge^\rhd[L\wedge L] + [L\wedge (A\wedge^\rhd L)] = \mu(A,tL,tL)\nonumber
\end{equation}
via  \eqref{homotopy}, which is compensated precisely by the $o(L^2)$-terms in the transformation
\begin{equation}
    \frac{1}{3!}\mu(A,A,A)\rightarrow \frac{1}{3!}\mu(A,A,A)+\frac{1}{2}\mu(A,A,tL) + \frac{3!}{3!}\mu(A,tL,tL) + o(L^3).\nonumber
\end{equation}
{Finally,} the cubic term is
\begin{equation}
    tL\wedge^\rhd[L\wedge L] = tL\wedge^\rhd [L\wedge L] = [L\wedge [L\wedge L]] = \frac{1}{3!}\mu(tL,tL,tL),\nonumber
\end{equation}
which is compensated by the $o(L^3)$-term in the transformation
\begin{equation}
     \frac{1}{3!}\mu(A,A,A)\rightarrow \frac{1}{3!}\mu(A,A,A)+\frac{1}{2}\mu(A,A,tL) + \mu(A,tL,tL) + \frac{1}{3!}\mu(tL,tL,tL).\nonumber
\end{equation}
{As} such, we see that the modified 2-curvature  \eqref{eq:1biananom} follows also the 2-gauge transform law~\eqref{gtff}.

\subsubsection{Compatibility Between the 1- and 2-Gauge Transformations}
The shift has to be compatible with the 1-gauge transformation, so that the new curvature transforms covariantly,
\begin{align}
&A \rightarrow A'= A+ a  \rightarrow A' + d_{A'}\lambda \Rightarrow  a = t(L)\rightarrow a + [a,\lambda] = t(L) + [t(L),\lambda] \\
& L \rightarrow L - \lambda \rhd  L
\end{align}
where we used the Peiffer conditions, as always. It is interesting to note that 1-gauge $(\lambda,0)$ and 2-gauge $(0,L)$ transformations do {\it {not}} commute. Through straightforward computations in the strict case $\mu=0$ \cite{Kapustin:2013uxa,Martins:2010ry,Radenkovic:2019qme}, we see that
\begin{equation}
    [(\lambda,0),(0,L)] = (0,\lambda\rhd L),\label{eq:2gaucomm}
\end{equation}
so 2-gauge transformations in general form a semidirect product \cite{Bullivant:2016clk,Martins:2010ry}
\begin{equation}
    \operatorname{Gau}_2 = (\Omega^1(X)\otimes\mathfrak{h})\rtimes(\Omega^0(X)\otimes\mathfrak{g})\nonumber
\end{equation}
defined by  (\ref{eq:2gaucomm}). 

It is possible to perform the same kinematical analysis for the weak case, where $\mu\neq 0$. However, here the commutator between 2-gauge transformations read \cite{Kim:2019owc}
\begin{equation}
    [(\lambda,L),(\lambda',L')] = ([\lambda,\lambda'],\lambda\rhd L' - \lambda'\rhd L) + (0,\mu(A,\lambda,\lambda')) + \mu(\mathcal{F},\lambda,\lambda'). \label{weakgaugeproblems}
\end{equation}
{This} is a major issue, because the additional term $\mu(\mathcal{F},\lambda,\lambda')$ is {\it {not}} a gauge transformation---the 2-gauge algebra $\operatorname{Gau}_2$ fails to close unless the fake curvature condition $\mathcal{F}=0$ is always satisfied! This is one of the motivations for the theory of adjusted parallel transport in \cite{Kim:2019owc}. Of course, when $\mu=0$, we have a set of compatible gauge transformations, even if possibly  $\cF\neq 0$. 


Generally, we also have a ``higher gauge transformation'' on the 2-gauge parameter $L\rightarrow L+d_A\ell$, where $\ell\in \Omega^0(X)\otimes\mathfrak{h}$. If we take the two 2-gauge parameters $L,L'=L+d_A\ell$, and define 
\begin{eqnarray}
    \Sigma' =\Sigma + d_AL + \frac{1}{2}[L\wedge L],&\qquad& \Sigma'' =\Sigma + d_AL' + \frac{1}{2}[L'\wedge L']\nonumber\\
    A' = A + tL,&\qquad& A'' = A + tL' = A+t(L+d_A\ell),\nonumber
\end{eqnarray}
then we have
\begin{eqnarray}
    \Sigma'' - \Sigma' &=& F\wedge^{\rhd}\ell+ [L,d_A\ell]+\frac{1}{2}[d_A\ell,d_A\ell],\nonumber\\
    F''-F' &=& [F,t(\ell)]+[tL,t(d_A\ell)]+ \frac{1}{2}[t(d_A\ell),t(d_A\ell)].\nonumber
\end{eqnarray}
{By} the Peiffer conditions, we see that the two 2-gauge transformations $L,L'=L+d_A\ell$ act identically on the fake-curvature $\mathcal{F} = F - t\Sigma$ \cite{Kapustin:2013uxa,Bochniak:2020vil}. The computation  \eqref{gtff} then implies that the 2-curvature $K$ is invariant on-shell of fake-flatness $\mathcal{F}=0$ under both $L,L'$. Because of this, the study of such higher gauge transformation is not necessary in the context of higher-BF theories \cite{Martins:2010ry}.

\subsection{2-Curvature Anomaly and the First Descendant}\label{twisting} 
Recall from \eqref{gtff} that the 2-curvature $K$ is covariant under a 2-gauge transformation. To introduce a 2-curvature anomaly $\kappa$ into the theory, we require the anomaly equation of motion (EOM) $K=\kappa$ to transform covariantly, identically to how $K$ transforms. On-shell of fake-flatness $\mathcal{F}=0$, then, $\kappa=\kappa(A,\Sigma)$ must be 2-gauge invariant. Now since under a 2-gauge transformation, $\Sigma$ shifts by an arbitrary element in $\h$ and hence $\kappa$ must be a constant as a function of $\h$.

On the other hand, {\it {shift invariance}} $\kappa(A) = \kappa(A+tL)$ implies that it can still have $A$-dependence through $\operatorname{coker}t=\g/\operatorname{im}t$.

Here, we will study this particularly nice form of the 2-curvature anomaly $\kappa(A)$. We shall see that the covariance of the 2-curvature anomaly EOM $K=\kappa(A)$ will require a \textit{{twist}} in the gauge transformations. 

\subsubsection{Twisting Gauge Transformations} Given the 1-form connection $A$ transforms in the usual manner, we shall demonstrate here that the 1-gauge transformation of the 2-form connection $\Sigma$ must be twisted by an additional term
\begin{equation}
    \Sigma\rightarrow \Sigma^\lambda =\Sigma -\lambda\rhd\Sigma + \zeta_A(\lambda).\label{twisted1gau}
\end{equation}
{This} additional contribution is required such that the 2-curvature anomaly equation $K=\kappa$ transforms appropriately. 

\begin{proposition}~
    \begin{enumerate}
        \item The quantity $\bar K = K-\kappa(A)$ transforms covariantly under 2-gauge transformations 
        \begin{equation*}
            \bar K \rightarrow \bar K^L = \bar K + \mathcal{F}\wedge^\rhd L
        \end{equation*}
        \underline{{iff} 
} the 2-form $\zeta$ is $\operatorname{ker}t$-valued and only a function of $\operatorname{coker}t$.
        \item the quantity $\bar K$ transforms covariantly under a 1-gauge transformation
        \begin{equation*}
            \bar K \rightarrow \bar K^\lambda = \bar K-\lambda\rhd \bar K
        \end{equation*}
        \underline{{iff}} $\zeta_A$ satisfies the following descent equation
    \begin{equation}
        d_{A^\lambda}\zeta_A(\lambda) = \kappa(A^\lambda) - (\kappa(A) -\lambda\rhd\kappa(A)).\label{eq:desceq}
    \end{equation}
    \end{enumerate}
    {We} call solutions $\zeta_A$ to \eqref{eq:desceq} the {{first descendants}} of the 2-curvature anomaly $\kappa(A)$ (cf. \cite{Kapustin:2013uxa}).
\end{proposition}
\begin{proof}
We prove the second statement first. Indeed, we first have the following computation 
\begin{eqnarray}
  K^\lambda &=& d_{A^\lambda}\Sigma^\lambda = d_{A^\lambda}(\Sigma-\lambda\rhd \Sigma) +d_{A^\lambda}\zeta_A(\lambda) \nonumber\\
  &=& d_A\Sigma - \lambda \rhd (d_A\Sigma) + d_{A^\lambda}\zeta_A(\lambda)\label{eq:2curvtran}
\end{eqnarray}
using \eqref{1gau1}. On the other hand, the 2-curvature anomaly transforms as $\kappa(A)\rightarrow \kappa(A^\lambda)$, hence from \eqref{eq:2curvtran} we have
\begin{eqnarray*}
    K - \kappa(A) &\rightarrow & K^\lambda -\kappa(A^\lambda) \\
    &=& d_A\Sigma - \lambda \rhd (d_A\Sigma) + d_{A^\lambda}\zeta_A(\lambda) - \kappa(A^\lambda) \\
    &=& (K-\kappa(A)) - \lambda\rhd(K-\kappa(A)) \\
    &\qquad& +~ d_{A^\lambda}\zeta_A(\lambda) - \kappa(A^\lambda) + \kappa(A)-\lambda\rhd\kappa(A).
\end{eqnarray*}
{The} last line is precisely the descent Equation \eqref{eq:desceq}. Note moreover that $\zeta_A(\lambda)$ is valued in $\operatorname{ker}t$ iff it does not conflict with the covariance of the fake-curvature,
\begin{equation}
    t(\Sigma) \rightarrow  t(\Sigma^\lambda) = t(\Sigma) - t(\lambda\rhd \Sigma) + \underbrace{t(\zeta_A(\lambda))}_{=0}. \nonumber
\end{equation}

Now we consider a 2-gauge shift symmetry. Note the covariance of the transformation $\bar K\rightarrow \bar K^L$ \eqref{gtff} implies that $\bar K^L -\bar K$ is in fact independent of $\kappa$, and hence both $\kappa$ and $\zeta$ cannot transform under $L$. By hypothesis, $\kappa(A)$ is shift invariant, hence we acquire the following terms from applying a 2-gauge transformation to the descent Equation \eqref{eq:desceq}:
\begin{eqnarray}
    tL\wedge^\rhd \zeta_A(\lambda) &=& -t\zeta_A(\lambda)\wedge^\rhd L=0,\nonumber\\
    {[tL\wedge \lambda]\wedge^\rhd\zeta_A(\lambda)} &=& -t(\lambda\rhd L)\wedge^\rhd\zeta_A(\lambda) = (t\zeta_A(\lambda))\wedge^\rhd(\lambda\rhd L)=0,\nonumber\\
    \zeta_{A^L}(\lambda) &=&\zeta_A(\lambda) + \zeta_{tL}(\lambda).
\end{eqnarray}
where we have used the Peiffer identity and the fact that $\zeta_A(\lambda)$ is $\operatorname{ker}t$-valued. Note the last term remains $\zeta_A$ iff $\zeta$ depends on $A$ only through $\operatorname{coker}t$, which would imply that \eqref{eq:desceq} is invariant under 2-gauge transformations. This ensures that first descendants do not transform under $L$, as desired.
\end{proof}

{Of course, } 
if $\kappa=0$, then the first descendant $\zeta(A,\lambda)$ can be chosen to vanish, in which case we reproduce the covariance of the 2-curvature $K$ \eqref{1gau1}. Conversely, $\zeta_A(\lambda)$ \textit{{necessarily}} occurs in the presence of a non-trivial $\kappa(A)$. 

The {descent Equation}  (\ref{eq:desceq}) guarantees the 1-gauge covariance of the equation of motion $K=\kappa$, and provides a differential equation which allows to express $\kappa$ in terms of $\zeta$. As such, one may conversely view $\zeta$ as a particular twist in the 1-gauge transformation of $\Sigma$, which ``inserts'' the 2-curvature anomaly $\kappa$. 

\subsubsection{The Postnikov Class}
The anomaly $\kappa(A)$ has a cohomological interpretation. Indeed, $\kappa=\kappa(A)\in \Omega^3(X)\otimes\operatorname{ker}t$ can be interpreted as a  {\it {Lie algebra 3-cocycle}} $\kappa\in Z^3(\operatorname{coker}t,\operatorname{ker}t)$.

\begin{definition}
We call the cohomology class $[\kappa]\in H^3(\operatorname{coker}t,\operatorname{ker}t)$ the { {Postnikov class}} of the crossed-module $\mathfrak{G}$. A Lie algebra crossed-module/strict Lie 2-algebra $\mathfrak{G}$ is called { {non-trivial}} if $[\kappa]\neq 0$.
\end{definition}

{$[\kappa]$ classifies} 
 the crossed-module $\mathfrak{G}$ up to elementary equivalence \cite{Wag:2006,Baez:2003fs}. Weak Lie 2-algebras are classified by the same data  \cite{Roytenberg:2007}s. We give further details about the Postnikov class in Appendix \ref{postnikov}.

Notice that the function $\kappa$ is only required to be a Lie algebra 3-cocycle, and hence is not necessarily covariantly closed. This means that, in the presence of $\kappa(A)$, the 2-Bianchi identity \eqref{2-bianch} can in fact be violated, due to the 2-curvature anomaly EOM $K=\kappa(A)$ giving $d_AK = d_A\kappa(A) \neq 0$.

The astute reader may have noticed a close parallel between the Postnikov anomaly $\kappa(A)$ and the Bianchi anomaly $\mu(A,A,A)$. They both define an anomaly of the 2-flatness condition, and the resulting 2-curvature quantity $K$ have identical gauge transformation properties. 

For $t\neq 0$, the two structures are actually different. Indeed,  the 1-Bianchi anomaly $\mu(A,A,A)$ is {\it {not}} invariant under the 1-form shift symmetry $A\rightarrow A+tL$, while $\kappa$ by hypothesis is. This speaks to the fact that, unlike their strict counterparts, weak Lie 2-algebras and non-trivial Lie algebra crossed-modules are \textit{{not}} equivalent when $t\neq 0$. Indeed, the component $\g$ in a weak Lie 2-algebra is not a Lie algebra, as the 2-Jacobi identities \eqref{homotopy} do not hold. The quantity $\frac{1}{2}\mu(\lambda,A,A)$ that appeared in  \eqref{modified1gau}, which seems to serve as the first descendant of $\mu(A,A,A)$, does {\it {not}} satisfy the descent Equation \eqref{eq:desceq}.

On the other hand, it is known that a non-trivial Lie algebra crossed-module $\g=\g_{-1}\xrightarrow{t}\g_0$ is classified, up to elementary equivalence, by precisely the data of a skeletal weak Lie 2-algebra $\s=(\mathfrak{n}\oplus V,\kappa)$ \cite{Baez:2003fs,Baez:2005sn}, where $\mathfrak{n}=\operatorname{coker}t$ and $\operatorname{ker}t=V$. Here, the Postnikov class $\kappa$ plays the role of the homotopy map for the 2-term graded Lie algebra $V\xrightarrow{0}\mathfrak{n}$. Indeed, as $\s$ is skeletal, there is no violation of the 2-Jacobi identities. Therefore, one may see a weak {\it {skeletal}} Lie 2-algebra as a non-trivial Lie algebra crossed-module.

\begin{remark}\label{postweak}
It was proven \cite{Baez:2005sn} that the skeletal, weak, string Lie 2-algebra $\mathfrak{string}_k(\mathfrak{g})$ has an alternative description in terms of a non-trivial crossed-module, called the loop model $\mathfrak{l}_k$. We shall show in Section \ref{loopgauge} that its Postnikov class/infinitesimal Dixmier-Douady class $[\kappa]\in  H^3(\g,\R)$ is represented in the 2-gauge theory by a 2-curvature anomaly.
\end{remark}

The non-trivial crossed-module formulation has the distinct advantage that the 2-gauge theory it defines is free of the problems plaguing that of a weak Lie 2-algebra, such as the lack of closure of gauge transformations in  \eqref{weakgaugeproblems}. This is precisely because of the descent equation satisfied by the first descendant $\zeta(A,\lambda)$ of $\kappa(A)$, which ensures that the 2-gauge structure closes and is consistent \cite{Kim:2019owc}, even in the presence of a non-trivial Postnikov class~\cite{Kapustin:2013uxa}. We shall see this point in action in Section \ref{strpt}.

\section{Applications}\label{sec:2app}
In this section, we discuss concrete examples of 2-gauge structures that arise naturally from physical applications.

\subsection{2-BF Theory}\label{sec:actions}
The simplest action to consider is an action for which  the fake-flatness and the flat 2-curvature  are obtained as equations of motion (EOMs), so such an action is topological. By analogy to the BF case, we would call this action a \textit{{2-BF action}} \cite{Martins:2010ry,Radenkovic:2019qme}. 

We mention briefly here that the partition function of the 2-BF model can be  discretized. This gives rise to the \textit{{Yetter model}} \cite{Yetter:1992rz, Bullivant:2016clk,Bochniak_2021}, in which the 2-connections $(A,\Sigma)$ are assigned to a triangulation  as ``$\mathfrak{G}$-valued colorings'' \cite{Kapustin:2013uxa,Cui_2017}. It has been used to describe topological phases protected by higher-form global symmetries \cite{Kapustin:2013uxa,Zhu:2019,Thorngren2015}.

\subsubsection{Action and EOMs} 
Let $X$ be a manifold of dimension $d$ and let us fix a Lie algebra crossed-module $\mathfrak{G}=\h\xrightarrow{t}\g$. Let $\g^*$ denote the dual space of linear functionals on $\g$, and similarly let $\h^*$ denote the dual space of $\h$. We denote by $\langle\cdot,\cdot\rangle$ the duality pairing for them.

We begin by introducing Lagrange multipliers  $B\in \Omega^{d-2}\otimes\mathfrak{g}^*,\, C\in \Omega^{d-3}\otimes\mathfrak{h}^*$ which implement the aforementioned flatness conditions. The 2-BF action, also called the \textit{BFCG} action \cite{Martins:2010ry,Radenkovic:2019qme}, in the absence of 2-curvature {anomalies is} 
\begin{equation}
    S_\text{2BF}[A,\Sigma] = \int_X \langle B \wedge\mathcal{F}(A,\Sigma)\rangle + \langle C\wedge \mathcal{G}(A,\Sigma)\rangle,\label{eq:2bf}
\end{equation}
where $\mathcal{F}(A,\Sigma) = F - t(\Sigma)$ and $\mathcal{G}(A,\Sigma)=K = d_A\Sigma$. For $d<3$, the 2-$BF$ theory reduces to a $BF$ theory, since the dual field $C$ does not exist. 

The first half of the EOMs are
\begin{equation}
    \delta B \Rightarrow \mathcal{F}=F - t(\Sigma)=0,\qquad \delta C\Rightarrow \mathcal{G}=d_A\Sigma=0,\nonumber
\end{equation}
which implement precisely the fake curvature and 2-flatness conditions, respectively. On the other hand, we also have the option to vary $A$ and $\Sigma$.  These must be done more carefully: we first introduce a map $\Delta: \mathfrak{h}\wedge\h^*\rightarrow \mathfrak{g}^*$ dual to the crossed-module action:
\begin{equation}
    \langle C\wedge (A\wedge^\rhd \Sigma)\rangle = -\langle \Delta(C\wedge \Sigma)\wedge A\rangle.\nonumber
\end{equation}
{Second,} we define the map $t^*:\mathfrak{g}^*\rightarrow\mathfrak{h}^*$ dual (with respect to the pairings $\langle\cdot,\cdot\rangle$) to the crossed-module map $t:\mathfrak{h}\rightarrow\mathfrak{g}$, and write
\begin{equation}
    \langle B\wedge t(\Sigma)\rangle = \langle t^*(B)\wedge \Sigma\rangle.\nonumber
\end{equation}
{We} also introduce the dual of the action and adjoint representation, 
\begin{equation}
    \langle y,x\rhd y'\rangle= -\langle x\rhd^* y,y'\rangle,\qquad\langle x',[x,x'']\rangle = -\langle [x,x']^*,x''\rangle,\nonumber 
\end{equation}
for all $y\in\h,y'\in\mathfrak{h}^*,x\in\mathfrak{g},x'\in \g^*$.

These yield
\begin{equation}
    \delta A \Rightarrow dB+[A\wedge B]^* - \Delta(C\wedge\Sigma)=0,\qquad \delta \Sigma \Rightarrow t^*B + dC + A\wedge^{\rhd^*} C=0.\nonumber 
\end{equation}
{If} we define the quantities
\begin{equation}
    \tilde F \equiv d_AC = dC+A\wedge^{\rhd^*} C,\qquad \tilde K\equiv d_AB = dB+[A\wedge B]^*,\nonumber
\end{equation}
we see that these sets of EOMs read
\begin{equation}
    \tilde F = t^*(B),\qquad \tilde K = \Delta(C\wedge \Sigma), \label{eq:dual}
\end{equation}
the first of which looks like a fake-flatness condition for the dual fields. This suggests that $B,C$ should be treated as a 2-connection as well, valued in a Lie algebra crossed-module $t^*:\mathfrak{g}^*\rightarrow\mathfrak{h}^*$.

Indeed, this is precisely the {\it {dual Lie algebra crossed-module}} $$\mathfrak{G}^*[1]=(t^*:\mathfrak{g}^*\rightarrow\mathfrak{h}^*,\tilde\rhd),$$ whose graded Lie algebra structure is induced by the choice of a Lie algebra 2-cochain {(More concretely, we have
$
    \langle [f,f'],y\rangle = \langle f\wedge f',\delta_{-1}(y)\rangle,\qquad \langle f\tilde \rhd g,x\rangle = \langle f\wedge g,\delta_{-1}(x)\rangle,\nonumber
$
where $f,f'\in\mathfrak{h}^*,g\in\mathfrak{g}^*$.)}
\begin{equation}
    \delta_{-1}: \mathfrak{h}\rightarrow \mathfrak{h}^{\wedge 2},\qquad \delta_0:\mathfrak{g}\rightarrow \mathfrak{h}\wedge\mathfrak{g}.\nonumber
\end{equation}
{When} dealing with the action   \eqref{eq:2bf}, the dual Lie 2-algebra is Abelian with trivial 2-cochain $(\delta_{-1},\delta_0)=0$, and hence is the same thing as a 2-vector space \cite{Baez:2004in}. For more general details on these algebraic objects, we refer the reader to the mathematical literature \cite{Bai_2013,Chen:2012gz,chen:2022}.

\begin{remark}
Dualizing the crossed-module map $t:\mathfrak{h}\rightarrow\mathfrak{g}$ leads to $t^*:\mathfrak{g}^*\rightarrow\mathfrak{h}^*$, hence the dual Lie 2-algebra $\mathfrak{G}^*[1]$ comes with a shift $[1]$ in the grading of the underlying vector spaces. This is a small subtlety in the mathematical notation that we shall keep in order to be consistent with the literature. 
\end{remark}

\subsubsection{Symmetries of the Action}

It was shown in  \cite{Martins:2010ry} (see also  \cite{Radenkovic:2019qme}) that the 2-$BF$ action  (\ref{eq:2bf}) is preserved under the operations
\begin{eqnarray}
 \lambda : \begin{cases}\cF\rightarrow \cF^\lambda = \cF + [\cF,\lambda] \\ \cG \rightarrow \cG^\lambda = \cG + \lambda\rhd \cG \end{cases}, &\qquad& 
 L: \begin{cases} \cF\rightarrow \cF^L= \cF  \\ 
  \cG\rightarrow \cG^L= \cG + \cF\wedge^\rhd L
 \end{cases}, \label{2BFgaugetrans}\\
    \lambda: \begin{cases}B\rightarrow B^\lambda = B + [\lambda,B]^* \\ C\rightarrow C^\lambda = C + \lambda\rhd^* C \end{cases},&\qquad& L: \begin{cases} B\rightarrow B^L= B + \Delta(C\wedge L) \\ C\rightarrow C^L= C\end{cases}, \label{2BFdualgaugetrans}
\end{eqnarray}
where we recognize the transformations of $\cF$ and $\cG$ we obtained in Section \ref{1-2gauge}. Notice $\cG^L$ is invariant only {\it {on-shell}} of the fake curvature condition $\cF =0$, which we had assumed in~\eqref{gtff}.

Algebraically, this implies that the 2-gauge group $\operatorname{Gau}_2= (\Omega^1(X)\otimes\mathfrak{h})\rtimes (\Omega^0(X)\otimes\mathfrak{g})$ acts naturally on the dual fields $B,C$. In other words, the original Lie 2-algebra $\mathfrak{G}$ has a natural action on the dual Lie 2-algebra $\mathfrak{G}^*[1]$ induced by the data $\rhd^*,\Delta$ emergent from the dual EOMs  \eqref{eq:dual}. These actions define a \textit{{strict coadjoint representation}} \cite{Bai_2013} of the Lie 2-algebra $\mathfrak{G}$ on its dual $\mathfrak{G}^*[1]$.

\begin{remark}\label{2ddsymm}
In general, the dual Lie 2-algebra $\mathfrak{G}^*[1]$ can be non-Abelian and define its own gauge sector. The corresponding gauge parameters $(\tilde\lambda,\tilde L)\in\mathfrak{G}^*[1]$ transforms the dual fields $(C,B)$ as
\begin{equation}
    \tilde\lambda: \begin{cases}C\rightarrow C^{\tilde\lambda} = C + d_C\tilde\lambda \\ B\rightarrow B^{\tilde\lambda} = B + \tilde\lambda \rhd^* B\end{cases},\qquad \tilde L:\begin{cases}C\rightarrow C^{\tilde L} = C + \tilde t \tilde L \\ B\rightarrow B^{\tilde L} = B + d_C\tilde L + \frac{1}{2}[\tilde L\wedge \tilde L]_*\end{cases}.\nonumber
\end{equation}
{If} there is a non-trivial back-action of $\mathfrak{G}^*[1]$ on $\mathfrak{G}$, then $(A,\Sigma)$ would transform under $(\tilde\lambda,\tilde L)$ as well, analogous to how $(C,B)$ transforms under $(\lambda,L)$ in  \eqref{2BFdualgaugetrans}. If certain coherence conditions are satisfied between these actions, then the pair $(\mathfrak{G},\mathfrak{G}^*[1])$  defines a {\it 2-Manin triple}
\begin{equation}
    \mathfrak{D} = \mathfrak{G}~_{\operatorname{ad}^*}\bowtie_{\mathfrak{ad}^*}\mathfrak{G}^*[1],\nonumber
\end{equation}
which serves as a model for a ``2-Drinfel'd double''  \cite{Bai_2013,chen:2022}---a categorified notion of the classical Drinfel'd double $\mathfrak{d} = \mathfrak{g}\bowtie \mathfrak{g}^*$ for a Lie algebra $\mathfrak{g}$ \cite{Semenov1992}. For a more detailed study and analysis, see~\cite{chen:2022}.
\end{remark}

\subsection{3D Gravity}\label{3Dgrav}
One of the most direct applications of 2-BF theory is gravity. In 3-dimensions, gravity is topological, as there are no propagating local degrees of freedom.  In the Einstein--Cartan formalism,  the Einstein equations take the shape \cite{Yepez:2011,Zanelli:2005} ($4\pi G=1$)
\beq
F_{I}\, =  \, T_{I},
\eeq
where $F$ is the curvature of the spin-connection $A$, and $I$ is the internal $\mathfrak{su}(2)$-index (we chose the Euclidian signature). $T_{I}$ is the stress-energy tensor {(Typically in the usual framework, $T_{I}=\frac{\delta {S}_\text{matter}}{\delta e^I}$, where ${S}_\text{matter}$ is the action for the matter degrees of freedom.)}, which could include the cosmological constant contribution. By assumption, $F$ satisfies the Bianchi identity and hence $T$ is conserved. Following our discussion from the previous sections, it seems natural to interpret the stress energy contribution as some form of curvature excitation and as such it would fit within the scheme of 2-gauge theory. In particular we would identify the stress-energy tensor with the 2-connection up to the t-map. 
\beq
F_{I}\, =  \, T_{I} = t_{I}(\Sigma).
\eeq
{Note} however that the stress-energy tensor is actually fixed, there is no shift symmetry. Hence one would expect to recover 3D gravity as some kind of 2-gauge fixed theory. Indeed, starting from the 2-$BF$ action, and fixing the shape of the 2-connection allows to recover 3D gravity coupled with a particle (seen as a topological defect) or with a non-zero cosmological constant. 


First let us fix the crossed-module to be the (infinitesimal) identity crossed-module 
$$\mathfrak{I}_{\mathfrak{su}(2)}=(t=\operatorname{id}:\mathfrak{su}(2)\rightarrow \mathfrak{su}(2),\rhd = \operatorname{ad}),$$
to discuss 3D gravity in the Euclidean signature. We will note $J_I$ the generators of $\mathfrak{su}(2)$.  We now consider the 2-$BF$ action based on $\mathfrak{I}_{\mathfrak{su}(2)}$ with $X$ a 3D manifold.
\begin{equation}
    S_\text{2BF}[A,\Sigma, B,C] = \int_X \langle B \wedge(F-\Sigma)\rangle + \langle C\wedge d_A \Sigma\rangle,\label{3d2bf-lambda1}
\end{equation}
and discuss the different values $\Sigma$ can take. The fake flatness condition, which is one of the EOMs,  is then 
\begin{equation} \label{fake-eins}
F=\Sigma.
\end{equation}


The first obvious value is to 2-gauge fix $\Sigma$ to $0$, which would amounts to recovering the pure (Euclidian) gravity 
\begin{equation}
    S_\text{2BF}[A,\Sigma=0, B,C] =S_\text{BF}[B,A]= \int_X \langle B \wedge F \rangle,
\end{equation}
with the 1-form $B$ interpreted as the frame field. Plugging back $\Sigma=0$ in \eqref{fake-eins} allows to recover the 3D vacuum Einstein equation. 


The next interesting value is to pick a 1-gauge where we  2-gauge fix $\Sigma$ to be $m J_3 \delta_W^{(2)}(x)$, where $\delta^{(2)}(x)$ is the densitized Dirac delta function. This value localizes $\Sigma$ on a worldline $W$, and $m$ is the mass of the defect. In an arbitrary 1-gauge, parameterized by $g$, we have then   $\Sigma^I= p^I \delta_W^{(2)}(x)=\Sigma^I_p$, with $p^I= m g^{-1} J_3 g$ interpreted as the momentum of the defect.  

With such value, the 2-BF action \eqref{3d2bf-lambda1} becomes 
\begin{equation}
    S_\text{2BF}[A,\Sigma^I_p, B,C] = \int_X \langle B \wedge F\rangle - \int_W \langle B, p\rangle  + \int_W \langle C\wedge d_A p\rangle,\label{3d2bf-lambda2}
\end{equation}
where we recognize the standard action of gravity coupled with a particle \cite{deSousa1990}, supplemented by a term encoding the conservation of momentum. Plugging back $\Sigma^I=p^I \delta_W^{(2)}(x)$ in \eqref{fake-eins} allows to recover the 3D Einstein equation in the presence of a point-like particle. 

Such construction could be extended to the discrete case where one could analyze how the Yetter amplitude can provide the Ponzano--Regge amplitude coupled to a particle. This will be explored elsewhere. 


Finally, in general as a 2-form on a 3D manifold with value in $\mathfrak{su}(2)$, there exists a covector $\tilde e$ and an arbitrary constant rescaling $\lambda$ such that  
\beq \label{cosmo}
\Sigma=\frac\lambda2 [\tilde e\wedge \tilde e]=\Sigma_\lambda.   
\eeq
{Plugging} back this value of $\Sigma$ into \eqref{fake-eins} would  resemble Einstein equation in the presence of a cosmological constant $\lambda$, if $\tilde e$ was identified as the frame field $B$. Let us therefore write  without loss of generality $\Sigma=\frac\lambda2 [\tilde e\wedge \tilde e]$ and  impose that $\tilde e =B$. 
At the level of the 2-BF action, we have therefore 
\begin{equation}
    S_\text{2BF}[A,\Sigma_\lambda, B =\tilde e, C] = \int_X \langle B \wedge(F-\frac\lambda2 [\tilde e\wedge \tilde e])\rangle - \lambda\langle [C\wedge \tilde e]\wedge  d_A\tilde e \rangle + \langle \phi, \tilde e - B\rangle ,\label{3d2bf-lambda3}
\end{equation}
where $\phi$ is a Lagrange multiplier. 

The Palatini formulation of 3D gravity involves  $e_\mu^I, \, A_\mu^I$, i.e., 18 variables. In the 2-BF action, we have additional 12 variables from $B_\mu^I$ and $C^I$. To recover the Palatini formulation, we go on-shell of 2-flatness $d_A\Sigma=0$ and impose the constraint
\beq \label{const-lambda}
B=\tilde e,
\eeq
which allows to reduce on-shell to the usual number of variables and 3D gravitational~action
\begin{eqnarray}
    S_\text{2BF}[A,\Sigma_\lambda  , B,C] &\approx& S_\text{gr}[A,\tilde e]= \int_X \langle \tilde e \wedge(F-\frac{\lambda}{3}[\tilde e\wedge \tilde e]\rangle. \label{cosmograv}
\end{eqnarray}
{The} dual EOMs  \eqref{eq:dual} state that
\begin{equation}
    d_AC = B,\qquad d_AB=d_A\tilde e = [\Sigma,C],\nonumber
\end{equation} 
namely the coframe $B$ is covariantly exact (which can always be achieved locally \cite{Li_1999}) and the torsion $T = d_A B$ is given by $[C, \Sigma]=\frac\lambda2[C,[\tilde e \wedge \tilde e]]$. Of course, torsion-freeness $T=0$ requires that this quantity must vanish.
In order to see this, we  use the Jacobi identity such~that
\begin{equation}
    \frac{1}{2}[C,[\tilde e\wedge \tilde e]] = [\tilde e\wedge [\tilde e,C]].\nonumber
\end{equation}
{Now} as $C$ is a 0-form, we have $[C,C]=0$, hence on-shell of the duel EOM $d_AC=B$ we~have
\begin{equation}
    0=d_A[C,C] = 2[d_AC,C]=2[B,C]=2[\tilde e,C],\nonumber
\end{equation}
and we indeed have torsion-freeness $d_A B= 0$. 

\begin{remark}\label{rmk:simpcon}
In the 2-gauge formalism  \eqref{3d2bf-lambda3}, we can say that we have a pair of (co-)frame fields, $B$ and $e$. The constraint to recover canonical 3D gravity  \eqref{cosmograv} is to identify them $B=e$ via  \eqref{const-lambda}. This is an analogue of the {\it simplicity constraint} in 4D gravity: there, one starts with 4D $BF$ action based essentially on a pair of frame fields, then impose the simplicity constraint that identifies them~\cite{Freidel:2020ayo, Freidel:2020svx, Freidel:2020xyx}.  It would then be interesting to see how this construction at the quantum level can allow us to impose the simplicity constraints more consistently.
\end{remark}

\subsection{Anomaly Resolution: Monopole Electrodynamics} \label{QED-anom} 
For this example, we combine ideas from \cite{Cordova:2018cvg,Benini_2019,Dubinkin:2020kxo} and introduce the magnetic monopole in the context of 2-gauge theory. Let $X$ denote a closed oriented smooth 4D manifold, which is {\it {not}} spin. 

Consider on $X$ a principal $U(1)$-gauge bundle, whose curvature 2-form $F$ has non-trivial flux across a 2-surface $S\subset X$. This can be interpreted as a violation of the Bianchi identity $dF\neq 0$: on the 3-surface $V$ spanned by $S\subset X$, we have
\begin{equation}
    \int_S F = \int_{V} dF \neq 0 \nonumber
\end{equation}
by Stokes's theorem. Physically speaking, $S$ encloses a \textit{{magnetic monopole}}, whose current $j_m$ is given by the EOM $dF=\ast j_m\neq 0$. In the usual manner, we define
\begin{equation}
    q_m = \frac{1}{2\pi}\int_V \ast j_m = \frac{1}{2\pi}\int_S F\label{monodef}
\end{equation}
as the {\it {magnetic charge}} enclosed by the bounding 3-surface $V\subset X$.

\begin{remark}\label{rmk:qed}
The historical motivation for studying such an ``anomalous'' Maxwell's theory is that $\text{QED}_{3+1}$ suffers from a {\it perturbative chiral anomaly} in the presence of a {\it single} spin-$\frac{1}{2}$ Weyl fermion. The counterterm for this anomaly is the {\it Abelian Chern--Simons action} \cite{adler2004anomalies}
\begin{equation}
    \mathcal{A}[A] = \frac{1}{4\pi}\int_X F\wedge F;\nonumber
\end{equation}
indeed, by an integration by parts, we can introduce this Chern--Simons term $\mathcal{A}[A]=-\frac{1}{4\pi}\int_X \ast j_m \wedge A$ with a monopole current $j_m$, if we go on-shell of the EOM $\ast j_m = dF$. 
\end{remark}

\subsubsection{Monopoles and Large Gauge Transformations} 
Recall that a non-trivial monopole current $j_m\neq 0$ implies that $d^2A=dF =\ast j_m \neq 0$. This type of failure of the Bianchi identity implies that the $U(1)$ connection $A$ acquires a non-trivial holonomy about some closed (timelike) 1-cycle $l\subset X$, called the {\it {monopole worldline}}. 

To treat this problem, we ``fatten'' (i.e., take a small tubular neighborhood around) $l$ and excise it away from $X \rightsquigarrow X'$ \cite{book-wen}. This yields new 4-manifold $X'$, which has a boundary $\partial X' \cong S^2\times l$; see Figure \ref{fig:monopole} {(Left).} 
 The $U(1)$-connection $A$ is now regular on $X'$, but its gauge transformation comes with an ``anomalous'' component \cite{Halperin:1978,Kogan:1995}
\begin{equation}
    A\rightarrow A + d\lambda_0 + d\lambda_1,\qquad d^2\lambda_0=0,\quad d^2\lambda_1\neq 0.\nonumber 
\end{equation}
{This} component is required such that, by integrating over the 2-sphere $S^2\subset S^2\times l$ enclosing the monopole, we achieve
\begin{equation}
    {q}_m = \frac{1}{4\pi}\int_{S^2}F \rightarrow \frac{1}{4\pi}\int_{S^2}F+d^2\lambda_0 + d^2\lambda_1 = \frac{1}{2\pi}\int_{S^1}d\lambda_1 \neq 0 \nonumber
\end{equation}
a non-trivial monopole charge, where $S^1 = H_+\cap H_-\hookrightarrow S^2$ is the equator, and we have used Stokes's theorem on each of the patches {(On a 3D Cauchy slice containing $V$, the equation $d\lambda_1 = \ast dA=\ast F$ is known as the (Abelian)} {\it {monopole equation}} \cite{Dai2005OnTS}.) $H_\pm$ covering $S^2$. Historically, this anomalous $U(1)$ gauge theory also plays an important role in the vortex-driven 2D Kosterlitz--Thouless phase transition \cite{Halperin:1978}.

Conversely, if the $U(1)$ gauge symmetry $A\rightarrow A+ d\lambda_0$ is non-anomalous, then there can be no monopole charge ${q}_m=0$ and the magnetic current must vanish
\begin{equation}
    {q}_m = \frac{1}{4\pi}\int_{S^2\times l}dF = \frac{1}{4\pi}\int_{S^2\times l}\ast j_m =  0.\nonumber
\end{equation}
{The} construction of a non-trivial monopole charge $q_m$ from patching the anomalous gauge transformation $d\lambda_1$ (as a {\v C}ech cocycle) across the equator of a sphere is called { {clutching}}~\cite{book-algtop}. As such, such components $L = d\lambda_1$ are also known as { {large gauge transformations}} in the literature \cite{Kogan:1995,adler2004anomalies}, as it is able to detect ``large'' (i.e., topological) data. More generally, this casts the monopole charge $q_m$ as an element in a {\it {differential cohomology}} of $X$ \cite{FreedMonopole}. 

\begin{figure}[h]
\includegraphics[width=0.5\columnwidth]{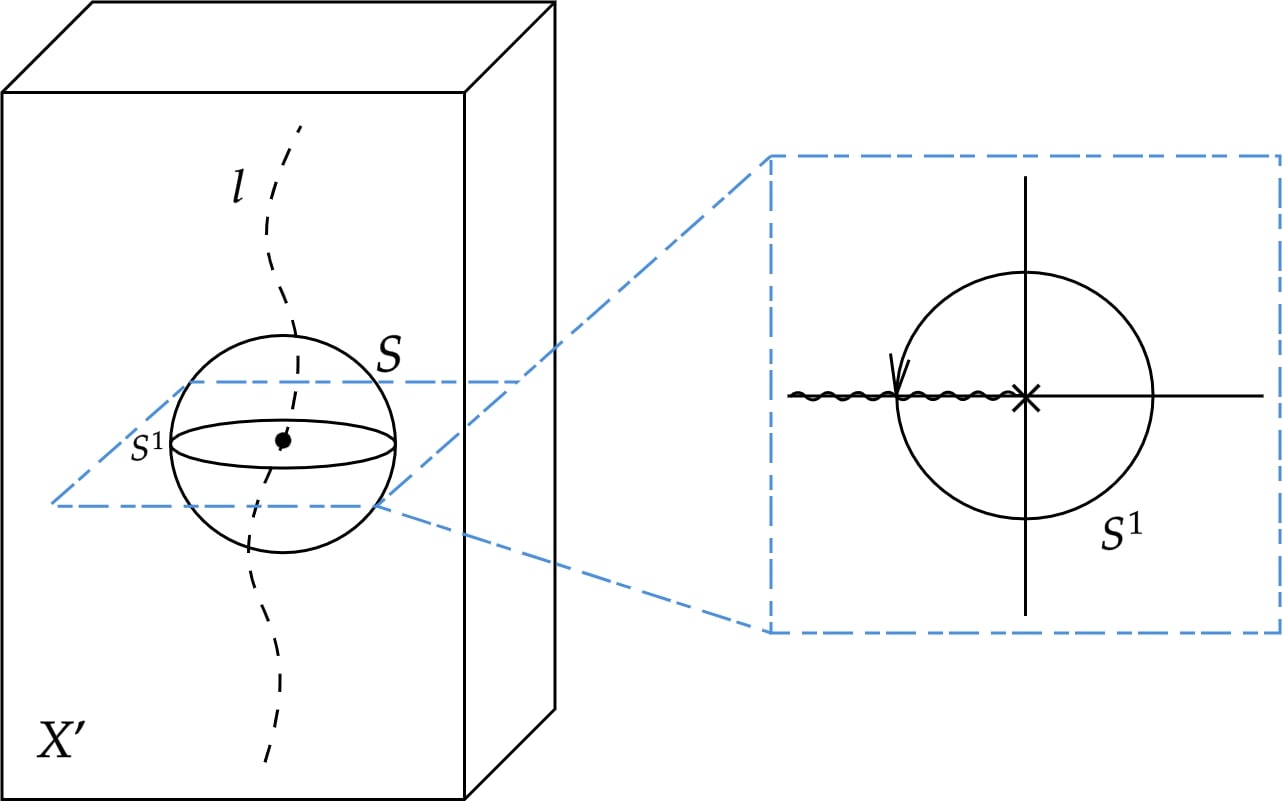}
\caption{The clutching construction for the monopole. {\bf Left}: The excised spacetime 4-manifold $X'$ and its boundary $\partial X' \cong S\times l$. {\bf Right}: The singularity structure of $\lambda_1$ on the equatorial slice $S^1$ across $S\cong S^2$. If the monopole charge $q_m$ is non-trivial, then a branch cut appears.}
\label{fig:monopole}
\end{figure}

Any given monopole current $j_m$ can thus be inserted by designing the singularity of $\lambda_1$; see Figure \ref{fig:monopole} {(Right).} 
 It computes the winding number
\begin{equation}
    \int_{S^1}d\lambda_1 = \lambda_1(1) - \lambda_1(0) \in 2\pi\bbZ,\nonumber
\end{equation}
hence the quantized monopole charge $q_m\in\bbZ$ is fixed by topology {(This accounts for the Dirac monopole quantization ${q}_m\in\bbZ$ in 4D, or the quantization of the vortex charge in 2D}~\cite{Halperin:1978}.). In fact, this winding number is precisely the first Chern number $\int_S c_1(F)$, where the {\it {first Chern class}} $c_1(F) = \frac{1}{2\pi}[F]\in H^2(X,\bbZ)$ topologically classifies the $U(1)$-bundle on $X$ up to isomorphism. More details can be found in  \cite{book-wen}.

\begin{remark}\label{tube}
Given two $U(1)$-bundles $P,P'\rightarrow X$ with monopole charges $q_m,q_m'$, the tensor bundle $P\otimes P'$ has the {\it sum} $q_m+q_m'$ as its monopole charge; this is precisely the additive property of characteristic classes \cite{book-algtop}
\begin{equation}
c_1(P\otimes P') = c_1(P)+c_1(P'),\nonumber
\end{equation}
and describes the process of monopole fusion. By the clutching construction, we can identify monopole defects with branch points of $L=d\lambda_1$ along the equator $S^1\subset S^2$. As multiple monopoles time-evolve and fuse, a graph is traced out on the cylinder $S^1 \times l$. The fusion algebra of such graph states is known as the {\it (2+1)D Ocneanu's tube algebra} \cite{ocneanu2001,Delcamp:2016yix,Bullivant:2021}.
\end{remark}

If we write $d\lambda_1 = L$ as a generic 1-form, then we see that the anomalous $U(1)$ gauge theory acquires a {\it {shift}} $A\rightarrow A+L$, generating a ``$U(1)_1$ 1-form symmetry'' \cite{Cordova:2018cvg,Benini_2019}. We can then follow our previous construction and introduce a 2-form $\Sigma$ as in  \eqref{eq:2gau}.

\subsubsection{Resolving the Monopole Anomaly; 2-Gauge Structure} 
One may treat monopole Maxwell's theory as a {\it {compact $U(1)$ gauge theory}} equipped with a quantized Gauss law $$\frac{1}{2\pi}\int_S F = 0 \mod\bbZ,$$ which forces $A \sim A +2\pi$ to be defined only modulo $2\pi\bbZ$ \cite{Kogan:1995}. Alternatively, however, introducing a 2-gauge structure is in fact the most consistent way to treat the monopole~\cite{Cordova:2018cvg,Benini_2019}.

The idea is to insert a {\it {2-form}} gauge field $\Sigma$ that absorbs the flux of the magnetic monopole. This is accomplished by the crucial {\it {monopole property}}
\begin{equation}
    \int_S \Sigma = \int_S F,\qquad \forall\text{ closed 2-surfaces }S\subset X',\label{eq:mono}
\end{equation}
which states that the quantized monopole charge ${q}_m\in\bbZ$ is matched by the 2-form gauge field $\Sigma$. However, the monopole condition  (\ref{eq:mono}) does {\it not} imply the fake-flatness $\Sigma = F$ on the nose, but only up to closed 2-forms. We shall abuse notation and write such closed 2-forms as $dL$, which can in general have a non-trivial integral over the surface $S$. 

The curvature $F$ transforms as $F\rightarrow F + dL$ under this 2-gauge/1-form shift symmetry, which can in fact change the monopole charge if $L$ has non-trivial periods
\begin{equation}
    \frac{1}{2\pi}\int_{S^2} dL\nonumber
\end{equation}
on the boundary 2-sphere $S^2\subset S^2\times l =\partial X'$. In order to absorb this ambiguity, we must force the 2-form $\Sigma$ to transform as
\begin{equation}
    \Sigma \rightarrow \Sigma + dL.\label{eq:ab2gau}
\end{equation}
{This} in essence makes the monopole charge into a gauge redundancy---different ``2-gauge sectors'' labeled by the periods of $L$ have different values of $q_m$.

On the other hand, recall that the $U(1)$-connection $A$, as well as the gauge parameter $\lambda=\lambda_0$, are all regular on the excised 4-manifold $X'$, and hence admit an extension into the whole of $X$. As such $F$---and hence $\Sigma$---is invariant under a 1-gauge transformation.

Thus the anomalous $U(1)$-gauge theory achieves a mixed 0-form/1-form $U(1)_1\times U(1)$ symmetry, governed by a trivial 2-group \begin{equation}
    \mathfrak{I}_{U(1)} = (t=\operatorname{id}:U(1)_1\rightarrow U(1),\rhd = 1).\label{trivu1}
\end{equation} 
{The} action $\rhd = 1$ at the group level implies $\rhd = 0$ at the algebra level, which allows us to write the transformation law  (\ref{eq:ab2gau}) as
\begin{equation}
    \Sigma \rightarrow \Sigma + dL + \lambda\rhd \Sigma = \Sigma + dL;\nonumber
\end{equation}
we have the \textit{{invariant}} 2-curvature and the higher Bianchi identity
\begin{equation}
    K = d\Sigma \rightarrow d\Sigma + d^2L = K,\qquad dK = 0.\nonumber
\end{equation}
{The} fake-flatness $F = t\Sigma = \Sigma$ then encodes the monopole condition  (\ref{eq:mono}), and the 2-curvature $K = \ast j_m\neq 0$ can be used to encode the monopole current $j_m$ without assuming a violation of Bianchi identity $dF\neq 0$.

What we have demonstrated is that, \textit{{to resolve an anomalous 0-form symmetry, we must introduce a 2-group structure with mixed 0- and 1-form symmetry}}. This is precisely the idea leveraged in \cite{Cordova:2018cvg} in order to implement the {\it {Green--Schwarz mechanism}} of anomaly cancellation in QFT (and more generally in \cite{Sati:2008eg,Fiorenza:2020iax}). We describe this procedure briefly in the following.

\subsection{2-Yang--Mills Theory} 
We now utilize the above concept of monopole anomaly resolution in order to study an \textit{{anomaly-free}} version of monopole electrodynamics. By anomaly-free, we mean that the $U(1)$-bundle $P\rightarrow X$ under consideration has trivial first Chern class. 

Such a bundle hosts no monopole anomaly, and the Bianchi identity $dF =0$ is satisfied everywhere. In order to introduce a monopole charge, we intend to design a 2-form $\Sigma$ with a non-trivial quantized period,
\begin{equation}
    0\neq q_m = \frac{1}{2\pi}\int_S \Sigma \in\bbZ,\qquad \Sigma=dL, \label{monoperiod}
\end{equation}
(where the 1-form $L$ has a branch cut as shown in Figure \ref{fig:monopole}) from an action principle. Recall that, by the clutching construction, the value of the monopole charge is fixed by the singularity structure of $L$.

The goal   is therefore  to construct an action of a 2-gauge theory that describes the electrodynamics of {regular} anomaly-free Maxwell's theory, { {as well as}} the monopole configuration of the 2-form connection $\Sigma=dL$. We begin by forming the manifestly invariant quantity $\mathcal{F}=F -\Sigma$ under (regular) 1-form shift symmetry {(The prime is to distinguish $L'$ from the $L$ chosen in  (49). Here we consider regular $L'$, free of branch-cuts.)} $A\rightarrow A+L'$, and write down the Abelian {\it {2-Yang--Mills theory}} \cite{Baez:2002highergauge}
\begin{equation}
    S_\text{2YM}[A,\Sigma] = \int_X\ast\mathcal{F}\wedge\mathcal{F}=\int_X \ast (F-\Sigma)\wedge (F-\Sigma),\label{eq:2ym}
\end{equation}
which has also appeared in the study of topological orders protected by subsystem symmetry \cite{Dubinkin:2020kxo}. By varying the 2-connection $\Sigma$, we obtain the EOM $\ast (F-\Sigma)=0$, which is nothing but fake-flatness $\mathcal{F}=F-\Sigma=0$. However, in order to have non-trivial monopoles charges, we { {must}} consider the off-shell configurations $F\neq \Sigma$. This is because fake-flatness kills the monopole $K=d\Sigma = dF=0$ by the Bianchi identity. Note that we do not expect any issue despite a violation of the fake flatness condition since the theory is Abelian.

\subsubsection{2-Yang--Mills Theory with Sources}
In order to have a non-trivial monopole configuration, we must therefore source the 2-connection $\Sigma$. We do something more general here and source both the 1- and 2-connections $A,\Sigma$ individually, with 1- and 2-form currents $j_e,J$, respectively. This inserts the following~terms
\begin{equation}\label{scur}
    S_\text{2cur}=\int_X \ast j_e\wedge A + \int_X \ast J \wedge\Sigma 
\end{equation}
into $S_\text{2YM}$. Intuitively, the 2-form current $J$ should be related in some way to the monopole current $\ast j_m$; indeed, upon a variation of $\Sigma$, the sourced action  (\ref{eq:2ym}) together with current contribution \eqref{scur} leads to the EOM
\begin{equation}
    \ast (F-\Sigma) = \ast J\implies d\Sigma=-dJ,\nonumber
\end{equation}
where we have used the Bianchi identity $dF=0$. By definition, the pure-gauge 2-connection $\Sigma=dL$ has quantized period given by the monopole charge
\begin{equation}
    q_m =\frac{1}{2\pi} \int_S \Sigma = \frac{1}{2\pi}\int_{S\times l}d\Sigma = -\frac{1}{2\pi}\int_{S\times l} dJ,\nonumber
\end{equation}
and combined with the definition  \eqref{monodef} leads to
\begin{equation}
    dJ = -\ast j_m.\nonumber
\end{equation}
{This} identifies the flux $dJ$ of the 2-form current $J$ with precisely the monopole current. This makes sense, as $J$ sources the 2-connection $\Sigma$ that introduces the monopole.

We emphasize here that the 2-Yang--Mills theory  \eqref{eq:2ym} is anomaly-free, meaning that it does not have any monopole currents $j_m$ as the Bianchi identity $dF=0$ is satisfied. Indeed, the main point of the construction is to source the monopole charge with a 2-form current $J$ { {without}} introducing anomalies.

\subsubsection{Higher Conservation Laws}
Now to derive the conservation laws of the currents $j_e,J$, we make gauge transformations on $S_\text{2YM}+S_\text{2cur}$. A (regular) 0-gauge transformation $A\rightarrow A+d\lambda$ leads to
\begin{equation}
    S_\text{2cur}\rightarrow S_\text{2cur} + \int_X \ast j_e\wedge d\lambda =S_\text{2cur} -\int_X \lambda (d\ast j_e),\nonumber
\end{equation}
which accounts for the conservation $d\ast j_e=0$ of the { {electric}} current. Suppose now we make the 1-form shift transformation $A\rightarrow A+L',\Sigma\rightarrow \Sigma + dL'$. The action $S_\text{2YM}$ remains invariant, but the sourcing terms acquire
\begin{equation}
    S_\text{2cur}\rightarrow S_\text{2cur}+\int_X \ast j_e \wedge L' + \ast J \wedge dL' = S_\text{2cur}+\int_X (\ast j_e - d\ast J) \wedge L',\nonumber
\end{equation}
which implies the {\it {2-conservation law}}
\begin{equation}
    d\ast J = \ast j_e.\label{eq:2noether}
\end{equation}
{This} implies that the conservation $d\ast J=0$ of the 2-form current $J$ occurs only if $j_e=0$---the 2-form current $J$ is conserved only if { {isolated charges are immobile}}, precisely like a dipole. Indeed,  \eqref{eq:2noether} is also known as a {{dipole}} conservation law \cite{Dubinkin:2020kxo}. This mobility restriction is similar to that for fractons \cite{Pretko:2017fbf,Slagle.96.195139}; there had been effort to describe (3+1)D fracton models in the continuum with { {foliated}} field theories \cite{Pretko:2020}, with similar mobility constraints for the charged currents.

\subsubsection{Charged Operators in 2-Yang--Mills Theory}\label{2ymoperators}
    We note here that generalized higher-form global symmetries had been analyzed in~\cite{Gaiotto:2014kfa}. There, a $p$-form global symmetry is described by a set of (unitary) charged operators $U_M$ localized on $(p+1)$-\textit{co}dimensional submanifolds $M\subset X$, together with its fusion rules $U_M$. This view had led to the fervent study of ``global (higher) categorical symmetries'' in recent times (see e.g., \cite{Wen:2019,Kong:2020}), which in fact generalizes our current 2-group case.
    
    To see this, we define for $1,2$-codimensional submanifolds $M,N\subset X$ the following $U(1)$-valued charged operators
    \begin{equation*}
        U_M(\alpha) = e^{i\alpha\int_M\ast j_e}\in U(1)_0,\qquad V_N(\beta) = e^{i\beta\int_N \ast J}\in U(1)_1,
    \end{equation*}
    where $\alpha,\beta\in \R$ are real parameters. We consider $U,V$ as a function on the set of codim-1,-2 submanifolds of $X$, respectively. Now let us define a map $\partial$ by $(\partial V)_M(\alpha) = V_{\partial M}(\alpha)$, which relates charges localized on $M$ (codim-1) to its boundary $\partial M$ (codim-2). We see that the 2-conservation law \eqref{eq:2noether} implies
    \begin{equation*}
        (\partial V)_M(\alpha) = e^{i\alpha\int_{\partial M}\ast J}= e^{i\alpha\int_M d\ast J} = e^{i\alpha\int_M\ast j_e} = U_M(\alpha),
    \end{equation*}
    which is precisely the fake-flatness condition \cite{Bullivant:2016clk} associated to the $t$-map of the 2-group \eqref{trivu1}. In other words, the mobility constraint for the currents translates to the statement that the 1-form operators $V$ lie on the boundary of the 0-form operators $U$. 

    \smallskip
    
    Generally, charged operators of a 2-gauge theory, such as \eqref{eq:2ym}, forms a (strict) 2-group~\cite{Baez:2002highergauge,Wockel2008Principal2A}: the groupoid unit translates to the unit codim-2 {operator} 
 $\mathbf{1}_M$ with trivial 1-form charge on the boundary $\partial M$, and the two compositions on the 2-group \cite{baez2004}, denoted by $\circ,\otimes$, respectively, translates to two different ways of fusing these charged operators $(U,V)$. These composition laws are subject to the interchange relation (i.e., the Peiffer identity) \cite{baez2004,Porst2008Strict2A},
    \begin{equation}
        (\mathbf{1}_M\otimes V_{N'})\circ(V_N \otimes \mathbf{1}_{M'}) = (\mathbf{1}_M\circ V_N)\otimes (V_{N'}\circ\mathbf{1}_{M'}),\label{interchange}
    \end{equation}
    where $\partial M = N$ and similarly $\partial M' = N'$. This interchange relation is in fact a generalization of the Eckman-Hilton argument \cite{Gaiotto:2014kfa}: consider a 1-form symmetry $G$ described by the 2-group $G\rightarrow\ast$, where the group at degree-0 $\ast$ is trivial. The $t$-map must then also be trivial, whence the Peiffer identity \eqref{interchange} implies that $G$ must be Abelian.


    The above construction can be understood as giving a map from the (differential graded) algebra formed by the currents $(j_e,J)$ to a certain algebra of observables---namely the 2-group of charged operators $(U,V)$---of the 2-Yang--Mills theory \eqref{eq:2ym}. This procedure has been studied in the general $L_\infty$-algebra context in \cite{costello_gwilliam_2016}, where the spacetime dependence of these currents/charged operators is described in terms of { {factorization algebras}} on $X$. We refer the reader to the aforementioned reference for details.
    

\subsection{4D Topological Orders: Quasistring Defects and Surface Linking} \label{strpt}
We have seen in the monopole case above that certain curvature anomalies can be used to represent topological invariants of $X$. There, the first Chern class $c_1\in H^2(X,\bbZ)$ classifying complex line bundles on $X$ can be represented by the curvature 2-form $F$ through Chern--Weil theory \cite{book-algtop,book-charclass}. 

In this section, we would like to focus on the {{torsion}} topological invariants, particularly the Stiefel--Whitney classes $w\in H^\bullet(X,\bbZ_2)$ of the tangent bundle $TX \rightarrow X$. They  classify the { {framing}} of $X$ \cite{book-algtop,book-charclass}; see also {Appendix} 
 \ref{framedquasi}. Our goal in this section is to leverage the structures of a {{finite}} skeletal 2-group $\mathcal{G}=(V\xrightarrow{0} N,\rhd)$ in order to insert torsional topological anomalies, such as these Stiefel--Whitney classes. 

\subsubsection{Insertion of Topological Defects}
Let $X$ be a framed 5-manifold with boundary $Y=\partial X$. Given a skeletal 2-group $\mathfrak{G}=(0:V\rightarrow N,\rhd)$, its associated 2-gauge theory encodes the following fake-flatness and 2-flatness conditions,
\begin{equation}
    \mathcal{F} = F = 0,\qquad \mathcal{G} = K -\kappa(A) = 0,\nonumber
\end{equation}
such that the 1- and 2-connections $(A,\Sigma)$ are in a sense ``decoupled''. Excitations of the theory can be inserted separately by modifying these EOMs \cite{Zhu:2019}
\begin{equation}
    \mathcal{F} = f_2 \in C^2(X,\mathbb{R})\otimes\mathfrak{n},\qquad \mathcal{G} = g_3\in C^3(X,\mathbb{R})\otimes V,\nonumber
\end{equation}
where $C^\bullet(X,\mathbb{R})$ denote the complex of $\mathbb{R}$-valued differential cochains on $X$. The worldvolumes  of the excitations in $Y$ are determined by the restrictions of the cochains $f_2,g_3$ to $Y$ via {Poincar{\' e} duality} \cite{book-algtop}
\begin{equation}
    \operatorname{PD}:C^n(Y,\mathbb{R})\rightarrow C_{4-n}(Y,\mathbb{R}).\nonumber
\end{equation}
{More} explicitly, if $\iota:Y\hookrightarrow X$ is the inclusion of the 4D boundary, then $\operatorname{PD}(\iota^*g_3)$ is a 1-cycle (a worldline) on $Y$ \cite{Dubinkin:2020kxo}. Similarly, the 2-cycle $\operatorname{PD}(\iota^* f_2)$ can be interpreted as the worldsheet of a string-like excitation \cite{Zhu:2019,Thorngren2015}.

One of the key points demonstrated in Section \ref{QED-anom} is that the characteristic classes contribute as ``anomalous'' topological excitations, or {defects}, of the theory. As such our goal is to construct a 2-gauge structure hosting the Stiefel--Whitney classes $w_2,w_3$ as defects. Following \cite{Kapustin:2013uxa,Zhu:2019}, we work directly with flat discrete 2-connections that exhibit the Stiefel--Whitney classes as topological defects.


The construction of gauge fields that capture topological defects gives rise to an invertible topological quantum field theory (TQFT), such as Yetter theory \cite{Bullivant:2016clk} or Dijkgraaf--Witten theory \cite{Kapustin:2013uxa,Bullivant:2021}. These objects make an appearance in high-energy physics \cite{Thorngren:2021,Benini_2019} and condensed matter physics \cite{Freed_2021,Kong:2020}, as it is common lore \cite{Freed:2014} that anomalies in QFTs are in a very general sense { {topological}}. 

\begin{remark}
One can also expect that topological features are  relevant in quantum gravity. Indeed, if we accept that quantum gravitational fluctuations allow for topology change, then one way to keep track of them is to use a 2-group structure, through a  2-connections with non-trivial topological configurations, such as those that we shall construct below.  Another interesting direction   would be to formulate the 2-group field theory \cite{Girelli:2022bdf} which generates topological amplitudes to non-trivial (i.e., non-zero Postnikov class) crossed-modules. We shall leave this to a future work.
\end{remark}

\subsubsection{Discrete Flat 2-Connections} 
Recall that flat $G$-connections on $Y$ can be uniquely assigned, up to homotopy, through the choice of a classifying map $f:X\rightarrow BG$, where $BG$ is the classifying space of $G$ \cite{book-algtop}. A similar situation occurs for 2-groups; in the following, we shall focus on the discrete skeletal 2-group $\mathcal{D}(\bbZ_2)\equiv (\bbZ_2\xrightarrow{0}\bbZ_2,\rhd=0)$.

The classifying space $B\mathcal{D}(\bbZ_2)$ can be constructed from a Postnikov tower \cite{Kapustin:2013uxa,Zhu:2019,Brown}, such that a discrete flat $\mathcal{D}(\bbZ_2)$-connection can be assigned onto $X$ through the choice of a classifying map \cite{Kapustin:2013uxa,Zhu:2019,Mackaay:uo}
\begin{equation}
    f: X\rightarrow B\mathcal{D}(\bbZ_2),\nonumber
\end{equation}
up to homotopy. Now a simple computation in group cohomology \cite{book-groupcohomology} yields a non-trivial generator $[\kappa]\in H^3(\bbZ_2,\bbZ_2)\cong\bbZ_2$, which allows us to define
\begin{equation}
    w_3=f^*[\kappa]\label{3sw}
\end{equation}
as the pullback by $f$. This equation uniquely determines $f$ up to homotopy \cite{Kapustin:2013uxa}.

The discrete 2-gauge structure we aim for should then have as EOMs on the boundary $Y=\partial X$:
\begin{equation}
    F =dA=0,\qquad K =f^*\kappa = \kappa(A)= w_3,\nonumber
\end{equation}
where $(A,\Sigma)$ denotes a flat $\mathcal{D}(\bbZ_2)$-connection defined by the classifying map $f$. We have thus defined a discrete 2-connection that realizes the third Stiefel--Whitney class $w_3$ as a 2-curvature anomaly.

These EOMs can be recovered from a 2-BF type action based on the 2-group $\mathcal{D}(\bbZ_2)$. For this, we introduce  the Lagrange multipliers $b\in C^2(Y,\bbZ_2),c\in C^1(Y,\bbZ_2)$ on the boundary $Y=\partial X$, and we recover a 2-BF action similar to \eqref{eq:2bf},
\begin{equation}
    S_{\text{2BF}} = \frac{1}{2}\int_Y b\cup F + c\cup (K-f^*\kappa) = \frac{1}{2}\int_Y b\cup dA + c\cup (d\Sigma - w_3), \nonumber
\end{equation}
where $\cup$ is the cup product on $\bbZ_2$-valued cochains \cite{book-algtop}.

\subsubsection{Twistings of 2-Drinfel'd Double Symmetries}
We now would like to introduce the second Stiefel--Whitney class $w_2$ in a similar way, but this time associated to the 1-curvature $F$. One cannot do that directly, since the 2-group $\mathcal{D}$ is skeletal and so by construction we have $F=0$. In other words, one cannot introduce $w_2$ as a curvature defect without breaking the 2-gauge symmetry.

We can, however, insert it as a { {dual}} 1-curvature anomaly. That is, as a source from the 1-gauge  theory inherited from the dual $\mathcal{D}^*$ gauge theory.  By appending a term
\begin{equation}
    S_{w_2}=\frac{1}{2}\int_Y w_2 \cup \Sigma \nonumber
\end{equation}
to the 2-BF theory $S_\text{2BF}$, we achieve the EOM upon a variation of $\Sigma$:
\begin{equation}
    \tilde F = dc = w_2,\nonumber
\end{equation}
which can be interpreted as an anomaly in the dual curvature  \eqref{eq:dual}. In this way, each of the $\mathcal{D},\mathcal{D}^*$ sectors {{separately}} supports the topological defect $w_3,w_2$, and these cohomology classes contribute to twist the symmetry {(Technically speaking, these twists occur in the \textit{2-groupoid algebra} $\mathbb{C}[\mathcal{D}]$} \cite{Bullivant:2021}; {for instance, we have $\mathbb{C}[\mathcal{D}^{w_3}]=\mathbb{C}^{\kappa}[\mathcal{D}]$ where $w_3=f^*\kappa$ arises from the Postnikov class $\kappa$.})
$$\mathcal{D}\times\mathcal{D}^*\to\mathcal{D}^{w_3}\times(\mathcal{D}^*)^{w_2}\,.$$ 
{We} have effectively replaced the original 2-group symmetry $\mathcal{D}$ by a fibred product $\mathcal{D}^{w_3}\times_{\mathbb{Z}_2}(\mathcal{D}^*)^{w_2}$ of the twisted categorical symmetries. Note we no longer have a 2-Drinfel'd double!

After going on-shell of $F=0$, the boundary action then reads \cite{Thorngren2015}
\begin{equation}
    S_\text{2BF}+S_{w_2} \sim S_\partial=\frac{1}{2}\int_Y c\cup (d\Sigma-w_3) + w_2\cup \Sigma.\label{bdytop}
\end{equation}
{Due} to the closure $dw_n=0$ of the Stiefel--Whitney classes, we have
\begin{equation}
    \frac{1}{2}d\left(c\cup (d\Sigma-w_3) + w_2\cup \Sigma\right)=\frac{1}{2}(dc \cup w_3 + w_2\cup d\Sigma) \sim w_2\cup w_3,\nonumber
\end{equation}
where we have used the EOMs for $c$ and $\Sigma$. This means that $S_\partial$ can be interpreted as the boundary action of a bulk 5D symmetry-protected topological (SPT) phase
\begin{equation}
    \mathcal{C}_\text{5d} = \int_X w_2\cup w_3, \label{5dtopord}
\end{equation}
protected by the global $\mathbb{Z}_2$-symmetry. Conversely, we may begin with the action \eqref{5dtopord}, and interpret the cochains $c,\Sigma$ and their EOMs as trivializations for $w_2,w_3$ on the boundary $Y=\partial X$ \cite{JuvenWang,Thorngren2015}.

\begin{remark}
A topological order is {\it symmetry-protected} if it describes a gapped phase that is topologically trivial when the symmetry is ignored \cite{XieChen:2013}. In general, an invertible topological order can be interpreted as being hosted on the boundary of a bulk symmetry protected topological (SPT) order, through the mechanism of {\it anomaly inflow/resolution} \cite{Witten:2019,JuvenWang}. This is the {\it holographic bulk-boundary correspondence}, which states that a $d$-dimensional (possibly anomalous) topological order $\mathcal{C}$ determines uniquely an {\it anomaly-free} order $\mathcal{D}$ in $(d+1)$-dimensions \cite{Kong:2020}.
\end{remark}

Let us organize the above discussion in the following.
\begin{proposition}
    The 4D boundary theory $S_\partial$ of the 5D SPT $w_2w_3$ does \textit{not} admit a 2-BF theory description. It does, however, have a symmetry given by a quotient of the (non-double) twisted symmetry $\mathcal{D}^{w_3}\times_{\mathbb{Z}_2}(\mathcal{D}^*)^{w_2}$.
\end{proposition}

{Since} 
 a 2-Drinfel'd double symmetry gives rise to a Drinfel'd centre fusion 2-category~\cite{Chen2z:2023}, this statement can be considered as another argument for the fact that this anomalous 4D boundary theory $S_\partial$ is not a Drinfel'd centre \cite{Johnson-Freyd:2020,Johnson_Freyd_2023}. Here, we provide a simple and direct field theoretic argument without deep higher-categorical algebra.

The above general formalism of using a 2-group structure to capture topological defects has appeared in \cite{Kapustin:2013uxa,Zhu:2019}. The action  \eqref{bdytop} and topological order \eqref{5dtopord} specifically has also been studied in \cite{Thorngren2015}, and we shall follow this reference and provide a brief summary of its interesting properties in the remainder of this section. {{The new insight we provided here}}  is the understanding that the EOM $dc=w_2$ appears as a {{dual}} EOM for the associated 2-$BF$ theory, which implies that the boundary action  \eqref{bdytop} is characterized by an underlying 2-Drinfel'd double associated to $\mathcal{D}(\bbZ_2)$; see { {Remark} \ref{2ddsymm}}.

\section{Gauging the 2-Gauge}\label{sec:gaug2gau}
Recall from Section \ref{sec:gau1gau} that we noticed that the curvature $F$ was invariant under a shift $\alpha$, a closed form  with value in the centre of $\g$. Generalizing this to an arbitrary shift led to the  ``gauging the 1-gauge''. This approach extends to the 2-gauge case.   Indeed, given we have a covariantly closed 2-form $\sigma$ such that $d_A\sigma=0$, we see that the 2-curvature $K=d_A\Sigma$ is (strongly) invariant under the shift $\Sigma\rightarrow \Sigma+\sigma$. 

In the following, we shall gauge this global symmetry by taking this shift to be an {{arbitrary}} 2-form $\sigma\in\Omega^2(X)\otimes\h$. In this sense, we will make the 2-curvature {{gauge datum}}. 

\subsection{From Shifting the 2-Connection to a Lie Algebra 2-Crossed-Module}\label{sec:2-con-shift}

\subsubsection{Shifting the 2-Connection}
Consider a 2-connection shift $\Sigma\rightarrow \Sigma+\sigma$ by an arbitrary 2-form $\sigma$. The 2-curvature $K$ then transforms accordingly
\begin{equation}
    K \rightarrow K'=d_A\Sigma + d_A\sigma,\nonumber
\end{equation}
hence the 3-curvature $K\rightarrow K'\neq K$ fails to be invariant, even on-shell of the fake-flatness condition. To remedy this, we introduce a {\it {3-form}} gauge field by
\begin{equation}
    \Gamma\equiv K'-K = d_A\sigma.\label{eq:3gau}
\end{equation}
{Following} the same reasoning as previously, the shift we are performing does not have to come from the same algebra $\Sigma$ is valued in. We consider a 2-form $\mathcal{L}\in\Omega^2(X)\otimes\mathfrak{i}$ valued in another Lie algebra $\mathfrak{i}=\operatorname{Lie}I$, and replace (\ref{eq:3gau}) by a ``pure gauge'' connection  3-form $\Gamma=d_A\mathcal{L}$. 

Repeating the same steps as before, we may expect the existence of another crossed-module $\mathfrak{S}=(t':\mathfrak{i}\rightarrow \mathfrak{h},\bar\rhd)$ such that the shift transformation on the 2-connection
\begin{equation}
    \Sigma\rightarrow \Sigma+ t'\mathcal{L},\qquad t'\mathcal{L} = \sigma\nonumber
\end{equation}
becomes a ``3-gauge transformation'' parameterized by $\mathcal{L}\in\Omega^3(X)\otimes\mathfrak{i}$. This yields yet another invariant fake curvature quantity, called the {2-fake curvature}
\begin{eqnarray}
    \mathcal{G}\equiv K-t'(\Gamma),\nonumber
\end{eqnarray}
and the associated 2-fake-flatness condition $K=t'\Gamma$. Notice once again that the 2-curvature anomaly $K=\kappa$ can now be absorbed as a part of the 3-connection $\Gamma$---we shall make this statement precise in Section \ref{3anom}.

Since $K$ is valued in $\operatorname{ker}t$ on-shell of the fake-flatness condition $\mathcal{F}=0$, enforcing also the 2-fake-flatness condition implies $\operatorname{ker}t = \operatorname{im} t'$, i.e., $t\circ t'=0$. We thus obtain an exact Lie algebra complex
\begin{equation}
    \mathscr{G}: \mathfrak{i}\xrightarrow{t'}\mathfrak{h}\xrightarrow{t}\mathfrak{g},\label{eq:3comp}
\end{equation}
for which $t':\mathfrak{i}\rightarrow \h$ is a crossed-module. Here, $\mathfrak{g}$ acts on both $\mathfrak{h}$ and $\mathfrak{i}$, denoted respectively by $\rhd$ and $\rhd'$, under which the maps $t,t'$ are both $\mathfrak{g}$-equivariant---meaning that the following diagram commutes
\begin{equation}
\begin{tikzcd}
\mathfrak{i} \arrow[r, "t'"] \arrow[d]                                     & \mathfrak{h} \arrow[r, "t"]  \arrow[d]                                   & \mathfrak{g} \arrow[d] \arrow[ld, "\rhd"] \arrow[lld, "\rhd'",description] \\
\operatorname{Inn}\mathfrak{i} \arrow[r, "t'"] & \operatorname{Inn}\mathfrak{h} \arrow[r, "t"]  & \operatorname{Inn}\mathfrak{g}  
\end{tikzcd},\nonumber
\end{equation} 
which is a generalization of \eqref{equivdiag}. Here, $\operatorname{Inn}\g\subset\operatorname{Der}\g$ denotes the Lie algebra of inner derivations. 

\subsubsection{3-Curvature and 3-Bianchi Identity}

Let us define the 3-curvature $T=d_A\Gamma$ in the na{\" i}ve way. With the exactness $\operatorname{im}t'=\operatorname{ket}t$ of the complex  \eqref{eq:3comp} in mind, as well as on-shell of the 2-fake-flatness condition $t'\Gamma=K$, we see that
\begin{equation}
    t'(T) = d_A(t'(\Gamma)) = d_AK = ((t\Sigma)\wedge^\rhd \Sigma)|_{\operatorname{ker}t}.\nonumber
\end{equation}
{If} the Peiffer identity for the part $t:\mathfrak{h}\rightarrow\mathfrak{g}$ of  \eqref{eq:3comp} holds, then this term coincides with $[\Sigma\wedge\Sigma]|_{\operatorname{ker}t}$ as computed in  \eqref{eq:1biananom}, which vanishes. This is the 2-Bianchi identity.

However, we can also consider the case where the Peiffer identity does \textit{{not}} hold, so that  $t:\mathfrak{h}\rightarrow\mathfrak{g}$ is no longer necessarily a crossed-module, but merely a {\it {precrossed-module}}. This means that only the first Peiffer condition (equivariance) is satisfied, and $(ty)\rhd y'$ does not coincide with the bracket $[y,y']$ on $\mathfrak{h}$ as $(t\cdot)\rhd \cdot$ is in general not skew-symmetric {(Indeed, the skew-symmetry of this quantity is an axiom in the Lie 2-algebra formulation that} {\it {defines}} {the Lie bracket on $\mathfrak{h}$} \cite{Bai_2013}.).

To treat the term $(t\Sigma)\wedge^\rhd\Sigma|_{\operatorname{ker}t}$, we invoke the exactness of the complex  \eqref{eq:3comp} to write it as the image of some quantity $\{\Sigma\wedge\Sigma\}_\text{Pf}$ under $t'$. This quantity is defined by the {\it {Peiffer lifting map}} $\{\cdot,\cdot\}_\text{Pf}:\mathfrak{h}^{2\otimes}\rightarrow \mathfrak{i}$, which satisfies
\begin{equation}
    t'\{y,y'\}_\text{Pf} = -((ty')\rhd y)|_{\operatorname{ker}t},\qquad \forall y,y'\in\h.\label{liftcond}
\end{equation}
{In} other words, $\{\cdot,\cdot\}_\text{Pf}$ lifts $((t\cdot)\rhd \cdot)|_{\ker t}$ along $t'$ up to $\mathfrak{i}$. If we wish for the 3-curvature to be valued in $\operatorname{ker}t'\subset\mathfrak{i}$, similar to how the 2-curvature $K$ is valued in $\operatorname{ker}t\subset\mathfrak{h}$, we must replace $T$ by the {\it {modified 3-curvature}} $\mathcal{H}$,
\begin{equation}
    \mathcal{H} = d_A\Gamma + Q_\Sigma,\qquad Q_\Sigma = \{\Sigma\wedge\Sigma\}_\text{Pf}.\label{eq:3curv}
\end{equation}
{This} extra quadratic term $Q_\Sigma$ is an artifact of forgoing the Peiffer identity for $t:\h\rightarrow \g$ but, importantly, {should \textit{{not}} itself be considered an anomaly}. We shall elaborate more on this in Section \ref{3anom}.

\paragraph{{{3-Bianchi identity.} 
}}
Given the modified 3-curvature \eqref{eq:3curv}, we can assess what properties we should consider to have if we want to have a 3-Bianchi identity to hold (on shell of the fake-flatness condition $t\Sigma=F$), 
\begin{equation}
    d_A\mathcal{H} = d_A(T+Q_\Sigma) =0. \label{eq:3bian}
\end{equation}
{We}  compute, assuming that there would be 
no violation of 1-Bianchi identity,
\begin{eqnarray}
    d_A\mathcal{H} &=& F\wedge^{\rhd'}\Gamma + \{K\wedge \Sigma\}_\text{Pf} + \{\Sigma\wedge K\}_\text{Pf} \nonumber \\
    &=& t\Sigma\wedge^{\rhd'}\Gamma + \{K\wedge \Sigma\}_\text{Pf} + \{\Sigma\wedge K\}_\text{Pf} \nonumber
\end{eqnarray}
{Demanding} that the last quantity is zero, on-shell of the 2-flatness condition, imposes a relation between $\{\cdot,\cdot\}_\text{Pf}$, $t$, $t'$ and $\rhd'$, which appears when we consider a  Lie algebra 2-crossed-module. Such relation is one of the \textit{{2-Peiffer conditions}}, which we shall see below.

\begin{remark}
It would  be possible to investigate the notion of a ``weak 2-crossed-module'' by violating the 1-Bianchi identity. This can be accomplished by relaxing the Jacobi identity on $\mathfrak{g}$ in the definition of a 2-crossed-module. We shall not delve on this here, however, as this requires much more elaboration.
\end{remark}

\subsubsection{Lie Algebra 2-Crossed-Module}
We are now ready to define what a 2-crossed-module is. Let $\g,\h,\mathfrak{i}$ be Lie algebras.
\begin{definition}[\cite{Radenkovic:2019qme}]\label{def:2crmod} 
The 3-term algebra complex $\mathscr{G}$ in \eqref{eq:3comp}, equipped with the bilinear map $\{\cdot,\cdot\}_\text{Pf}: \h^{\otimes 2}\rightarrow \mathfrak{i}$, is a { {2-crossed-module}} if and only if
\begin{itemize}
    \item the action $\mathfrak{h}\bar\rhd \mathfrak{i}$ defined by $y\bar \rhd z\equiv -\{t'z,y\}_\text{Pf}$ makes $\mathfrak{S}=(t':\mathfrak{i}\rightarrow\mathfrak{h},\bar\rhd)$ into a crossed-module,
    \item the {{2-Peiffer conditions}} are satisfied
    \begin{eqnarray}
        \{t'z,y\}_\text{Pf} + \{y,t'z\}_\text{Pf} &=&-ty \rhd' z,\nonumber \\
        {[z,z']} &=&\{t' z,t' z'\}_\text{Pf},\nonumber
    \end{eqnarray}
    for all $y\in\mathfrak{h}$ and $z,z'\in\mathfrak{i}$,
    \item the {{3-Jacobi identities}} is satisfied
    \begin{eqnarray}
         \{y,[y',y'']\}_\text{Pf}&=&\{( y,y')_\text{Pf},y''\}_\text{Pf} - \{( y,y'')_\text{Pf},y'\}_\text{Pf},\nonumber\\
        \{[y,y'],y''\}_\text{Pf} &=& \{ty\rhd y',y''\}_\text{Pf} - \{ty'\rhd y,y''\}_\text{Pf} \nonumber \\
        &\qquad& + \{y',( y,y'')_\text{Pf}\}_\text{Pf} - \{y,( y',y'')_\text{Pf}\}_\text{Pf}\nonumber
    \end{eqnarray} 
    for all $y,y',y'' \in \mathfrak{h}$, where  $(\cdot,\cdot)_\text{Pf}\equiv t'\{\cdot,\cdot\}_\text{Pf}$ is the $\operatorname{im}t'=\operatorname{ker}t$-valued {Peiffer pairing}, and
    \item the lifting condition \eqref{liftcond}, is satisfied. Both $\{\cdot,\cdot\}_\text{Pf}$ and $(\cdot,\cdot)_\text{Pf}$ are $\g$-equivariant.
\end{itemize}

{Note} 
 $\{\cdot,\cdot\}_\text{Pf}$ is in general \textit{not} skew-symmetric.
\end{definition}

{This} 
 2-crossed-module  serves as the gauge principle for a principal 3-group bundle $\mathscr{P}\rightarrow X$, and it was proposed \cite{Radenkovic:2019qme} that the standard model has such a 3-gauge structure. 

In this case, we have the 1-gauge connection $A$ with value in $\g$, the 2-connection $\Sigma$ with value in $\h$ and the 3-connection $\Gamma$ with value in $\mathfrak{i}$. The modified 3-curvature satisfies the 3-Bianchi identity thanks to the  first of the 2-Peiffer conditions. We have indeed that 
\begin{equation}
    t\Sigma\wedge^{\rhd'}\Gamma = -\{t'\Gamma,\Sigma\}_\text{Pf} - \{\Sigma,t'\Gamma\}_\text{Pf} \nonumber,
\end{equation}
which together with the 2-fake flatness insures that the 3-Bianchi is satisfied. We now study the gauge transformation structure.

\subsection{Gauge Transformations}\label{1-2-3gau}
We now turn to the 3-gauge transformations on the fields $(A,\Sigma,\Gamma)$ in question. Let us fix the notation 
\begin{equation}
    \lambda\in\Omega^0(X)\otimes\g,\qquad L\in\Omega^1(X)\otimes \h,\qquad \mathcal{L}\in\Omega^2(X)\otimes \mathfrak{i}\nonumber
\end{equation}
for the 0-, 1- and 2-form gauge parameters in the theory. We shall derive the 3-gauge transformation rules, under the principle that the curvature quantities
\begin{equation}
    \mathcal{F} = F - t\Sigma,\qquad \mathcal{G} = K - t'\Gamma,\qquad \mathcal{H} = d_A\Gamma+Q_\Sigma\nonumber
\end{equation}
transform covariantly. We shall recover the results given in \cite{Radenkovic:2021fqq}.

We begin by assuming for simplicity that the 2-gauge sector, involving the fields $(A,\Sigma)$, should retain the same transformation laws under the 2-gauge parameters $(\lambda,L)$ as we have derived in Section \ref{1-2gauge}.

\subsubsection{1-Gauge Transformations}
We utilize the action of $\mathfrak{g}$ on the two other Lie algebras $\mathfrak{h},\mathfrak{i}$. Since in the 2-gauge context, $K=d_A\Sigma\rightarrow K - \lambda\rhd K$ is covariant, the 3-form $\Gamma$ must also transform covariantly,
\begin{equation}
    t'\Gamma\rightarrow t'\Gamma - \lambda \rhd t'\Gamma = t'\Gamma - t'(\lambda\rhd' \Gamma)\implies \Gamma\rightarrow \Gamma-\lambda\rhd' \Gamma,\nonumber
\end{equation}
in order to achieve the covariance of the 2-fake-curvature
\begin{equation}
    \mathcal{G}=K - t'\Gamma \rightarrow\mathcal{G}^\lambda = \mathcal{G}-\lambda\rhd \mathcal{G}.\nonumber
\end{equation}
{Moreover}, due to the $\mathfrak{g}$-equivariance of the Peiffer lifting map $\{\cdot,\cdot\}_\text{Pf}$, the modified 3-curvature \eqref{eq:3curv} also achieves the covariant transformation
\begin{equation}
    \mathcal{H} = d_A\Gamma + Q_\Sigma \rightarrow \mathcal{H}^\lambda=\mathcal{H} - \lambda\rhd' \mathcal{H},\label{3curv1gau}  
\end{equation}
as desired.


\subsubsection{2-Gauge Transformations} 
We utilize \eqref{gtff} for the 2-gauge transformation law of $K$, with the caveat that the Peiffer identity for $t:\h\rightarrow\g$ is no longer necessarily satisfied. As such, we write the $[L\wedge L]$ term in the 2-gauge transformation $\Sigma^L$ as $tL\wedge L$. This yields
\begin{equation}
    K\rightarrow K^L = K + tL\wedge^\rhd \Sigma^L + F\wedge^\rhd L.\nonumber
\end{equation}
{This} indicates the transformation property
\begin{equation}
    t'\Gamma\rightarrow  t'\Gamma -tL\wedge^\rhd \Sigma^L -t\Sigma\wedge^\rhd L,\nonumber
\end{equation}
such that we achieve the desired covariance
\begin{equation}
    \mathcal{G}=K - t'\Gamma\rightarrow \mathcal{G}^L = \mathcal{G} - \mathcal{F}\wedge^\rhd L.\label{2fakecurv2gau}
\end{equation}
{Using} the lifting condition \eqref{liftcond} for the Peiffer pairing $(\cdot,\cdot)_\text{Pf}=t'\{\cdot,\cdot\}_\text{Pf}$, we have
\begin{equation}
    -tL\wedge^\rhd \Sigma^L  = (\Sigma^L\wedge L)_\text{Pf} = t'\{\Sigma^L\wedge L\}_\text{Pf},\qquad -t\Sigma\wedge^\rhd L = (L\wedge \Sigma)_\text{Pf} = t'\{L\wedge \Sigma\}_\text{Pf},\nonumber
\end{equation}
which allows us to deduce the 2-gauge transformation 
\begin{equation}
    \Gamma\rightarrow\Gamma^L = \Gamma+\{\Sigma^L\wedge L\}_\text{Pf} + \{L\wedge \Sigma\}_\text{Pf}\nonumber
\end{equation}
for $\Gamma$, provided the 1-Bianchi identity is satisfied (i.e., $K$ is valued in $\operatorname{ker}t = \operatorname{im}t'$).

A direct computation with this 2-gauge transformation law for $\Sigma$ and $\Gamma$ then produces~\cite{Radenkovic:2021fqq}
\begin{equation}
    \mathcal{H}\rightarrow \mathcal{H}^L = \mathcal{H} + \{\mathcal{G}^L\wedge L\}_\text{Pf} +\{L\wedge\mathcal{G}\}_\text{Pf}\sim \mathcal{H}\label{3curv2gau},\nonumber
\end{equation}
which is indeed  invariant on-shell of the 1- and 2-fake-flatness conditions $\mathcal{F},\mathcal{G}=0$.


\subsubsection{3-Gauge Transformations} We expect that a   3-gauge transformation should be parameterized by a 2-form, $\mathcal{L}$, in $\mathfrak{i}$. Hence we naturally posit that $A$ is  invariant, $A\rightarrow A^\mathcal{L} = A$, and hence so is the 1-curvature $F\rightarrow F^\mathcal{L}=F$.

For the remaining fields $\Sigma,\Gamma$, we follow the above gauging the gauge argument and induce a shift transformation
\begin{equation}
    \Sigma\rightarrow\Sigma^\mathcal{L}=\Sigma+ t'\mathcal{L},\qquad \Gamma\rightarrow\Gamma^\mathcal{L}=\Gamma+d_A\mathcal{L}.\nonumber
\end{equation}
{Recall} that the product of a $\mathfrak{g}$-valued form with a $\mathfrak{i}$-valued form is performed through the action $\rhd'$. 

We now compute that the 2-curvature transforms as (note the $\mathfrak{g}$-equivariance of $t'$)
\begin{equation}
    K \rightarrow K^{\mathcal{L}} = K + d_At'\mathcal{L}= K+t'd_A\mathcal{L},\nonumber
\end{equation}
which gives rise to the { {invariance}} of the 2-fake-curvature
\begin{equation}
    \mathcal{G} \rightarrow \mathcal{G}^\mathcal{L} = \mathcal{G}.\label{2curv3gau}
\end{equation}
{Furthermore}, as $tt'=0$ and $F$ is unchanged, we see that the fake-curvature $\mathcal{F}=F-t\Sigma$ is in fact also invariant under the 3-gauge.

Now performing a 3-gauge transformation on the modified 3-curvature $\mathcal{H}$ gives
\begin{equation}
    \mathcal{H}\rightarrow \mathcal{H}^{\mathcal{L}} = \mathcal{H} + F\wedge^{\rhd'}\mathcal{L} + \{t'\mathcal{L}\wedge \Sigma\}_\text{Pf} + \{\Sigma\wedge t'\mathcal{L}\}_\text{Pf} + \{t'\mathcal{L}\wedge t'\mathcal{L}\}_\text{Pf},\nonumber
\end{equation}
for which we may employ the 2-Peiffer conditions (recall $\mathcal{L}$ is a 2-form)
\begin{equation}
    \{t'\mathcal{L}\wedge \Sigma\}_\text{Pf} + \{\Sigma\wedge t'\mathcal{L}\}_\text{Pf} = -t\Sigma\wedge^{\rhd'}\mathcal{L},\qquad  \{t'\mathcal{L}\wedge t'\mathcal{L}\}_\text{Pf} = [\mathcal{L}\wedge\mathcal{L}] =0\nonumber
\end{equation}
to deduce the covariance
\begin{equation}
    \mathcal{H}\rightarrow \mathcal{H}^{\mathcal{L}} = \mathcal{H}+ \mathcal{F}\wedge^{\rhd'} \mathcal{L} \sim \mathcal{H}.\label{3curv3gau}
\end{equation}
{The} modified 3-curvature $\mathcal{H}$ is thus invariant on-shell of the fake-flatness condition \mbox{$\mathcal{F}=0$}~\cite{Radenkovic:2021fqq} (which we recall is preserved by the exactness $tt'=0$ of the complex  (\ref{eq:3comp})).
 
In summary, we have the following 3-gauge transformations
\begin{eqnarray}
     \lambda&:& \begin{cases}A\rightarrow A^\lambda=A+d_A\lambda \\ \Sigma\rightarrow \Sigma^\lambda=\Sigma - \lambda\rhd \Sigma \\ \Gamma \rightarrow \Gamma^\lambda=\Gamma- \lambda\rhd' \Gamma\end{cases},\nonumber \\
     L&:& \begin{cases}A\rightarrow A^L=A+tL \\ \Sigma\rightarrow \Sigma^L=\Sigma+d_AL + \frac{1}{2}tL\wedge L \\ \Gamma\rightarrow \Gamma^L =\Gamma + \{\Sigma^L\wedge L\}_\text{Pf}+\{L\wedge \Sigma\}_\text{Pf}\end{cases},\nonumber \\
     \mathcal{L}&:&\begin{cases} A\rightarrow A^\mathcal{L}=A \\ \Sigma\rightarrow \Sigma^\mathcal{L} =\Sigma+t'\mathcal{L} \\ \Gamma\rightarrow \Gamma^\mathcal{L} = \Gamma+ d_A\mathcal{L}\end{cases}\label{eq:3gautran}
\end{eqnarray}
that generate the gauge symmetry $\operatorname{Gau}_3$ of our 3-gauge theory. This was also derived in~\cite{Radenkovic:2019qme,Radenkovic:2021fqq}.

\subsubsection{Compatibility Between the 3-Gauge Transformations}
Importantly, it was noted in  \cite{Radenkovic:2019qme} that 2-gauge transformations $L$ do not generate a subalgebra when the Peiffer bracket is not zero. Indeed, they commute only up to a 3-gauge transformation,
\begin{equation}
    [(0,L_1,0), (0,L_2,0)] = (0,0,\mathcal{L}_{12}),\qquad \mathcal{L}_{12}=2(\{L_1\wedge L_2\}_\text{Pf}-\{L_2\wedge L_1\}_\text{Pf}).\label{eq:cenex}
\end{equation}
{On} the other hand, the computations
\begin{equation}
    [(\lambda_1,0,0),(\lambda_2,0,0)] = ([\lambda_1,\lambda_2],0,0),\qquad [(0,0,\mathcal{L}_1),(0,0,\mathcal{L}_2)] =0,\nonumber
\end{equation}
as well as the obvious results
\begin{equation}
    [(\lambda,0,0),(0,L,0)] = (0,\lambda\rhd L,0),\qquad [(\lambda,0,0),(0,0,\mathcal{L})] = (0,0,\lambda\rhd'\mathcal{L}), \nonumber
\end{equation}
allow us to completely characterize the 3-gauge group as
\begin{equation}
    \operatorname{Gau}_3 = \mathscr{E}\rtimes (\Omega^0(X)\otimes\mathfrak{g}),\qquad 0\rightarrow \Omega^2(X)\otimes\mathfrak{i}\rightarrow \mathscr{E}\rightarrow\Omega^1(X)\otimes\mathfrak{h}\rightarrow 0.\nonumber
\end{equation}
{Here}, $\mathscr{E}$ can be seen as a sort of central extension of the 2-gauge by the 3-gauge according to (\ref{eq:cenex}).

\subsection{3-Curvature Anomaly and Its First Descendant}\label{3anom}
The goal in this section is to study the anomaly $\tau$ of the modified 3-curvature $\mathcal{H}$, as well as derive conditions on its descendant. In the absence of the 2-Bianchi anomaly, the modified 3-curvature $\mathcal{H}$ is valued in $\operatorname{ker}t'$, and so must the anomaly $\tau$. We shall see that, analogous to Section \ref{twisting}, given that the 3-curvature anomaly $\tau$ takes a particular form, then it is related to the classifying cohomology class of the underlying 2-crossed-module $\mathscr{G}$ under consideration.

We wish to insert $\tau$ which preserves the covariance of the anomaly EOM $\mathcal{H}=\tau$ under the 3-gauge transformations \eqref{eq:3gautran}. This tells us that $\tau$ should transform convariantly under a 3-gauge transformation, identically to how $\mathcal{H}$ transforms \eqref{3curv3gau}. Therefore, on-shell of the fake-flatness condition $\mathcal{F}=0$, the 3-curvature anomaly $\tau$ should be {{invariant}} under a 3-gauge transformation, meaning that it must be {\it {2-shift-invariant}} $\tau(A,\Sigma) = \tau(A,\Sigma+t'\mathcal{L})$. As such, $\tau$ can only be a function on $\mathfrak{g}$ and $\operatorname{coker}t' = \mathfrak{h}/\operatorname{im}t'$. Notice that the quadratic term depends only on $\operatorname{coker}t'$, as $Q_{t\mathcal{L}} = [\mathcal{L}\wedge \mathcal{L}] =0$ by the 2-Peiffer condition.

We now examine the conditions for which the 3-curvature anomaly EOM $\mathcal{H} = \tau$ is covariant under 2-gauge transformations. Once again, the covariance of $\mathcal{H}$ \eqref{3curv2gau} implies that $\tau=\tau(A)$ cannot depend on the 2-connection $\Sigma$, and must be shift-invariant $\tau(A)=\tau(A+tL)$. This casts $\tau$ as a $\operatorname{ker}t'$-valued, degree-4 function of $\operatorname{coker}t$, which is precisely the data of a 4-cocycle representative of the Lie algebra cohomology class $[\tau]\in H^4(\operatorname{coker}t,\operatorname{ker}t')$ that classifies the 2-crossed-module $\mathscr{G}$ up to equivalence \cite{Wag:2006,Ang2018,Brown}; see also {\it {Remark} \ref{deg4class}} in {Appendix} 
 \ref{liealgcohomo}.

\subsubsection{Twisted 1-Gauge Transformations} Going back to the anomaly EOM $\mathcal{H}=\tau(A)$ for the modified 3-curvature, we have shown that $\tau$ is a function of $\operatorname{coker}t = \g/\operatorname{im}t$ valued in $\operatorname{ker}t$. This puts us in an identical situation as the 2-curvature anomaly $\kappa(A)$. 

With the 1-gauge transformations remaining, we suppose the 1- and 2-form connections $(A,\Sigma)$ transform as usual, and define the first descendant $\xi(A,\lambda)$ of $\tau(A)$ as a twisted 1-gauge transformation in the 3-connection satisfying the descent equation,
\begin{equation}
    \Gamma \rightarrow \Gamma + \lambda\rhd'\Gamma + \xi(A,\lambda),\qquad  d_{A^\lambda}\xi(A,\lambda) = \tau(A^\lambda) - \lambda\rhd'\tau(A),\nonumber
\end{equation}
such that the covariance of $\mathcal{H}$,  \eqref{3curv1gau}, gives
\begin{equation}
    \mathcal{H}^\lambda = \mathcal{H} + \lambda\rhd'\mathcal{H} = \tau(A^\lambda).\nonumber
\end{equation}
{Here}, the first descendant of the 3-gauge $\xi(A,\lambda) \in \Omega^3(X)\otimes\operatorname{ker}t'$ is a 3-form, in contrast to the 2-form $\zeta(A,\lambda)$ encountered in Section \ref{twisting}. 

With $\tau$ and $\xi$ valued in $\operatorname{ker}t'$, this twisted gauge transformation does not conflict with the covariance of the modified 3-curvature $\mathcal{H}$ above. Moreover, due to the exactness $tt'=0$ of the complex  \eqref{eq:3comp}, the 3-gauge descendant $\xi$ is { {independent}} from any 2-curvature anomaly.


\subsubsection{2-Curvature Anomaly and First Descendant as 3-Gauge Data}\label{2curvanomal}

It is more accurate to say that { {there is only the 3-gauge descendant}} here, as the 2-curvature anomaly $K=\kappa(A)$ can be understood as a particular 3-connection via the 2-fake-flatness condition $t'\Gamma=K$. Indeed, as $K=d_A\Sigma$, we can consider $\Gamma = d_A\mathcal{L}(A)$ as a pure 3-gauge whose gauge parameter depends on the 1-connection $A$, with $t'=\operatorname{id}$ the identity. 

Note that $\kappa=\kappa(A)$ can only depend on the 1-connection $A$, which is unaffected by the 3-gauge transformation. Hence locally, we can perform a 3-gauge shift parametrized by $\mathcal{L}$ (which could depend on $A$) in order to remove the 2-curvature anomaly,
\begin{equation}
    K-\kappa(A) \rightarrow K + d_A\mathcal{L}(A) -\kappa(A) = K,\nonumber
\end{equation}
which introduces a gauge-fixing of the 3-connection to a pure gauge $\Gamma=d_A\mathcal{L}(A)$. With this choice understood, we now make a 3-gauge transformation followed by a 1-gauge transformation on the 2-connection
\begin{equation}
    \Sigma\xrightarrow{\mathcal{L}} \Sigma+\mathcal{L}\xrightarrow{\lambda} \Sigma+\lambda\rhd \Sigma+ \mathcal{L}+\mathcal{L}^\lambda,\nonumber
\end{equation}
where we have kept the transformation $\mathcal{L}\xrightarrow{\lambda}\mathcal{L}+\mathcal{L}^\lambda$ implicit, as $\mathcal{L}=\mathcal{L}(A)$ now depends on $A$. 

In the opposite order, we have
\begin{equation}
    \Sigma\xrightarrow{\lambda} \Sigma+\lambda\rhd \Sigma\xrightarrow{\mathcal{L}} \Sigma + \mathcal{L} + \lambda\rhd(\Sigma+\mathcal{L}) = \Sigma + \lambda\rhd \Sigma + \mathcal{L} + \lambda\rhd \mathcal{L}.\nonumber
\end{equation}
{The} difference is the expression
\begin{equation}
    \mathcal{L}^\lambda - \lambda\rhd\mathcal{L},\nonumber
\end{equation}
which upon taking the gauge-transformed covariant derivative $d_{A^\lambda}$ yields
\begin{equation}
    d_{A^\lambda}\mathcal{L}^\lambda - d_{A^\lambda}(\lambda\rhd \mathcal{L}) = (d_A\mathcal{L})^\lambda - \lambda\rhd d_A\mathcal{L} = \kappa(A^\lambda)-\lambda\rhd \kappa(A).\nonumber
\end{equation}

This is nothing but the descent Equation  (\ref{eq:desceq}) satisfied by the first descendant $\zeta(A,\lambda)$ of the 2-curvature anomaly $\kappa(A)$, if we take
\begin{equation}
    \zeta(A,\lambda) = \mathcal{L}^\lambda - \lambda\rhd\mathcal{L}.\nonumber
\end{equation}
{This} leads to the identification with the commutator
\begin{equation}
    [(0,0,\mathcal{L}),(\lambda,0,0)] = (0,0,\zeta(A,\lambda)).\label{eq:descomm}
\end{equation}
{In} other words, the 2-curvature anomaly $\kappa(A)$ arising from a Postnikov class can be absorbed by a pure 3-gauge, while its descendant can be absorbed by the commutator \eqref{eq:descomm}, thereby embedding a non-trivial 2-gauge theory into an { {anomaly-free}} 3-gauge theory. This is the spirit of anomaly resolution.

\section{Applications}\label{sec:3app}
In this section, we discuss concrete examples in which 3-gauge structures naturally arise.

\subsection{3-BF Theory}\label{sec:3actions}
The simplest topological action to consider is once again an action implementing merely the constraints---namely, the 1-, 2-fake-flatness and the flat 3-curvature conditions---as equations of motion (EOM). Such a theory has been studied in detail in  \cite{Radenkovic:2019qme}, and we follow their treatment here as well.

\subsubsection{Action and EOMs} As previously, we fix a 2-crossed-module $\mathscr{G}=\mathfrak{i}\xrightarrow{t'}\h\xrightarrow{t}\g$, and we introduce the dual spaces $\mathfrak{i}^*,\h^*,\g^*$ of linear functionals on the Lie algebras $\mathfrak{i},\h,\g$, respectively. We denote their pairing forms collectively by $\langle\cdot,\cdot\rangle$. We begin by introducing Lagrange multipliers $B\in \Omega^{d-2}\otimes\mathfrak{g}^*,C\in \Omega^{d-3}\otimes\mathfrak{h}^*,D\in\Omega^{d-4}(X)\otimes\mathfrak{i}^*$ implementing the aforementioned conditions.

The 3-$BF$ action (also called the \textit{BFCGDH} action, but we shall not use this name for obvious reasons) without any 3-curvature anomalies is then
\begin{equation}
    S_\text{3BF} = \int_X \langle B\wedge \mathcal{F}(A,\Sigma)\rangle + \langle C\wedge\mathcal{G}(A,\Sigma,\Gamma)\rangle + \langle D\wedge\mathcal{H}(A,\Sigma,\Gamma)\rangle,\label{eq:3bf}
\end{equation}
in which $\mathcal{F} = F-t\Sigma,\mathcal{G} = d_A\Sigma-t'\Gamma$, and $\mathcal{H}=d_A\Gamma+Q_\Sigma$ is the modified 3-curvature. Recall these curvature quantities are covariant, \eqref{gtff}, \eqref{3curv1gau}--\eqref{3curv3gau}. For $d=3$, the 3-$BF$ theory reduces to a 2-$BF$ theory, since the dual field $D$ does not exist. 

The first set of EOMs is
\begin{equation}
    \delta B \Rightarrow \mathcal{F}=0,\qquad \delta C\Rightarrow \mathcal{G}=0,\qquad \delta D\Rightarrow \mathcal{H}=0,\nonumber
\end{equation}
which implement precisely the 1-, 2-fake-flatness and 3-flatness conditions, respectively. Since we also have  to vary $A,\Sigma$ and the 3-connection $\Gamma$, in addition to the maps $\Delta,t^*$ given in Section \ref{sec:actions}, we introduce
\begin{eqnarray}
\Delta':\mathfrak{i}^{\otimes 2}\rightarrow\mathfrak{g} &\,,\,& \langle D\wedge A\wedge^{\rhd'}\Gamma\rangle=-\langle \Delta'(D\wedge \Gamma)\wedge A\rangle,\nonumber \\
    \Omega:\mathfrak{i}\rightarrow\mathfrak{h},&\,,\,& \langle D\wedge Q_\Sigma\rangle =-\langle \Omega(D)\wedge \Sigma\rangle,\nonumber\\
    t'^*:\mathfrak{h}\rightarrow\mathfrak{i}, &\,,\,& \langle C\wedge t'\Gamma\rangle = \langle t'^*C \wedge \Gamma\rangle,\nonumber 
\end{eqnarray}
and also the dual action
\begin{equation}
    \langle z,x\rhd' z'\rangle = \langle x\rhd'^*z,z'\rangle \nonumber 
\end{equation}
for all $z,z'\in\mathfrak{i},X\in\mathfrak{g}$. Notice that $\Omega(D)$ is a ($d-2$)-form.

These yield the dual EOMs
\begin{eqnarray}
    \delta A &\Rightarrow& dB+[A\wedge B]^* - \Delta(C\wedge\Sigma)-\Delta'(D\wedge \Gamma)=0,\nonumber \\
    \delta \Sigma &\Rightarrow& dC + A\wedge^{\rhd^*} C  -t^*B-\Omega(D)=0,\nonumber\\
    \delta\Gamma &\Rightarrow& dD + A\wedge^{\rhd'^*}D-t'^*C=0 .\nonumber
\end{eqnarray}
{If} we define, in addition to $\tilde F=dC+A\wedge^{\rhd^*}C$ and $\tilde K=dB+[A\wedge B]^*$ as in  \eqref{eq:dual}, the quantity
\begin{equation}
    \tilde T = dD + A\wedge^{\rhd'^*}D,\nonumber
\end{equation}
we see that these dual EOMs read
\begin{eqnarray}
    \text{$(d-1)$-form:}&\qquad& \tilde K = \Delta(C\wedge \Sigma) + \Delta'(D\wedge \Gamma),\nonumber \\ 
    \text{$(d-2)$-form:}&\qquad& \tilde F = t^*B+ \Omega(D),\nonumber\\ 
    \text{$(d-3)$-form:}&\qquad& \tilde T = t'^*C.\label{eq:dual1}
\end{eqnarray}

\subsubsection{Symmetries of the Action} 
Similar to the 2-gauge case, we also acquire 3-gauge transformations in the dual fields $B,C,D$. These have been developed in  \cite{Radenkovic:2019qme}, but to write them down, we must introduce yet more structures. 

We define the following maps {(The Peiffer pairing defines two maps $\mathfrak{h}\rightarrow \mathfrak{h}^*\times\mathfrak{i}$ by $\mathpzc{Pf}_y=\{y,\cdot\}_\text{Pf}$ and its conjugate $\overline{\mathpzc{Pf}}_y=\{\cdot,y\}_\text{Pf}$. Then $\omega_1(\cdot,Y) =-\mathpzc{Pf}_y^*$ is the dual and $\omega_2(\cdot,y)=-\overline{\mathpzc{Pf}}_{y}^*$ is the conjugate dual.)}
\begin{eqnarray}
    \omega_{1,2}: \mathfrak{i}\times \mathfrak{h}\rightarrow \mathfrak{h}&\Rightarrow& \langle z,\{y,y'\}_\text{Pf}\rangle = -\langle \omega_1(z,y),y'\rangle = -\langle\omega_2(z,y'),y\rangle,\nonumber\\ 
    \mathcal{Y}: \mathfrak{i}\times\mathfrak{h}^{\otimes 2}\rightarrow \mathfrak{g} &\Rightarrow& \langle z,\{x\rhd y,y'\}_\text{Pf}\rangle = -\langle \mathcal{Y}(z,y,y'),x\rangle,\nonumber
\end{eqnarray}
for each $Z\in\mathfrak{i},Y,Y'\in\mathfrak{h},X\in\mathfrak{g}$. Notice that since $\{y,y\}_\text{Pf} = Q_y$, we have 
\begin{equation}
    \langle \omega_{1,2}(z,y),y\rangle = \langle \Omega(z),y\rangle,\nonumber
\end{equation}
and that $\omega_1\neq \omega_2$ or $\omega_1 \neq -\omega_2$ in general, as $\{\cdot,\cdot\}_\text{Pf}$ is {\it {not}} symmetric or skew-symmetric.

The dual 3-gauge transformations are given by \cite{Radenkovic:2019qme} 
\begin{eqnarray}
    \lambda&:& \begin{cases}B\rightarrow B^\lambda= B+[\lambda,B] \\ C \rightarrow C^\lambda = C + \lambda\rhd C \\ D\rightarrow D^\lambda = D + \lambda\rhd' D\end{cases},\nonumber \\
    L &:& \begin{cases} B\rightarrow B^L =  B + \Delta(C\wedge L) + \mathcal{Y}(D\wedge L\wedge L) \\ C\rightarrow C^L = C + \omega_1(C\wedge L) + \omega_2(C\wedge L) \\ D\rightarrow D^L = D\end{cases},\nonumber \\
    \mathcal{L}&:& \begin{cases} B\rightarrow B^\mathcal{L}= B + \Delta'(D\wedge \mathcal{L}) \\ C\rightarrow C^\mathcal{L} = C \\ D\rightarrow D^\mathcal{L} = D\end{cases}, \label{eq:dual3}
\end{eqnarray}
which preserves the 3-$BF$ action  (\ref{eq:3bf}) when performed alongside the 3-gauge transformations  (\ref{eq:3gautran}). 

Analogous to the 2-gauge case in Section \ref{sec:actions}, we see that the 3-gauge group $\operatorname{Gau}_3=\mathscr{E}\rtimes\Omega^1(X)\otimes\mathfrak{g}$ acts on the dual fields $B,C,D$. As such, one would expect the data $(\Delta,\Delta',\omega^{1,2},\mathcal{Y})$ emergent from the dual EOMs  \eqref{eq:dual1} to define a strict coadjoint representation $\operatorname{ad}^*:\mathscr{G}\rightarrow\operatorname{End}\mathscr{G}^*[2]$ on the dual three-term algebra complex
\begin{equation}
    \mathscr{G}^*[2]: \mathfrak{g}^*\xrightarrow{t^*}\mathfrak{h}^*\xrightarrow{t'^*} \mathfrak{i}^*,\qquad t'^*t^* = (tt')^* = 0.\nonumber
\end{equation}
{Unfortunately}, the duality theory of Lie 3-algebras has not been studied in the literature, and the notion of a ``3-Manin triple'' has yet to be developed. We leave this task to the ambitious reader.

\begin{remark}
Notice that, in order for the dual fields $B,C,D$ to have the right degree-count to serve as a ``dual 3-connection'' $(D,C,B)$, we must have $d=\operatorname{dim}X=5$ in contrast to the case in 2-BF theory (where $d=4$). As such, it seems that the ``3-Manin triple'' most naturally provides the symmetry structure of the 3-BF theory in 5D.
\end{remark}

\subsection{3-Yang--Mills Theory}

Given the 3-gauge structure based on a 2-crossed-module $$ \mathfrak{i}\xrightarrow{t'}\mathfrak{h}\xrightarrow{t}\mathfrak{g}$$ 
in hand, we now construct a {\it {3-Yang--Mills theory}} $S_\text{3YM}$ and its coupling to higher-form currents.

Let $X$ denote a smooth oriented manifold. The {\it {3-Yang--Mills action}} is defined as \cite{Baez:2002highergauge,Song:2021}
\begin{equation}
    S_\text{3YM} = \int_X (\ast\mathcal{F}\wedge\mathcal{F})_\mathfrak{g}+(\ast\mathcal{G}\wedge \mathcal{G})_\h, \quad \textrm{with }  \mathcal{F} = F- t\Sigma,\quad \mathcal{G}=K - t'\Gamma,\label{eq:3ym}
\end{equation}
where we recall the field contents are
\begin{equation*}
    A\in \Omega^1(X,\mathfrak{g}),\qquad \Sigma\in\Omega^2(X,\mathfrak{h}),\qquad \Gamma\in\Omega^3(X,\mathfrak{i}).
\end{equation*}
{Given} the invariance of the pairing forms $(\cdot,\cdot)_{\mathfrak{i},\h,\g}$, \eqref{eq:3ym} is by construction invariant under 3-gauge transformations  (\ref{eq:3gautran}), since each term consist of 3-gauge invariant quantities.

Though, it is important to note that the pairing $(\cdot,\cdot)$ used in \eqref{eq:3ym} is different from the one used in 3-BF theory \eqref{eq:3bf}; they have degree-0 while the one $\langle\cdot,\cdot\rangle$ used for 3-BF theory has degree-2.

\subsubsection{Sourced Equations of Motion in 3-Yang--Mills Theory}
By varying the 3-gauge $\Gamma$, we obtain the 2-fake-flatness EOM $\mathcal{G}=K-\Gamma=0$. Once again, we see that no non-trivial higher anomalous charges can be introduced on-shell of this EOM.

To circumvent this, we source the 1-, 2- and 3-form connections $A,\Sigma,\Gamma$, respectively, with the 1-, 2- and 3-form currents $j,J,\mathcal{J}$. This introduces the following term
\begin{equation}
    S_\text{3cur}=\int_X (A\wedge \ast j)_\g+ (\Sigma\wedge \ast J)_\h+ (\Gamma\wedge \ast\mathcal{J})_\mathfrak{i} \nonumber
\end{equation}
to $S_{\text{3YM}}$  (\ref{eq:3ym}). The properties of the currents are listed below 
\begin{center}
    \begin{tabular}{|c|c|c|}
    \hline
    Currents & Form degree & Valued in \\
    \hline
    $j$ & 1 & $\mathfrak{g}$ \\ 
    $J$ & 2 & $\mathfrak{h}$ \\ 
    $\mathcal{J}$ & 3 & $\mathfrak{i}$\\
    \hline
    \end{tabular}
\end{center}
{Upon} introducing these currents, a variation with respect to the field content $(A,\Sigma,\Gamma)$ yields the modified EOMs. To express these EOMs, we must introduce the adjoints of the linear maps $t,t'$ defined by
\begin{equation*}
    (tY,X)_\mathfrak{g} = ( Y,t^TX)_\mathfrak{h},\qquad ( t'Z,Y)_\mathfrak{h} = (Z,t'^TY)_\mathfrak{i}
\end{equation*}
where $X\in\mathfrak{g},Y\in\mathfrak{h}$ and $Z\in\mathfrak{i}$. 

We thus have (recall $F=dA+\frac{1}{2}[A\wedge A]$ and $K= d\Sigma+A\wedge^\rhd \Sigma$)
\begin{align}
    \delta\Gamma \quad\implies \quad \ast \mathcal{J}&=t'^T K -\Gamma, \nonumber\\
    \delta \Sigma\quad\implies \quad\ast J &= -(t^TF-\Sigma) - d_A\ast(K-t'\Gamma) \nonumber\\
    \delta A \quad\implies\quad \ast j &=  d_A\ast(F-t\Sigma) - \Delta_{\Sigma \wedge} \ast(K-t'\Gamma)\label{3ymeom}\,,
 \end{align}
 where the map $\Delta: \mathfrak{h}\times \mathfrak{h}\to \mathfrak{g}$ is given by
 \begin{equation*}
     (X,\Delta_Y(Y'))_\mathfrak{g} = ( X\rhd Y,Y')_\mathfrak{h}
 \end{equation*}
 for $Y,Y'\in\mathfrak{h}$.

\subsubsection{3-Conservation Laws and Higher Mobility Constraints} 
We now derive  higher-conservation laws of $S_\text{3YM}+S_\text{3cur}$ by making 3-gauge transformations. This can be directly obtained from the EOMs; indeed, it is not difficult to show that, on-shell of these EOMs, we have the following relation
\begin{equation*}
    \ast j = -d_A (tJ) - (-1)^{3(n-3)} \Delta_{\Sigma\wedge}(t'\mathcal{J}),
\end{equation*}
where we have used the fact that $\ast^2 = (-1)^{k(n-k)}$ on $k$-forms on $X$, and $n=\operatorname{dim}X$. 

\begin{proposition}
    We have
    \begin{equation}
        [X,\Delta_Y(Y')] = \Delta_Y(X\rhd Y') - \Delta_{Y'}(X\rhd Y)
    \end{equation}
    where $X\in\mathfrak{g}$ and $Y,Y'\in\mathfrak{h}$.
\end{proposition}
\begin{proof}
    By contracting with $X'\in\mathfrak{g}$ we have
    \begin{align*}
        ( X',[X,\Delta_Y(Y')])_\mathfrak{g} &= -([X,X'],\Delta_Y(Y'))_\mathfrak{g} = -( [X,X']\rhd Y,Y')_\mathfrak{h} \\
        &= -( X\rhd (X\rhd' Y),Y')_\mathfrak{h} +( X'\rhd(X\rhd Y),Y')_\mathfrak{h} \\ 
        &= ( X'\rhd Y,X\rhd Y')_\mathfrak{h} - ( X\rhd Y,X'\rhd Y')_\mathfrak{h} \\
         &= ( X',\Delta_Y(X\rhd Y'))_\mathfrak{g} -( X',\Delta_{Y'}(X\rhd Y))_\mathfrak{g}
    \end{align*}
    as desired.
\end{proof}

By applying the covariant derivative $d_A$ to this relation, we have from the above proposition and directly from the EOMs \eqref{3ymeom} that
\begin{align}
    d_A\ast j &= - F\wedge^\rhd tJ - (-1)^{3(n-3)}\Big( \Delta_{\Sigma\wedge}d_A(t'\mathcal{J}) - \Delta_{t'\mathcal{J}\wedge} K\Big)\,\label{3cons1} \\ 
    d_A\ast J &= -F\wedge^{\rhd'} \ast \mathcal{H} \label{3cons2}\\
    d_A\ast \mathcal{J}&= 0,\label{3cons3}
\end{align}
where we have used the 1- and 2-Bianchi identities
\begin{equation*}
    d_A\mathcal{F} =0 ,\qquad d_A\mathcal{H} = 0.
\end{equation*}
{It} is then clear that the 1-current $j$ is conserved provided the 2-/3-currents $J,\mathcal{J}$ take values in the kernels of the 2-crossed-module maps $t,t'$, respectively. However, even then the 2-current $J$ remains {{non-conserved}} in general, unless we have $\cH=0$

A direct consequence of these 3-conservation law  (\ref{3cons1})-(\ref{3cons3}) is the very interesting mobility constraint. Suppose $J|_{\operatorname{ker}t},\mathcal{J}_{\operatorname{ker}t'}$ denote the restriction of the 2-/3-form currents to the kernels of the maps $t,t'$. If we define the 0-/1-/2-form charges $u,U,\mathcal{U}$ associated to the currents $j,J|_{\operatorname{ker}t},\mathcal{J}_{\operatorname{ker}t'}$, respectively, then:
\begin{enumerate}
    \item the 0-/2-form charge $u,\mathcal{U}$ are mobile (i.e., ``\textit{{liquid}}'') and define a topological operator,~but
    \item the 1-form charge $U$ is {{not}} topological, unless certain flatness or mobility constraints are imposed.
\end{enumerate}

{In} 
 particular, we see that the 1-form charge $U$ is only topological when wrapped around a 3-manifold upon which $\mathcal{H}$ vanishes. As such, the presence of a 3-form 2-curvature anomaly, for which $\cH\neq 0$, can be viewed as an obstruction to the conservation and the homotopy-invariance of the 1-form charge $U$.



We expect that this novel mobility constraint conditions may find applications in interesting higher-dimensional topological orders and SPTs. We shall investigate more thoroughly  the higher-algebraic structure of the charged operators in 3-Yang--Mills theory \eqref{eq:3ym} (in accordance with \cite{Gaiotto:2014kfa,costello_gwilliam_2016,Costello:2020ndc,Budzik:2022mpd,Zeng:2023qqp}, etc.) in a future work.

\subsection{Loop Model of the String Lie 2-Algebra}\label{loopgauge}
Let us now look at a concrete novel example of higher anomaly resolution using 3-gauge theory. To set this up, we need a \textit{{non-trivial}} Lie 2-algebra (i.e., its Postnikov class is non-zero). A classic example of this is the string Lie 2-algebra $\mathfrak{string}_k(\mathfrak{g})$, which admits a a description in terms of a non-trivial crossed-module $\mathfrak{l}_k$ \cite{Baez:2005sn}.

 \begin{definition}
     The {{loop model}} for the string 2-algebra is the Lie algebra crossed-module $\mathfrak{l}_k = (t:\widehat{\Omega_k\mathfrak{g}}\cong\Omega\mathfrak{g}\oplus\mathbb{R}\rightarrow P\mathfrak{g},\rhd)$ where 
     \begin{enumerate}
    \item the map $t$ is given by the affine projection $\widehat{\Omega_k\mathfrak{g}}\rightarrow\Omega\mathfrak{g}$ composed with the inclusion $\Omega\mathfrak{g}\hookrightarrow P\mathfrak{g}$,
    \item $\operatorname{ker}t\cong\mathbb{R}=\{(0,c)\in \Omega\mathfrak{g}\oplus\mathbb{R}\}$, and $\operatorname{coker}t\cong P\mathfrak{g}/\Omega\mathfrak{g}\cong \mathfrak{g}$ is isomorphic to the constant~paths,
    \item the action is defined as
    \begin{equation}
        p\rhd (\ell,c) = ([p,\ell], 2k\int_{S^1}\langle p,\dot\ell\rangle),\qquad p\in P\mathfrak{g},\quad (\ell,c)\in \widehat{\Omega_k\mathfrak{g}},\label{eq:action}
\end{equation}
whence the induced action $\rhd$ of $\operatorname{coker}t=\mathfrak{g}$ on $\operatorname{ker}t=\mathbb{R}$ is trivial,  
\item the Postnikov class $[\kappa]\in H^3(\mathfrak{g},\mathbb{R})$ of $\mathfrak{l}_k$ is given by the {\it fundamental 3-cocycle} $\langle\cdot,[\cdot,\cdot]\rangle$ on $\mathfrak{g}$~\cite{book-loop,Baez:2005sn}.
\end{enumerate}
 \end{definition}

{It }
 can be deduced from the Peiffer identity that the Lie bracket on $\widehat{\Omega_k\mathfrak{g}}$ is given by the level-$k$ Kac--Moody extension
\begin{equation}
    [(\ell,c),(\ell',c')] = t(\ell,c)\rhd (\ell',c') = ([\ell,\ell'],2k\int_{S^1}\langle\ell,\dot\ell'\rangle)\,.\nonumber
\end{equation}



The main result in  \cite{Baez:2005sn} is that $\mathfrak{l}_k$ is homotopy equivalent to the weak skeletal model~\cite{Kim:2019owc}.
\begin{theorem}[Baez--Crans--Stevenson--Schreiber]\label{thm:bcss}
 The maps
\begin{eqnarray}
    \psi:P\mathfrak{g} \rightarrow \mathfrak{g},&\qquad& p \mapsto p(1),\nonumber \\
    \phi: \widehat{\Omega_k\mathfrak{g}} \rightarrow \mathbb{R}, &\qquad& (\ell,c)\mapsto c,\nonumber \\
    \varphi: P\mathfrak{g}^{2\otimes}\rightarrow\mathbb{R}, &\qquad& (p,p')\mapsto k\int_{[0,1]}(\langle p,\dot p'\rangle - \langle p',\dot p\rangle),\nonumber
\end{eqnarray}
define a weak equivalence $\Psi=(\varphi,\phi,\psi):\mathfrak{l}_k\rightarrow\mathfrak{string}_k(\mathfrak{g})$. Moreover, given the Lie 2-group $\mathcal{L}_k$ corresponding to the Lie 2-algebra $\mathfrak{l}_k$, its classifying datum $\tau\in  H^3(G,U(1))$ integrating the Postnikov class $\kappa$ of $\mathfrak{l}_k$ coincides with the Diximier-Douady class $k[\omega]\in H^3(G,\bbZ)$ classifying the string 2-group $\operatorname{String}_k(G)$ as a $U(1)$-bundle gerbe,
\begin{equation*}
    BU(1)\to \operatorname{String}_k(G)\to G\,, \qquad \omega = \exp \langle\cdot,[\cdot,\cdot]\rangle\,,
\end{equation*}
where $BU(1)=U(1)\to \ast$ is the delooping.
\end{theorem}

{More} 
 concretely, it was shown that $\Psi$ is a surjective 2-homomorphism inducing a { {split}} short exact sequence
    \begin{equation}
        0\rightarrow \mathfrak{I}_{\Omega\mathfrak{g}}\rightarrow \mathfrak{l}_k \xrightarrow{\Psi} \mathfrak{string}_k(\mathfrak{g})\rightarrow 0.\label{eq:shortloop}
    \end{equation}
{Combined} with the fact that any 2-algebra $\mathfrak{I}_\mathfrak{g}\simeq 0$ with $t=\operatorname{id}$ are trivial for any $\mathfrak{g}$ up to elementary equivalence, the theorem follows.

\subsubsection{Gauge-Theoretic Interpretation} 
So what is happening in the 2-gauge theory picture? Since $\Psi$ is a (weak) elementary equivalence, it can be considered as part of a 2-bundle map $\mathcal{P}\rightarrow\mathcal{P}'$ (see Appendix \ref{weakpost}). Moreover, the sequence  (\ref{eq:shortloop}) induces also a short exact sequence of bundles
\begin{equation}
    \mathcal{P}_0\rightarrow\mathcal{P}\rightarrow\mathcal{P}',\nonumber
\end{equation}
in which $\mathcal{P}_0$ is a trivial 2-gauge bundle based on $\mathfrak{I}_{\Omega\mathfrak{g}}$. Our goal is to understand the effect of the 2-homomorphism $\Psi$ on the 2-gauge kinematical data. We shall focus on the level $k=1$ case.

We write down the gauge-covariant curvature quantities,
\begin{eqnarray}
    \mathcal{F} &=& F - t\Sigma = F - \sigma,\nonumber\\ 
    \mathcal{G} &=& K - \kappa(A) = d_A\Sigma - \kappa(A),\nonumber
\end{eqnarray}
where the 2-curvature anomaly $\kappa(A)$ is given by the Postnikov class mentioned in \linebreak Theorem~\ref{thm:bcss}. We now describe the 2-gauge data on $\mathcal{P}'$. The 1-connection $A'$ is valued in $\mathfrak{g}$, while the 2-connection $\Sigma'$ is valued in $\mathbb{R}$. The action is trivial
\begin{equation}
    A'\wedge^\rhd \Sigma' = 0,\nonumber
\end{equation}
whence the 1- and 2-curvatures read
\begin{equation}
    F' = d_{A'}A' = dA' + \frac{1}{2}[A'\wedge A'],\qquad K' = d_{A'}\Sigma' = d\Sigma'.\nonumber
\end{equation}
{The} Jacobiator is given by the fundamental 3-cocycle $\mu(A,A,A) = \omega(A,A,A)= \frac{1}{3!}\langle A\wedge[A\wedge A]\rangle$, whence we have the gauge-invariant data
\begin{equation}
    \mathcal{F}' = F',\qquad \mathcal{G}' = K' + \frac{1}{3!}\langle A\wedge [A\wedge A]\rangle.\nonumber
\end{equation}


The 2-homomorphism $\Psi=(\varphi,\phi,\psi)$ given in {Theorem \ref{thm:bcss}} should then induce a 2-bundle homomorphim that sends the 2-gauge data $(\mathcal{F},\mathcal{G})$ to $(\mathcal{F}',\mathcal{G}')$. To see this, we first note that $2\int_{S^1}\langle A\wedge\dot\sigma\rangle = \phi(A\wedge^\rhd \Sigma) = \varphi(A,t\Sigma)$ by  \eqref{weak2hom}. Then, noticing that $dc'$ can be gauged away, this yields
\begin{equation}
    F' = \psi F = F(1),\qquad  K' = \phi K = -2\int_{S^1}\langle A\wedge\dot\sigma\rangle,\label{eq:curvdata}
\end{equation}
where we have made the pullback map $f^*:\Omega^\bullet(X)\rightarrow\Omega^\bullet(X)$ on forms that comes with a 2-bundle homomorphism implicit. With $A' = \psi A = A(1)$ understood, the Jacobiator $\mu$ can be reconstructed from  \eqref{weak2hom} as
\begin{equation}
    \mu(A',A',A') =~ \circlearrowright\varphi(A,[A\wedge A])= 3!\varphi(A,[A\wedge A]),\nonumber
\end{equation}
where we have used the total skew-symmetry of $\varphi$.

This implies that the right-hand side should factor through the $\kappa$. More precisely, we should have
\begin{equation}
    \phi(\kappa(A)) = \varphi(A,[A\wedge A]),\nonumber
\end{equation}
such that
\begin{equation}
    \frac{1}{3!}\langle A',[A'\wedge A']\rangle = \frac{1}{3!}\mu(A',A',A') = \varphi(A,[A\wedge A]).\nonumber
\end{equation}
{This} is nothing but the statement that the 2-curvature anomaly $\kappa(A)$ coincides with $\frac{1}{3!}\omega(A',A',A')$. Indeed, one can show
\begin{equation}
    \int_0^1\langle p_1,\frac{d}{d\tau}[p_2,p_3]\rangle =\frac{1}{3!}\langle p_1(1),[p_2(1),p_3(1)]\rangle \nonumber
\end{equation}
for any $p_1,p_2,p_3\in P\g$ by performing an integration by parts then using the total skew-symmetry of the form $\langle\cdot,[\cdot,\cdot]\rangle$ \cite{Baez:2005sn}. The 2-homormophism $\Psi$ then implements
\begin{equation}
    K' - \frac{1}{3!}\mu(A',A',A') = \mathcal{G}' = \phi\mathcal{G} = \phi(K-k\omega(A,A,A)),\nonumber
\end{equation}
as desired.

\subsubsection{Anomaly Resolution for the String 2-Group}\label{string2resolve}
The formulation of the 2-gauge theory associated to the loop model $\mathfrak{l}_k$ presents an interesting possibility: which is that of absorbing the string structure anomaly by introducing a 3-gauge structure based on the differential 2-crossed-module
\begin{equation}
\mathbb{R}\cong\mathfrak{i}\xrightarrow{t'=\operatorname{id}}\widehat{\Omega_{k=1}\mathfrak{g}}\xrightarrow{t}P\mathfrak{g}\,.\label{string3alg}
\end{equation}
{Such} a 3-gauge theory (as described in Sections \ref{sec:gaug2gau} and \ref{2curvanomal}) would then host a real-valued 3-form connection $\Gamma$ which satisfies the so-called ``2-monopole condition''
\begin{equation}
    \int_V \frac{\mathsf{k}}{4\pi^2}\Gamma + \frac{1}{3!}\omega(A,A,A)=0,\qquad \forall ~\text{closed 3-surfaces }V\subset X\, \label{eq:2monocond}
\end{equation}
canceling out the Postnikov class anomaly arising from the crossed-module $\mathfrak{l}_k$, where $\mathsf{k}$ is to be understood as the quantized 2-monopole charge of the 3-gauge theory. 

If we view the Dixmier-Douady class $\omega$ as a mixed anomaly between the gauged 0-form $G$-symmetry and the 1-form $U(1)$-symmetry, then the 3-gauge structure induced by \eqref{string3alg} would allow us to gauge the anomalous categorical symmetry described by $\operatorname{string}_k$. This is an example of the {{symmetry enrichment}} procedure in the study of topological phases, and is also a higher analogue of the Green--Schwarz mechanism introduced in \cite{Benini_2019,Cordova:2018cvg}.


Importantly, the  2-monopole condition \eqref{eq:2monocond} can be interpreted as a certain \textit{{gauge-ability condition}} for the string structure defect. This condition enforces the consistent coupling between the anomalous higher-gauge symmetry $\mathfrak{l}_k$ and the 2-form $U(1)$-symmetry at degree-3 of \eqref{string3alg}. In analogy with the fact that a fermionic TQFT can be bosonized into a spin-TQFT \cite{Gaiotto:2015zta} by gauging a 1-form $\mathbb{Z}_2$-coupling, the 3-gauge structure \eqref{string3alg} suggests that a ``stringionic'' TQFT can also be made into a string-TQFT \cite{Debray:2023rlx} {(An invertible string-TQFT can be understood as a linear functional on the string-bordism group $\Omega^\text{String}_\ast$.)}, by gauging a 2-form $U(1)$-coupling. The consistency condition \eqref{eq:2monocond} is crucial in the construction of this~coupling.



\section{Conclusions}\label{conclusion}


As a concluding comment, we emphasize that the idea that higher-gauge principles can be applied to anomaly resolution and the Green--Schwarz mechanism has already appeared previously in the mathematical literature \cite{Schreiber:2013pra,Sati:2009ic,Fiorenza:2020iax}. However, this paper provides an explicit {ab ovo} 
 treatment on this topic, in the context of field theory. We showed that this allowed for concrete constructions in various applications in theoretical physics, and perfectly echoes the recently-popular ``generalized Landau paradigm''. 

For gauge anomalies in particular, our work emphasizes the central role played by the higher homotopy $L_\infty$-algebras in realizing (analogues of) the results in \cite{Cordova:2018cvg,Benini_2019}. We have also pinned down the dual symmetries/higher-Drinfel'd doubles present in the topological higher-BF theories that we constructed. 

A particularly interesting application of our methods was described in Section \ref{string2resolve}, where an anomalous 2-group symmetry governed by the \textit{{weak string 2-algebra}} can be gauged consistently into a non-anomalous mixed 0-, 1- and 2-form symmetry governed by a 3-group. All of the 3-gauge symmetries in this case, as well as the descent EOMs for the underlying background gauge fields, can be read directly from the explicit construction in Section \ref{sec:gaug2gau}.

Lastly, we make a brief technical observation regarding our findings in Section \ref{string2resolve}. The ``gauge-ability'' condition \eqref{eq:2monocond} we found cancels the 3-form obstruction (related to $\pi_3\operatorname{Spin}_n\cong\mathbb{Z}$) directly, instead of canceling the corresponding \textit{{4-form}} anomaly proportional to the instanton number {(The obstruction to the extension of a spin structure to a string structure is well-known to be equivalent to the fractional Pontrjagyn class $\frac{1}{2}p_1$} \cite{Sati:2008eg}. {By introducing an appropriate non-Abelian gauge coupling, one can express the Pontrjajyn class $p_1$  in terms of the second Chern class $c_2 = -\frac{1}{8\pi^2} \operatorname{tr}F\wedge F$.)} These approaches may be related through a higher analogue of the {{transgression}} $H^4(BG,U(1))\to H^3(G,U(1))$, but this is beyond the scope of this work.


\appendix

\section{Classification of Lie Algebra Crossed-Modules}\label{algxmod}
In this section we examine the classification of Lie algebra crossed-modules by Lie algebra cohomology, following  \cite{Wag:2006}. Recall that a given two Lie algebras $\mathfrak{h},\mathfrak{g}$ over a { {fixed}} field $k$ of characteristic zero, a Lie algebra crossed-module is a map  $t:\mathfrak{h}\rightarrow\mathfrak{g}$ and an action $\rhd$ of $\mathfrak{g}$ on $\mathfrak{h}$ such that the following { {Peiffer conditions}}
\begin{equation}
    t(X\rhd Y) = [X,tY]_\mathfrak{g},\qquad  tY\rhd Y' = [Y,Y']_\mathfrak{h}\label{eq:pfeif}
\end{equation}
are satisfied for each $Y,Y'\in\mathfrak{h},X\in\mathfrak{g}$. Mathematically, it is equivalent to a { {strict}} Lie 2-algebra {(Namely a two-term differential graded $L_\infty$-algebra.)}, where the homotopy map $\mu=0$ introduced in the main text vanishes.

Consider the following four-term algebra complex built from the Lie algebra crossed-module, 
\begin{equation}
    0 \rightarrow V \hookrightarrow \mathfrak{h}\xrightarrow{t}\mathfrak{g}\rightarrow \mathfrak{n}\rightarrow 0,\label{eq:ext}
\end{equation}
where $V = \operatorname{ker}t$ and $\mathfrak{n}=\operatorname{coker}t$. Due to the Peiffer identity in  (\ref{eq:pfeif}), the Lie algebra $V\subset Z(\mathfrak{h})$ must lie in the centre of $\mathfrak{h}$, and hence is Abelian. It admits an action by $\mathfrak{n}$ induced by the crossed-module action $\rhd$.

\begin{definition}\label{def:elemequiv}
We say that two crossed-modules $t:\mathfrak{h}\rightarrow\mathfrak{g},t':\mathfrak{h}'\rightarrow\mathfrak{g}'$ with the respective actions $\rhd,\rhd'$ are {\it elementary equivalent} if
\begin{enumerate}
    \item  $\operatorname{ker}t=\operatorname{ker}t'=V$ and $\operatorname{coker}t=\operatorname{coker}t' = \mathfrak{n}$,
    \item there exists Lie algebra homomorphisms $\phi:\mathfrak{h}\rightarrow\mathfrak{h}',\psi:\mathfrak{g}\rightarrow\mathfrak{g}'$ compatible with the actions $\rhd,\rhd'$ such that
    \begin{equation}
        \phi(X\rhd Y) = \psi(X)\rhd' \phi(Y)\nonumber
    \end{equation}
    for all $X\in\mathfrak{g}$ and $Y\in\mathfrak{h}$. Moreover, the diagram \begin{equation}
        \begin{tikzcd}
            &                         & \mathfrak{h} \arrow[r, "t"] \arrow[dd, "\phi"] & \mathfrak{g} \arrow[rd] \arrow[dd, "\psi"] &                        &   \\
0 \arrow[r] & V \arrow[ru] \arrow[rd] &                                                &                                            & \mathfrak{n} \arrow[r] & 0 \\
            &                         & \mathfrak{h}' \arrow[r, "t'"]                  & \mathfrak{g}' \arrow[ru]                   &                        &  
\end{tikzcd}\nonumber
    \end{equation}
    commutes.
\end{enumerate}
\end{definition}
Let us denote the set of elementary equivalence classes of Lie algebra crossed-modules {by} 
 ${\bf XMod}(\mathfrak{n},V)$.

\subsection{Lie Algebra Cohomology}\label{liealgcohomo}
We first review some basic facts about Lie algebra cohomology, which is  a very powerful and important tool for classification of $L_\infty$-algebras. We once again follow the treatment of \cite{Wag:2006}.

Let $\mathfrak{n}$ be a Lie algebra over the field $k$ and let $V$ be an Abelian $\mathfrak{n}$-module. Define its differential graded { {Chevalley--Eilenberg complex}} 
\begin{equation}
    (C^\bullet(\mathfrak{n},V),d),\qquad C^p(\mathfrak{n},V) = \begin{cases} \Lambda(\mathfrak{n}^p,V) &; p>0 \\ V &; p=0\end{cases},\nonumber
\end{equation}
where $\Lambda(\mathfrak{n}^p,V)$ denotes the exterior algebra of alternating forms on $p$-copies of $\mathfrak{n}$ over $V$. The differential $d:C^p(\mathfrak{n},V)\rightarrow C^{p+1}(\mathfrak{n},V)$ is given explicitly by
\begin{eqnarray}
    dc(x_0,\dots,x_p) &=& \sum_{i<j}(-1)^{i+j}c([x_i,x_j],x_0,\dots,\hat{x}_i,\dots,\hat{x}_j,\dots,x_p)\nonumber \\
    &\quad& - \sum_{i=1}^p(-1)^ix_i\rhd c(x_0,\dots,\hat{x}_i,\dots,x_p)\nonumber
\end{eqnarray}
for each cochain $c\in C^p(\mathfrak{n},V)$, where $\hat{\cdot}$ denotes an omitted element.

\begin{lemma}
$d^2=0$.
\end{lemma}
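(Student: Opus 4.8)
The plan is to verify the identity $d^2=0$ for the Chevalley--Eilenberg differential by a direct, if somewhat lengthy, combinatorial computation. First I would write out $d(dc)$ for an arbitrary cochain $c\in C^p(\mathfrak{n},V)$ evaluated on $p+2$ arguments $x_0,\dots,x_{p+1}$, using the defining formula for $d$ twice. This produces four types of terms: (i) terms with two nested brackets $[[x_i,x_j],x_k]$ acting inside $c$, coming from applying the bracket-part of $d$ twice; (ii) terms of the form $x_k\rhd\bigl([x_i,x_j]\rhd$-nothing$\bigr)$, i.e.\ one bracket and one action; (iii) terms with a single bracket $[x_i,x_j]$ but where the action from the outer $d$ hits the argument that was newly inserted; and (iv) terms with two actions $x_k\rhd(x_l\rhd c(\dots))$. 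The goal is to show each group cancels among itself, or in pairs.

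The key steps, in order: (1) collect the type-(i) terms and observe that for each unordered triple $\{i,j,k\}$ the three cyclic arrangements appear with signs arranged exactly so that the Jacobi identity $[[x_i,x_j],x_k]+[[x_j,x_k],x_i]+[[x_k,x_i],x_j]=0$ in $\mathfrak{n}$ forces their sum to vanish; (2) collect the type-(iv) terms and observe that for each pair $\{k,l\}$ the two orderings $x_k\rhd(x_l\rhd c(\dots))$ and $x_l\rhd(x_k\rhd c(\dots))$ appear with opposite overall sign, so their difference is $[x_k,x_l]\rhd c(\dots)$ by the module axiom (the second 2-Jacobi identity, with $\mu=0$, i.e.\ the statement that $V$ is a genuine $\mathfrak{n}$-module); (3) show that this leftover $[x_k,x_l]\rhd c(\dots)$ from step (2) cancels precisely against the type-(ii)/(iii) cross-terms, where a bracket $[x_k,x_l]$ produced by the inner $d$ is then acted upon, or acts, via the outer $d$. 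The bookkeeping of signs $(-1)^{i+j}$ versus $(-1)^{i}$ and the shifts in index positions caused by the omitted arguments $\hat{x}_i$ is the crux; one standard device is to fix the positions of the surviving arguments and track how many omitted slots lie to their left.

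The main obstacle I anticipate is purely the sign/index accounting in step (3): when the inner differential replaces $(x_i,x_j)$ by the single entry $[x_i,x_j]$, the position of that new entry in the list shifts all subsequent indices, and the outer differential's sign conventions must be re-expressed in terms of the \emph{original} indices before one can see the cancellation. This is the place where a careless treatment silently drops or double-counts terms. I would handle it by the familiar trick of splitting $d^2c$ into the ``bracket-bracket,'' ``bracket-action,'' and ``action-action'' pieces and checking antisymmetry of each piece's summand under the relevant transposition of indices, rather than trying to pair terms by hand.

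I should note that this lemma is entirely standard --- it is the statement that the Chevalley--Eilenberg complex of a Lie algebra with coefficients in a module is indeed a cochain complex --- so an alternative and shorter route is simply to cite the standard reference (e.g.\ \cite{Wag:2006} or any text on Lie algebra cohomology) and remark that the verification is a direct computation using the Jacobi identity in $\mathfrak{n}$ together with the $\mathfrak{n}$-module axioms on $V$. Given the pedagogical tone of the paper, I would present the organizing principle (the three-way split above) and leave the sign manipulation to the reader or to the cited literature.
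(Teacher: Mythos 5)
Your plan is correct, but it takes a genuinely different route from the paper. You propose the direct combinatorial verification: expand $d(dc)$ on $p+2$ arguments, split the result into bracket--bracket, bracket--action, and action--action pieces, and cancel them using respectively the Jacobi identity in $\mathfrak{n}$, the index/sign bookkeeping of the omitted slots, and the $\mathfrak{n}$-module axiom on $V$. The paper instead argues by induction on the degree $p$ using the Cartan formula $L_x = d\iota_x + \iota_x d$: the base case $p=0$ is exactly the module axiom $x_1\rhd(x_2\rhd v)-x_2\rhd(x_1\rhd v)=[x_1,x_2]\rhd v$, and the inductive step writes $\iota_{x_{-1}}(d^2c) = (L_{x_{-1}}d - dL_{x_{-1}} + d^2\iota_{x_{-1}})c$, which vanishes by the commutation $[L_x,d]=0$ and the induction hypothesis applied to $\iota_{x_{-1}}c\in C^{p-1}(\mathfrak{n},V)$. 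The trade-off is real: your route is self-contained but forces you through precisely the sign gymnastics you flag as the danger point (and which you do not actually carry out), whereas the paper's route compresses all of that combinatorics into the two identities $L_x = d\iota_x+\iota_x d$ and $L_xd=dL_x$, which it asserts "by construction" --- the Jacobi identity is hidden inside the latter. Your fallback of citing the standard literature is also legitimate, since the lemma is the textbook statement that the Chevalley--Eilenberg complex is a complex; but if you do write out the direct proof, the cancellation in your step (3) is where you must be careful to re-index after the inner differential collapses two slots into one, exactly as you anticipate.
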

\begin{proof}
Recall the { {Cartan formula}}
\begin{equation}
    L_x = d\iota_x + \iota_x d,\qquad x\in\mathfrak{n}\nonumber
\end{equation}
where $\iota_x:C^{p+1}(\mathfrak{n},V)\rightarrow C^p(\mathfrak{n},V)$ is the interior evaluation
\begin{equation}
    \iota_x:c \mapsto ((x_1,\dots,x_p)\mapsto c(x,x_1,\dots,x_p)) \nonumber
\end{equation}
and $L_x: C^p(\mathfrak{n},V)\rightarrow C^p(\mathfrak{n},V)$ is the Lie evaluation
\begin{equation}
    L_x:c\mapsto ((x_1,\dots,x_p)\mapsto x\rhd c(x_1,\dots,x_p) - \sum_i c(x_1,\dots,[x,x_i],\dots,x_p)),\nonumber
\end{equation}
which by construction commutes with $d$. Now let $v\in V = C^0(\mathfrak{n},V)$ be a 0-form, then
\begin{eqnarray}
    d^2v(x_1,x_2) &=& -dv([x_1,x_2]) + x_1\rhd dv(x_2) - x_2\rhd dv(x_1) \nonumber \\
    &=& [x_2,x_1]\rhd v + x_1\rhd (x_2 \rhd v) - x_2\rhd (x_1\rhd v) = 0,\nonumber
\end{eqnarray}
which vanishes by the $\mathfrak{n}$-module structure on $V$. 

Now let $p>0$ and assume the induction hypothesis: $d^2=0$ on $C^{p-1}(\mathfrak{n},V)$. Consider $c\in C^p(\mathfrak{n},V)$, then by the Cartan formula
\begin{eqnarray}
    d^2c(x_{-1},x_0,x_1,\dots,x_p) &=& \iota_{x_{-1}}(d^2c)(x_0,x_1,\dots,x_p) \nonumber\\
    &=& (L_{x_{-1}} - d\iota_{x_{-1}})dc(x_0,x_1,\dots,x_p) \nonumber \\
    &=& (L_{x_{-1}}d - d(L_{x_{-1}} - d\iota_{x_{-1}})c(x_0,x_1,\dots,x_p) \nonumber \\
    &=& (L_{x_{-1}}d - dL_{x_{-1}} + d^2\iota_{x_{-1}})c(x_0,x_1,\dots,x_p) = 0,\nonumber
\end{eqnarray}
where the first two terms cancel by the property $L_xd=dL_x$, and the last term vanishes due to the induction hypothesis (recall $\iota_{x_{-1}}c\in C^{p-1}(\mathfrak{n},V)$).
\end{proof}

This nilpotency allows us to define the { {Lie algebra cohomology}}
\begin{equation}
    H^\bullet(\mathfrak{n},V) = \operatorname{ker}d/\operatorname{im}d.\nonumber
\end{equation}
{These} groups are extremely useful, as they are isomorphic to the de Rham cohomology of the topological group $G$ \cite{book-loop}. Moreover, they classify various algebraic structures; for~instance,
\begin{enumerate}
    \item { {Degree $p=0$}}: the group $H^0(\mathfrak{n},V) = V^{\mathfrak{n}}\subset V$ classifies the $\mathfrak{n}$-invariants: namely elements $v\in V$ annihilated by $\mathfrak{n}$ via the action $\rhd$. Indeed, the 0-cocycle condition merely states
    \begin{equation}
        dv(x) = x\rhd v = 0,\qquad v\in V=C^0(\mathfrak{n},V),\nonumber
    \end{equation}
    which means that $v\in Z^0(\mathfrak{n},V)$ is $\mathfrak{n}$-invariant.
    \item { {Degree $p=1$}}: the group $H^1(\mathfrak{n},V)$ classifies algebra representations of $\mathfrak{n}$ on $V$ (i.e., derivations $\operatorname{Der}_\mathfrak{n}(V)$) modulo inner representations. Indeed, the 1-cocycle condition~reads
    \begin{equation}
        dc(x_1,x_2) = c([x_1,x_2]) - x_1\rhd c(x_2) + x_2\rhd c(x_1) = 0,\nonumber
    \end{equation}
    which implies that $c\in Z^1(\mathfrak{n},V)$ is a linear representation of $\mathfrak{n}$ on $V$. The 1-coboundaries are inner derivations $c(x) = dv(x) = x\rhd v$ for some $v\in V=C^0(\mathfrak{n},V)$. If $\mathfrak{n}$ acts trivially on $V$, then $H^1(\mathfrak{n},V)$ is in fact isomorphic to the (dual of the) Abelianization $\mathfrak{n}/[\mathfrak{n},\mathfrak{n}]$.
    \item { {Degree $p=2$}}: the group $H^2(\mathfrak{n},V)$ classifies { {central extensions}} $\widehat{\mathfrak{n}}$ of $\mathfrak{n}$ by $V$, which fits in the three-term exact sequence
    \begin{equation}
        0\rightarrow V\rightarrow \widehat{\mathfrak{n}}\rightarrow\mathfrak{n}\rightarrow 0.\nonumber
    \end{equation}
    To see this at a glance, a { {set-theoretic}} section $s:\mathfrak{n}\rightarrow\widehat{\mathfrak{n}}$ sees an obstruction to being a { {Lie algebra-theoretic}} section given by
    \begin{equation}
        c(x_1,x_2) = s([x_1,x_2]) - [s(x_1),s(x_2)].\nonumber
    \end{equation}
    It can be shown, with the $\mathfrak{n}$-module structure of $V$ and the Jacobi identity, that \linebreak$c\in Z^2(\mathfrak{n},V)$ is a 2-cocycle, and any two choices of such sections $s$ yields 2-cocycles $c,c'$ that differ by a 2-coboundary $c-c'=da$.
\end{enumerate}

{In} 
 general, the set $H^p(\mathfrak{n},V)$ classifies ($p+1$)-term extensions of $\mathfrak{n}$ by $V$. Moreover, equivalence classes of such extensions can be equipped with an Abelian group structure such that $H^p(\mathfrak{n},V)$ coincides with it not just as a set, but also as a group.

\begin{remark}\label{deg4class}
Recall a 2-crossed-module $\mathscr{G}$ as defined in Section \ref{sec:gaug2gau}. The exactness $t't=0$ of the complex  \eqref{eq:3comp} states that $\mathscr{G}$ gives rise to a 5-term exact sequence
\begin{equation}
    0\rightarrow V=\operatorname{ker}t'\rightarrow\underbrace{ \mathfrak{i}\xrightarrow{t'}\h\xrightarrow{t}\g}_{=\mathscr{G}}\rightarrow\mathfrak{n}=\operatorname{coker}t\rightarrow 0\nonumber
\end{equation}
of Lie algebras, which by the above statement is classified by a degree-4 Lie algebra cohomology class $H^4(\mathfrak{n},V)$. This class has also appeared as part of the data that classifies {\it crossed-squares} \cite{Brown}; indeed, each crossed-square has an associated 2-crossed-module \cite{Ang2018}.
\end{remark}

We shall show in detail next that, at degree 3, $H^3(\mathfrak{n},V)$ classifies precisely the four-term complex  (\ref{eq:ext}) of a Lie algebra crossed-module.

\subsection{Theorem of Gerstenhaber}
Before constructing the 3-cocycle $c\in Z^3(\mathfrak{n},V)$, we introduce the notion of addition in the set of crossed-modules. Given two crossed-modules $t:\mathfrak{h}\rightarrow\mathfrak{g},t':\mathfrak{h}'\rightarrow\mathfrak{g}'$ with the same kernel $V$ and cokernel $\mathfrak{n}$, it can be shown that
\begin{equation}
    (t\oplus t'): \mathfrak{h}\oplus\mathfrak{h}'/\, \overline\Delta\rightarrow \mathfrak{g}\oplus_\mathfrak{n}\mathfrak{g}'\nonumber
\end{equation}
is another crossed-module, called the { {crossed-module sum}} of $t$ and $t'$. Here, $\overline\Delta$ is the kernel of the addition map $+:V\oplus V\rightarrow V$, while $\mathfrak{g}\oplus_\mathfrak{n}\mathfrak{g}'$ is the fibre pullback; explicitly, $$\overline\Delta = \{(v,-v)\mid v\in V\},\qquad \mathfrak{g}\oplus_\mathfrak{n}\mathfrak{g}' = \{(X,X')\in\mathfrak{g}\oplus\mathfrak{g}'\mid pX=p'X'\}.$$ 
{Note} that as direct sums are commutative, we have $(t\oplus t') \cong (t'\oplus t)$.

This notion descends to elementary equivalence classes of crossed-modules, and endows the set ${\bf XMod}(\mathfrak{n},V)$ with the structure of an Abelian group. We shall show that this Abelian group is isomorphic precisely to $H^3(\mathfrak{n},V)$. To begin, we construct a bilinear skew-symmetric map
\begin{equation}
    f(x_1,x_2) = s_1([x_1,x_2]) - [s_1(x_1),s_1(x_2)],\qquad x_1,x_2 \in \mathfrak{n} \nonumber
\end{equation}
from a section $s_1:\mathfrak{n}\rightarrow\mathfrak{g}$ of the map $p:\mathfrak{g}\rightarrow\operatorname{coker}t=\mathfrak{n}$ in  (\ref{eq:ext}). Though $s_1$ may not be a Lie algebra map, the projection $p$ is, so $pf = 0$ and $f$ is valued in $\operatorname{ker}p$. By the exactness $\operatorname{ker}p = \operatorname{im}t$ of  (\ref{eq:ext}), there exists a bilinear skew-symmetric map $e:\mathfrak{n}^{\wedge 2}\rightarrow \mathfrak{h}$ such that $f=te$.

We now pick another section $s_2:\operatorname{im}t\subset \mathfrak{g}\rightarrow \mathfrak{h}$ of the crossed-module map $t:\mathfrak{h}\rightarrow\mathfrak{g}$, whence $e=s_2f$. Let $\circlearrowright$ denote a summation over cyclic permutations of $x_1,x_2,x_3$, then by construction,
\begin{eqnarray}
    tde(x_1,x_2,x_3) &=& t\left[\circlearrowright e([x_1,x_2],x_3) - \circlearrowright s_1(x_1)\rhd e(x_2,x_3)\right]\nonumber \\
    &=& \circlearrowright f([x_1,x_2],x_3) - \circlearrowright t(s(x_1)\rhd e(x_2,x_3))\qquad \text{Peiffer conditions  (\ref{eq:pfeif})} \nonumber \\
    &=& \circlearrowright f([x_1,x_2],x_3) - \circlearrowright [s_1(x_1),\underbrace{te(x_2,x_3)}_{=f(x_2,x_3)}]\qquad \text{Definition of $f$} \nonumber \\
    &=& \circlearrowright \left([s_1([x_1,x_2]),s_1(x_3)] - s_1([[x_1,x_2],x_3])\right) \nonumber \\
    &\quad& - \circlearrowright\left([s_1(x_1),[s_1(x_2),s_1(x_3)]] - [s_1(x_1),s_1([x_2,x_3])]\right) \qquad \text{Jacobi identity}\nonumber \\
    &=& \circlearrowright\left( [s_1([x_1,x_2]),s_1(x_3)] - [s_1([x_2,x_3]),s(x_1)]\right)\qquad \text{Cyclicity}\nonumber  \\
    &=&0,\nonumber
\end{eqnarray}
as such $de$ is in fact valued in $\operatorname{ker}t$. Again by the exactness of the sequence  (\ref{eq:ext}) we may find a skewsymmetric trilinear map $c:\mathfrak{n}^{\wedge 3}\rightarrow V$ such that $ic = de$, where $i:V\hookrightarrow \mathfrak{h}$ is the inclusion. Picking yet another section $s_3:\mathfrak{h}\rightarrow V$ yields $c=s_3De$.

Now we must show that $dc=0$. It may be tempting to say that, since $ic=de$, we have $idc = dic = d^2e=0$ by the nilpotency $d^2=0$. However, this does not immediately follow, as $s_1$ is not necessarily a section and hence $s_1(\cdot)\rhd$ is not necessarily a well-defined action. By explicit computation, terms involving the problematic operation $s_1(\cdot) \rhd$ in $idc$ read
\begin{eqnarray}
    &\quad& \sum_{i<j}(-1)^{i+j}s_1([x_i,x_j])\rhd e(x_1,\dots,\hat{x}_i,\dots,\hat{x}_j,x_4) \nonumber \\
    &\qquad\quad & - \sum_{i=1}^4 (-1)^is_1(x_i)\rhd \left[\sum_{j\neq i} (-1)^j s_1(x_j)\rhd e(x_1,\dots,\hat{x}_j,\dots,x_3)\right] \qquad \text{Rearrange}\nonumber \\
     &=&\sum_{i<j}(-1)^{i+j}\left(s_1([x_i,x_j]) - [s_1(x_i),s_1(x_j)]\right)\rhd e(x_1,\dots,\hat{x}_i,\dots,\hat{x}_j,x_4)\nonumber\\
     &\qquad &\qquad\qquad\qquad\qquad\qquad\qquad\qquad\qquad\qquad\qquad\qquad\qquad\text{Definition of $f$}\nonumber \\
     &=&\sum_{i<j}(-1)^{i+j}\underbrace{f(x_i,x_j)}_{=te(x_i,x_j)}\rhd e(x_1,\dots,\hat{x}_i,\dots,\hat{x}_j,x_4) \qquad \text{Peiffer condition}\nonumber \\
     &=& \sum_{i<j}(-1)^{i+j}[e(x_i,x_j),e(x_1,\dots,\hat{x}_i,\dots,\hat{x}_j,x_4)] \qquad \text{Cyclicity of summation}\nonumber \\
     &=&0, \nonumber
\end{eqnarray}
hence we nevertheless have $dc=0$. This allows us to conclude that $c\in Z^3(\mathfrak{n},V)$.

We now wish to show that changing the choices of the sections $s_{1,2,3}$ adds to $c$ a 3-coboundary. By linearity, we can write $s_1' = s_1 + \delta$ for some map $\delta:\mathfrak{n}\rightarrow \mathfrak{g}$. Defining a bilinear skew-symmetric map $f'$ analogously, we see that
\begin{equation}
    f'(x_1,x_2) = f(x_1,x_2) + [s_1(x_1),\delta(x_2)] + [\delta(x_1),s_1(x_2)] + [\delta(x_1),\delta(x_2)] - \delta([x_1,x_2]).\nonumber
\end{equation}
{Notice} the terms $[s_1(x_1),\delta(x_2)] + [\delta(x_1),s_1(x_2)] - \delta([x_1,x_2])$ constitute precisely the coboundary $d\delta(x_1,x_2)$ of a cochain $\delta:\mathfrak{n}\rightarrow\mathfrak{g}$, with $x_1,x_2\in\mathfrak{n}$ lifted up to $\mathfrak{g}$ by the map $s_1$.

Now as $f',f$ are valued in $\operatorname{ker}p=\operatorname{im}t$, we can find $\mathfrak{h}$-valued bilinear maps $\epsilon,\varepsilon$ such that $t\epsilon(x_1,x_2) = d\delta(x_1,x_2)$ and $t\varepsilon(x_1,x_2) = [\delta(x_1),\delta(x_2)]$. Further, we can also find a $\operatorname{ker}t=\operatorname{im}i$-valued bilinear map $\varphi$ such that 
\begin{equation}
    e'(x_1,x_2) = e(x_1,x_2) + \epsilon(x_1,x_2) + \varepsilon(x_1,x_2) + i\varphi(x_1,x_2)\nonumber
\end{equation}
when lifted by $s_2$. Our goal now is to apply the differential $d$; however, the trouble here is that $d$ and $s_2$ need not commute, as $s_2$ is not in general a section. Now by computation
\begin{eqnarray}
    tds_2\delta(x_1,x_2) &=& t(s_1(x_1)\rhd s_2\delta(x_2) + s_1(x_2)\rhd s_2\delta(x_1) - s_2\delta([x_1,x_2])),\nonumber\\
    &\qquad& \qquad\qquad\qquad\qquad\qquad\qquad\qquad\qquad\qquad \text{Peiffer condition}\nonumber \\
    &=& ts_2([s_1(x_1),\delta(x_2)]-[s_1(x_2),\delta(x_1)] - \delta([x_1,x_2]))\nonumber\\
    &=& ts_2d\delta(x_1,x_2),\nonumber
\end{eqnarray}
so $\Delta_1=ds_2\epsilon - s_2d\epsilon$ is valued in $\operatorname{ker}t$. Similarly, the difference $\Delta_2=ds_2\varepsilon - s_2d\varepsilon$ also lies in $\operatorname{ker}t$, which allows us to finally write 
\begin{equation}
    c'(x_1,x_2,x_3) = c(x_1,x_2,x_3) + d\epsilon(x_1,x_2,x_3) + d\varepsilon(x_1,x_2,x_3) + i\left(\Delta_1 + \Delta_2\right)(x_1,x_2,x_3) + di\varphi(x_1,x_2,x_3).\nonumber
\end{equation}
{Using} the injectivity of $i$, we have $di\varphi = i(d|_V \varphi)$, hence defining $\sigma=\epsilon+\varepsilon$ and $\gamma=\Delta_1+\Delta_2 +d|_V\varphi$ yields
\begin{equation}
    c' = c + d\sigma + i\gamma = c+d\sigma \mod\operatorname{ker}t,\nonumber
\end{equation}
whence lifting by $s_3$ up to $V$ yields $c'=c+d\sigma$. This shows that the cohomology class of $c$ does not depend on the choice of the section $s_1$.

Now suppose we have distinct sections $s_2,s_2'$, defining $e=s_2f$ and $e'=s_2'f$. It is clear that $t(e-e') = ts_2f-ts_2'f=f-f=0$, hence $e-e'$ is valued in $\operatorname{ker}t=\operatorname{im}i$. This means that $s_3$ lifts $d(e-e')$ to a coboundary $d\omega$ such that $c'=c+d\omega$, demonstrating that the cohomology class of $c$ does not depend on the choice of the section $s_2$ as well. Lastly, any two sections $s_3,s_3'$ must coincide, at least on the image $\operatorname{im}i=\operatorname{ker}t$, hence the { {cocycle itself}} $c$ does not depend on the choice of $s_3$.

\begin{lemma}
Let $t,t'$ denote two elementary equivalent crossed-modules, then the 3-cocycles $c,c'$ they define coincide $[c]=[c']\in H^3(\mathfrak{n},V)$ in cohomology.
\end{lemma}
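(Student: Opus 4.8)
The plan is to exploit the fact that the 3-cocycle construction just carried out is, up to coboundary, natural with respect to crossed-module morphisms, and then to reduce the general elementary equivalence to this naturality statement. First I would unwind Definition~\ref{def:elemequiv}: an elementary equivalence between $t:\mathfrak{h}\to\mathfrak{g}$ and $t':\mathfrak{h}'\to\mathfrak{g}'$ consists of compatible Lie algebra homomorphisms $\phi:\mathfrak{h}\to\mathfrak{h}'$, $\psi:\mathfrak{g}\to\mathfrak{g}'$ intertwining $t,t'$ and $\rhd,\rhd'$, and inducing the identity on $V=\operatorname{ker}t=\operatorname{ker}t'$ and on $\mathfrak{n}=\operatorname{coker}t=\operatorname{coker}t'$. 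Thus it suffices to show that if $(\phi,\psi)$ is such a morphism, then the cocycle $c'$ built from $t'$ using \emph{any} choice of sections equals, in $H^3(\mathfrak{n},V)$, the cocycle $c$ built from $t$; since we already proved $[c]$ is independent of the sections $s_1,s_2,s_3$, we are free to choose sections for $t'$ that are compatible with those for $t$.

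The key steps, in order: (1) Fix sections $s_1:\mathfrak{n}\to\mathfrak{g}$, $s_2:\operatorname{im}t\to\mathfrak{h}$, $s_3:\mathfrak{h}\to V$ for $t$, producing $f=s_1[\cdot,\cdot]-[s_1\cdot,s_1\cdot]$, $e$ with $te=f$, and $c$ with $ic=de$. (2) Define sections for $t'$ by transport: set $s_1'=\psi\circ s_1$ (legitimate since $\psi$ covers the identity on $\mathfrak{n}$, i.e. $p'\circ\psi=p$), and similarly choose $s_2'$ so that $s_2'\circ(t'\text{-part})$ is compatible with $\phi\circ s_2$, using that $\phi\circ t=t'\circ\phi$ and $\phi|_V=\operatorname{id}$. (3) Chase the definitions: because $\psi$ is a Lie algebra homomorphism intertwining brackets and the action, one gets $f'=\psi\circ f$, hence $e'$ and $\phi\circ e$ both lift $f'$ along $t'$ and so differ by a $\operatorname{ker}t'$-valued term; by the section-independence already established this ambiguity is a coboundary. (4) Apply $d$ and use $\phi|_V=\operatorname{id}_V$: since $ic=de$ and $\phi$ intertwines the Chevalley--Eilenberg differentials (its compatibility with $\rhd,\rhd'$ is exactly what makes $\phi_\ast$ a chain map on $C^\bullet(\mathfrak{n},-)$), we obtain $i'c'=de'$ with $c'=c$ modulo $d(\text{something})$, i.e. $[c']=[c]$. (5) Finally, since addition of crossed-modules ${\bf XMod}(\mathfrak{n},V)$ and of cohomology classes are both handled by the constructions above, conclude the assignment $t\mapsto[c]$ is well-defined on elementary equivalence classes.

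The main obstacle I anticipate is step~(3)--(4): the maps $s_1,s_2,s_3$ are only \emph{set-} or \emph{linear-theoretic} sections, not Lie algebra maps, so one cannot simply say "$\phi$ commutes with everything." The care needed is precisely the care that already appeared in proving $dc=0$ and in the section-independence argument --- one must track the terms involving the non-homomorphic operation $s_1(\cdot)\rhd$ and verify that the discrepancies $\phi\circ s_i - s_i'\circ(\cdot)$ are controlled, landing in $\operatorname{ker}t'$ (respectively $\operatorname{ker}t$), so that after lifting by $s_3'$ they contribute only coboundaries. Once one observes that the compatibility conditions in Definition~\ref{def:elemequiv} are tailored exactly so that $(\phi,\psi)$ descends to a chain map $C^\bullet(\mathfrak{n},V)\to C^\bullet(\mathfrak{n},V)$ covering $\operatorname{id}_V$, the cohomological conclusion $[c]=[c']$ is immediate; the bookkeeping of the non-linearity of the sections is the only genuinely laborious part, and it parallels the computations already displayed above.
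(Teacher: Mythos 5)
Your proposal is correct and follows essentially the same route as the paper's proof: transport the sections along $(\phi,\psi)$ (using that $\psi s_1$ is again a section of the projection to $\mathfrak{n}$), observe that $\phi e$ and the transported lift both cover $\psi f$ along $t'$ so their difference is $\operatorname{ker}t'$-valued, and then use the intertwining $\phi(x\rhd e)=\psi(x)\rhd'\phi(e)$ to commute $\phi$ past the differential and exhibit $c'-c$ as the coboundary of that difference. The subtlety you flag — that the $s_i$ are only linear sections, so the non-homomorphic terms must be tracked and shown to land in the kernel — is exactly the point the paper's computation addresses.
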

\begin{proof}
First, pick sections $s_{1,2,3},s_{1,2,3}'$ in the respective crossed-modules  $t,t'$ and construct the 3-cocycles $c,c'\in C^3(\mathfrak{n},V)$. Suppose an elementary equivalence $(\phi,\psi)$ between the two crossed-modules exists, then $\psi s_1$ is a section of $p'$. The above shows that the 3-cocycle $\tilde c'$ constructed from the sections $(\psi s_1,s_2',s_3')$ differ from that $c'$ constructed from $(s_1',s_2',s_3')$ only by a coboundary. Our task is thus to show that $\tilde c'$ also coincides with $c$ up to coboundary.

Toward this, we define $s_2'\psi f\equiv \tilde e'$ and compare this to $\phi e = \phi s_2 f$. First, we know that $t's_2'=1$, hence $\tilde e' - \phi e$ is valued in $\operatorname{ker}t'=\operatorname{im}i'$, so we can find a map $v:\mathfrak{n}^{\wedge 2}\rightarrow V$ such that $\tilde e'-\phi e = i'v$. 

We now take the differential $d$ of this equation. By definition of the elementary equivalence, we can rewrite contributions $\psi(x_i)\rhd \phi(e) = \phi(x_i\rhd e)$ in the differential, as such $d(\phi e)= \phi de$. Now $s_3\phi$ is a section of $i'$, hence 
\begin{equation}
    \tilde c' - c = s_3D \tilde e' - (s_3 \phi)de = dv\nonumber
\end{equation}
is a coboundary. This proves the lemma.
\end{proof}

The lemma allows us to put a well-defined map $b:{\bf XMod}(\mathfrak{n},V)\rightarrow H^3(\mathfrak{n},V)$. 
\begin{theorem}[Gerstenhaber, attr. by MacLane]\label{thm:gers}
 $b$ is an isomorphism of Abelian groups.
\end{theorem}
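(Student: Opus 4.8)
The plan is to exhibit an inverse map $H^3(\mathfrak n,V)\to{\bf XMod}(\mathfrak n,V)$ by an explicit construction, show it is a two-sided inverse of $b$, and finally check that $b$ is additive with respect to the crossed-module sum. First I would build, from a given 3-cocycle $c\in Z^3(\mathfrak n,V)$, a concrete crossed-module $t_c:\mathfrak h_c\to\mathfrak g_c$ realizing $c$. The standard device is to take a free presentation: choose a free Lie algebra $\mathfrak f$ surjecting onto $\mathfrak n$ with kernel $\mathfrak r$, set $\mathfrak g_c=\mathfrak f$, and let $\mathfrak h_c=(\mathfrak r/[\mathfrak r,\mathfrak r])\oplus V$ as a vector space. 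The map $t_c$ sends $(\bar r,v)\mapsto r$ (using that $\mathfrak r\hookrightarrow\mathfrak f$), so $\operatorname{coker}t_c=\mathfrak n$ and $\operatorname{ker}t_c=V$ automatically; the $\mathfrak g_c$-action $\rhd$ on $\mathfrak h_c$ is the adjoint action on the $\mathfrak r/[\mathfrak r,\mathfrak r]$-component and, on the $V$-component, is the given $\mathfrak n$-module structure pulled back along $\mathfrak f\to\mathfrak n$. The Peiffer identity holds because $(t_c(\bar r,v))\rhd(\bar r',v')=[\bar r,\bar r']$ is forced on $\mathfrak r/[\mathfrak r,\mathfrak r]$, and $V$ being central in $\mathfrak h_c$ is built in. The only nontrivial input is how the cocycle $c$ enters: it determines the Lie bracket on $\mathfrak h_c$ twisting the naive one, exactly mirroring the computation in the body where $de$ landed in $\operatorname{ker}t$ and was named $c$ — here we run that computation in reverse, using $dc=0$ to verify the Jacobi identity on $\mathfrak h_c$. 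I would then check that $b([t_c])=[c]$ by choosing the obvious sections $s_1$ (a vector-space splitting of $\mathfrak f\to\mathfrak n$), $s_2$, $s_3$ and reading off that the resulting 3-cocycle is $c$ itself.

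Next I would verify that this assignment $c\mapsto[t_c]$ descends to cohomology and equivalence classes, i.e. that $c\mapsto c+d\sigma$ yields an elementarily equivalent crossed-module. This is the mirror image of the argument already carried out in the excerpt (the long computation showing that changing sections $s_{1,2,3}$ changes $c$ by a coboundary): given $\sigma\in C^2(\mathfrak n,V)$, one builds the explicit isomorphisms $\phi,\psi$ implementing Definition~\ref{def:elemequiv} by modifying the bracket/action data of $\mathfrak h_c$ using $\sigma$, and checks the square commutes. Combined with the previous paragraph, this shows $b$ is surjective and that the constructed map is a right inverse. For injectivity of $b$, suppose $t:\mathfrak h\to\mathfrak g$ gives $[c]=0$; then after choosing sections the cocycle $c=d\sigma$ is a coboundary, and one must show $t$ is elementarily equivalent to the split (trivial-bracket) crossed-module with kernel $V$ and cokernel $\mathfrak n$, i.e. to $t_0$. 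The Lemma already proved (elementarily equivalent crossed-modules give cohomologous cocycles) gives one direction; for the converse one uses the freedom in the sections, together with $c=d\sigma$, to construct a Lie-algebra isomorphism $\mathfrak h\xrightarrow{\sim}\mathfrak h_0$ compatible with the actions — this is where the bulk of the bookkeeping lives.

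The last step is to check that $b$ is a homomorphism, $b([t]\oplus[t'])=b([t])+b([t'])$, which is needed for the statement ``isomorphism of Abelian groups'' rather than merely a bijection. Here I would take sections $s_{1,2,3}$ for $t$ and $s_{1,2,3}'$ for $t'$, then the pair $(s_1\oplus_\mathfrak n s_1')$ descends to a section of $p$ for the crossed-module sum $(t\oplus t')$ on the Baer-type construction $\mathfrak h\oplus\mathfrak h'/\overline\Delta\to\mathfrak g\oplus_\mathfrak n\mathfrak g'$; unwinding the definitions of $f,e,c$ for the sum, and using the $\overline\Delta$-quotient to add the $V$-components, one reads off $c_\oplus=c+c'$ up to a coboundary. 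Additivity of the zero element and of inverses follows formally, as does commutativity (already noted: $t\oplus t'\cong t'\oplus t$).

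\textbf{Main obstacle.} The hardest part is the injectivity argument: producing, from the sole hypothesis $[c]=0$, an \emph{explicit} elementary equivalence between an arbitrary crossed-module and the trivial one. Unlike the surjectivity side, where we get to choose the model $\mathfrak h_c$ conveniently, injectivity requires manipulating a given, possibly non-free $\mathfrak h$; one must carefully exploit that $c=d\sigma$ to adjust the section $s_2$ so that the twisted bracket trivializes, while keeping track of the non-functoriality of the sections (the recurring issue that $s_1(\cdot)\rhd$ is not a genuine action). Managing these compatibility conditions simultaneously — Lie-algebra homomorphism, action-compatibility, and commutativity of the big diagram — is the real content; everything else is a transcription of computations already performed in the excerpt.
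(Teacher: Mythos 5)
First, a point of comparison: the paper does not actually prove this theorem. The appendix constructs the map $b$ (via the sections $s_{1,2,3}$), checks it is well defined on elementary equivalence classes, and then states the isomorphism with attribution to Gerstenhaber/MacLane, deferring the proof to \cite{Wag:2006}. Your overall strategy --- build an inverse by realizing a 3-cocycle through a free presentation $0\to\mathfrak r\to\mathfrak f\to\mathfrak n\to0$, check both composites, and verify additivity against the Baer-type sum --- is exactly the standard route taken in that reference, so the plan is aimed in the right direction and goes beyond what the paper itself supplies.

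There are, however, two concrete gaps. (1) Your explicit model is ill-posed: the map $t_c:(\bar r,v)\mapsto r$ from $\mathfrak r/[\mathfrak r,\mathfrak r]\oplus V$ into $\mathfrak f$ is not well defined (one cannot lift from the abelianization back into $\mathfrak f$), and quotienting by $[\mathfrak r,\mathfrak r]$ is not needed. The correct construction takes $\mathfrak h_c=\mathfrak r\oplus V$ with bracket $[(r,v),(r',v')]=([r,r'],\beta(r,r'))$ and $t_c(r,v)=r$, where $\beta\in C^2(\mathfrak f,V)$ satisfies $d\beta=\pi^*c$. The existence of $\beta$ is the crux, and it rests on the acyclicity $H^{\geq 2}(\mathfrak f,V)=0$ of the free Lie algebra --- an ingredient absent from your sketch; "$dc=0$" alone does not produce the twisting datum, and Jacobi on $\mathfrak h_c$ follows from $d\beta|_{\Lambda^3\mathfrak r}=\pi^*c|_{\Lambda^3\mathfrak r}=0$, not from $dc=0$ directly. (2) Your injectivity step targets the wrong statement. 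Elementary equivalence as in {\it Definition \ref{def:elemequiv}} is a one-directional morphism $(\phi,\psi)$, not an isomorphism, and the relation on ${\bf XMod}(\mathfrak n,V)$ is the equivalence relation it \emph{generates}; two crossed-modules in the same class are in general connected only by a zigzag, and $\mathfrak h$ need not be isomorphic to $\mathfrak h_0$ as a Lie algebra even when $[c]=0$. The standard argument instead maps a free model (built from the presentation and the trivializing cochain $\sigma$ with $c=d\sigma$) \emph{onto both} $t$ and $t_0$, producing the required zigzag. As written, the attempt to "construct a Lie-algebra isomorphism $\mathfrak h\xrightarrow{\sim}\mathfrak h_0$" would fail. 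The additivity paragraph is fine modulo the same repairs.
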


{The} 
 classifying data of a Lie algebra crossed-module $t:\mathfrak{h}\rightarrow\mathfrak{g}$ is exactly $(\mathfrak{n},V,c)$ with $c\in H^3(\mathfrak{n},V)$.

\subsection{The Postnikov Class}\label{postnikov}
Let us now turn to the reason why we called an element in $H^3(\mathfrak{n},V)$ a ``Postnikov class'' in the main text. Formally, a Lie 2-algebra integrates to a Lie 2-group $t:H\rightarrow G$~\cite{Bai_2013,Baez:2005sn}, for which a ``Gerstenhaber theorem'' also holds: the crossed-module $t: H\rightarrow G$ is classified by its {\it {Ho{\'a}ng data}} $(N,V,\kappa)$ \cite{Ang2018,Brown,baez2004}, named after the Vietnamese mathematician Ho{\'a}ng Xu{\^ a}n S{\' i}nh \cite{Baez2023HoangXS}, where $N=\operatorname{coker}t$, $V=\operatorname{ker}t$ $\kappa\in H^3(N,V)$ is a group cohomology class (as opposed to a Lie algebra cohomology class).

The name ``Postnikov class'' comes from topology. Given any ``nice'' space $X$ (a finite CW complex), its fundamental group $\pi_1(X)$ in general acts on higher homotopy groups $\pi_{\geq 2}(X)$ via {\it {monodromy}}. The {\it {homotopy 2-type}} $\Pi_2(X)=(\pi_1(X),\pi_2(X),\operatorname{Ptn}(X))$ is modeled by the group crossed-module \cite{Brown}
\begin{equation}
    1\rightarrow \operatorname{ker}\partial = \pi_2(X) \rightarrow \pi_2(X,Y) \xrightarrow{\partial} \pi_1(Y)\rightarrow \pi_1(X) =\operatorname{coker}\partial\rightarrow 1,\nonumber 
\end{equation}
where $Y\subset X$ is a closed subspace and $\partial$ is the natural boundary map. Up to homotopy, it is classified by the {\it {Postnikov class}} $\operatorname{Ptn}(X)\in H^3(\pi_1(X),\pi_2(X))$, which determines how 2-cells are glued upon the 1-cells.

It is possible to construct the classifying space $B(N,V)$ satisfying the condition $\Pi_2 B(N,V) = (N,V,\kappa)$ \cite{Kapustin:2013uxa,BaezRogers}. Such a space sits in the {\it {Postnikov tower}} fibration sequence
\begin{equation}
    B^2V \rightarrow B(N,V) \rightarrow BN,\nonumber
\end{equation}
where $BN=K(N,1)$ is the classifying Eilenberg--MacLane space of $N$ and $B^2V=K(V,2)$ is the second delooping of $V$, satisfying $\pi_2(B^2V)=V$ with other homotopy groups vanishing. 

In other words, the Postnikov class determines how $B(N,V)$ is constructed from the base $BN$ by gluing the second delooping space $B^2V$. The homotopy classification theorem states that gauge-equivalent discrete flat 2-connections $H^1(X,(N,V))$ are isomorphic to homotopy classes of classifying maps $X\rightarrow B(N,V)$ \cite{Ang2018,BaezRogers}; this is how 2-gauge topological field theories are constructed \cite{Kapustin:2013uxa,Zhu:2019}.

\section{2-Bundle Homomorphisms}\label{weakpost}
In this appendix, we show that an elementary equivalence gives rise to a homomorphism between 2-gauge bundles. We also generalize this perspective to the weak case. 

Let $\mathcal{P},\mathcal{P}'\rightarrow X$ denote two 2-gauge bundles on $X$, equipped with connections $(A,\Sigma)$ and $(A',\Sigma')$, respectively. Intuitively, from the gauge theory perspective, a {\it {2-bundle homomorphism}} $g:\mathcal{P}\rightarrow\mathcal{P}'$ should satisfy two properties: (1) it is a bundle map over $X$; namely the triangle 
\begin{equation}
    \begin{tikzcd}
\mathcal{P} \arrow[rr, "g"] \arrow[rd] &   & \mathcal{P}' \arrow[ld] \\
                                          & X &                        
\end{tikzcd}\nonumber
\end{equation}
commutes, and (2) preserves all gauge-invariant data.

From our computations in the main text, the gauge-invariant data consist precisely of the fake-flatness $\mathcal{F}$  (\ref{fakeflat}) and the 2-curvature $\mathcal{G}=K$. As such homomorphisms $\psi$ must~satisfy
\begin{equation}
    \mathcal{F} = g^*\mathcal{F}',\qquad \mathcal{G} = g^*\mathcal{G}'.\nonumber
\end{equation}
{Let} us write, locally, $g^* = f^*\otimes \Psi$ in terms of components, where $f^*$ is the pullback of $f:X\rightarrow X$ on forms and $\Psi=(\phi,\psi)$ is a map on the Lie algebras
\begin{equation}
    \phi:\mathfrak{h}'\rightarrow \mathfrak{h},\qquad \psi:\mathfrak{g}'\rightarrow\mathfrak{g}.\nonumber
\end{equation}
{The} fake-flatness condition $\mathcal{F}=\psi^*\mathcal{F}'$ implies
\begin{equation}
    F = (f^* \otimes \psi)F',\qquad t\Sigma= (f^*\otimes \psi) t'\Sigma'= t(f^*\otimes \phi)\Sigma';\label{eq:gaugemap}
\end{equation}
by linearity and $F=d_AA,F'=d_{A'}A'$, the first condition in  (\ref{eq:gaugemap}) means that $f^*$ commutes with the de Rham differential $d$, and that $\psi$ is a Lie algebra homomorphism {(This means that $A = \psi A'$ and $[A\wedge A] = \psi[A'\wedge A'] = [\psi A'\wedge \psi A']$.)}. The second condition means $t\phi = \psi t'$ commutes with the crossed-module maps $t,t'$.

\subsection{Equivalences of 2-Gauge Bundles}
The 2-curvature condition reads
\begin{equation}
    \mathcal{G}=d_A\Sigma = (f^*\otimes \phi)d_{A'}\Sigma' = (f^*\otimes \phi)(d\Sigma' + A\wedge^{\rhd'} \Sigma'),\nonumber
\end{equation}
where $\rhd'$ is the crossed-module action in $\mathcal{P}'$. Using the second condition from  (\ref{eq:gaugemap}), the first term reads
\begin{equation}
    (f^*\otimes \phi)d\Sigma' = d\Sigma = d(f^*\otimes \phi)\Sigma',\nonumber
\end{equation}
while the second term reads
\begin{equation}
    A\wedge^\rhd \Sigma = (f^*\otimes \phi)A'\wedge^{\rhd'}\Sigma'.\nonumber  
\end{equation}
{However}, the condition $A = (f^*\otimes \psi)A'$ means that we must have
\begin{equation}
    (f^*\otimes \phi)A'\wedge^{\rhd'}\Sigma' = ((f^*\otimes \psi)A')\wedge^{\rhd} (f^*\otimes \phi)\Sigma'.\nonumber
\end{equation}
{This} tells us that, not only does $g_{-1}$ also has to be a Lie algebra homomorphism, but also the condition
\begin{equation}
    \phi(X\rhd' Y) = (\psi X)\rhd (\phi Y),\qquad \forall~ X\in\mathfrak{g}',Y\in\mathfrak{h}'.\label{eq:equiv}
\end{equation}
{This} is precisely the definition of an { {elementary equivalence}} of Lie algebra crossed-modules~\cite{Wag:2006,Baez:2005sn}.

As such, we may interpret elementary equivalence as an equivalence of the gauge-invariant data on the 2-gauge bundles $\mathcal{P},\mathcal{P}'$. The { {Gerstenhaber Theorem} \ref{thm:gers}} then implies
\begin{corollary}
If the 2-gauge bundles $\mathcal{P},\mathcal{P}'$ exhibit distinct Postnikov classes $\kappa\neq \kappa' \in H^3(\mathfrak{n},V)$ as 2-curvature anomalies, then there does {\it not} exist a 2-bundle homomorphism between them.
\end{corollary}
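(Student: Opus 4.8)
The plan is to argue by contraposition: assume a 2-bundle homomorphism $g:\mathcal{P}\rightarrow\mathcal{P}'$ exists and show that it forces $\kappa=\kappa'$ in $H^3(\mathfrak{n},V)$, contradicting the hypothesis. The starting point is the analysis carried out just above this corollary in Appendix \ref{weakpost}: writing $g^* = f^*\otimes\Psi$ with $\Psi=(\phi,\psi)$, the requirement that $g$ preserve the gauge-invariant data $\mathcal{F},\mathcal{G}=K$ forces $\psi:\mathfrak{g}'\rightarrow\mathfrak{g}$ and $\phi:\mathfrak{h}'\rightarrow\mathfrak{h}$ to be Lie algebra homomorphisms intertwining the crossed-module maps, $t\phi=\psi t'$, and compatible with the actions, $\phi(X\rhd' Y)=(\psi X)\rhd(\phi Y)$ — which is exactly Definition \ref{def:elemequiv} of an elementary equivalence of Lie algebra crossed-modules (modulo checking that $\psi,\phi$ descend to the prescribed isomorphisms on $\operatorname{ker}$ and $\operatorname{coker}$, which is the content of the remaining diagram in Definition \ref{def:elemequiv}).

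First I would verify that $g$ being a homomorphism of 2-bundles over $X$ (in particular that it is invertible, or at least that it induces isomorphisms on kernels and cokernels) yields $\operatorname{ker}t\cong\operatorname{ker}t'=V$ and $\operatorname{coker}t\cong\operatorname{coker}t'=\mathfrak{n}$ compatibly with the inclusion $V\hookrightarrow\mathfrak{h}$ and projection $\mathfrak{g}\twoheadrightarrow\mathfrak{n}$; this is precisely condition (1) of Definition \ref{def:elemequiv}, and it is needed so that the two Postnikov classes even live in the same group $H^3(\mathfrak{n},V)$. Then, having established that $\mathcal{P}$ and $\mathcal{P}'$ are elementary equivalent crossed-modules, I would invoke the Lemma preceding Theorem \ref{thm:gers}, which states that elementary equivalent crossed-modules define cohomologous 3-cocycles $[c]=[c']\in H^3(\mathfrak{n},V)$. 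Since, by the discussion in section \ref{twisting}, the 2-curvature anomaly $\kappa(A)$ is a cocycle representative of the Postnikov class $[\kappa]$ of the underlying crossed-module, this gives $[\kappa]=[c]=[c']=[\kappa']$, contradicting $\kappa\neq\kappa'$ in cohomology.

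The main obstacle I anticipate is the identification of the ``2-bundle homomorphism'' data $(f^*,\phi,\psi)$ with a genuine elementary equivalence in the sense of Definition \ref{def:elemequiv} rather than merely a morphism of complexes: the preservation of fake-flatness only gives $t\phi=\psi t'$, and it is the preservation of the 2-curvature $\mathcal{G}$ that additionally forces the action-compatibility \eqref{eq:equiv}; one must be careful that this holds as an identity of Lie algebra maps and not just after restriction to the connection forms actually realized on $X$. The cleanest way around this is to note that, since the connections $(A,\Sigma)$ and $(A',\Sigma')$ can locally be taken generic (e.g. by working in a gauge where $A$ spans $\mathfrak{g}$ pointwise, as in section \ref{sec:fake-flat}), the wedge-product identities force the algebraic relations to hold on all of $\mathfrak{g}',\mathfrak{h}'$. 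Once this is secured, the rest is a direct appeal to Theorem \ref{thm:gers} and its preceding lemma, and the corollary follows immediately.
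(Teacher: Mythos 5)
Your proposal is correct and follows essentially the same route as the paper: the preservation of the gauge-invariant data $(\mathcal{F},\mathcal{G})$ by a 2-bundle homomorphism forces the component maps $(\phi,\psi)$ to constitute an elementary equivalence of the underlying crossed-modules, and then the lemma preceding Theorem \ref{thm:gers} (well-definedness of the Gerstenhaber map $b$) gives $[\kappa]=[\kappa']$, contradicting the hypothesis. The paper states this almost without proof ("The Gerstenhaber Theorem then implies..."), so your elaborations — checking the induced identifications on $\operatorname{ker}t$ and $\operatorname{coker}t$, and noting that genericity of the connections is needed to promote the wedge-product identities to identities of Lie algebra maps — are welcome fillings-in of the same argument rather than a different one.
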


\subsection{Weak Lie 2-Algebra Homomorphisms}
Let $\mathfrak{g}$ be a simple Lie algebra. We wish to describe the loop model $\mathfrak{l}_k$ for the string Lie 2-algebra $\mathfrak{string}_k(\mathfrak{g})$ at level $k\in\bbZ$ described in the main text. It is a Lie algebra crossed-module, but with a non-trivial Postnikov class $[\kappa]\in H^3(\mathfrak{g},\mathbb{R})$ given by the generating fundamental 3-cocycle $\omega=\langle \cdot,[\cdot,\cdot]\rangle$ on $\mathfrak{g}$ \cite{book-loop}.

To describe this Lie 2-algebra $\mathfrak{l}_k$, we first need to extend our notion of Lie 2-algebra homomorphisms to the weak case. This is accomplished by appending an additional component $\varphi$ to the notion of an elementary equivalence between Lie algebra crossed-modules. This component $\varphi$ is used to control the failure of $(\phi,\psi)$ from being a strict elementary equivalence, as well as the Jacobiators $\mu,\mu'$. 

The full definition is \cite{Baez:2005sn} the following.
\begin{definition}\label{def:weakequiv}
A {\it 2-homomorphism}  $\Psi=(\varphi,\phi,\psi)$ between weak Lie 2-algebras $t:\mathfrak{h}\rightarrow\mathfrak{g},t':\mathfrak{h}'\rightarrow\mathfrak{g}'$---with respective Jacobiators $\mu,\mu'$---are given by the set of chain maps
\begin{equation}
    \varphi: \mathfrak{g}\times\mathfrak{g}\rightarrow\mathfrak{h}',\qquad \phi:\mathfrak{h} \rightarrow\mathfrak{h}',\qquad \psi:\mathfrak{g}\rightarrow\mathfrak{g}',\nonumber
\end{equation}
such that $t'\phi = \psi t$ and the following conditions are satisfied:\vspace{-12pt}
\begin{eqnarray}
    t'\varphi(X,X') &=& \psi([X,X']) - [\psi X,\psi X']',\nonumber \\
    \varphi(X,tY) &=& \phi(X\rhd Y) - (\psi X)\rhd' (\phi Y),\nonumber \\
    \mu'(\psi(X),\psi(X'),\psi(X'')) - \phi(\mu(X,X',X'')) &=& \circlearrowright\varphi(X,[X',X'']) + \circlearrowright [(\psi X)\rhd'\varphi(X',X'')]\label{weak2hom}
\end{eqnarray}
for each $X,X',X''\in\mathfrak{g},Y\in\mathfrak{h}$ and $\circlearrowright$ denotes a summation over cyclic permutations of the arguments.
\end{definition}

{In more} mathematically sophisticated terms, $\Psi=(\varphi,\phi,\psi)$ is an invertible chain homotopy between two-term $L_\infty$-algebras \cite{Baez:2005sn,Roytenberg:2007}. In this way, it can be understood that $\varphi$ can only appear between {{weak}} Lie 2-algebras, and not strict Lie 2-algebras.

{Definition \ref{def:weakequiv}} gives us a weaker notion of elementary equivalence: that two weak Lie 2-algebras are { {weakly equivalent}} if there exist an invertible 2-homomorphism $\Psi=(\varphi,\phi,\psi)$ between them whose kernel and cokernel are (strictly) elementary equivalent to the trivial Lie 2-algebra. This was the notion of equivalence that was used in \cite{Baez:2005sn}.

\section{Framed Submanifolds; the Fermionic Quasistring Order}\label{framedquasi}
The order  \eqref{5dtopord}, $\mathcal{C}_\text{5d}$, hosts on $Y$ a (closed) ``magnetic'' quasistring described by $\Sigma$, and an ``electric'' dual quasiparticle 
described by $c$; we denote by $l^2,l$ their worldvolumes, respectively. To understand what this means geometrically, we recall that a {{framing}} is equivalent to a trivialization of the normal bundle \cite{book-algtop}, and that the Stiefel--Whitney classes $w_n$ keep track of the twists in the framing of $(n-1)$-dimensional embedded submanifolds~\cite{Thorngren2015}; see Figure \ref{fig:twist}.


\begin{figure}[h]
\includegraphics[width=0.9\columnwidth]{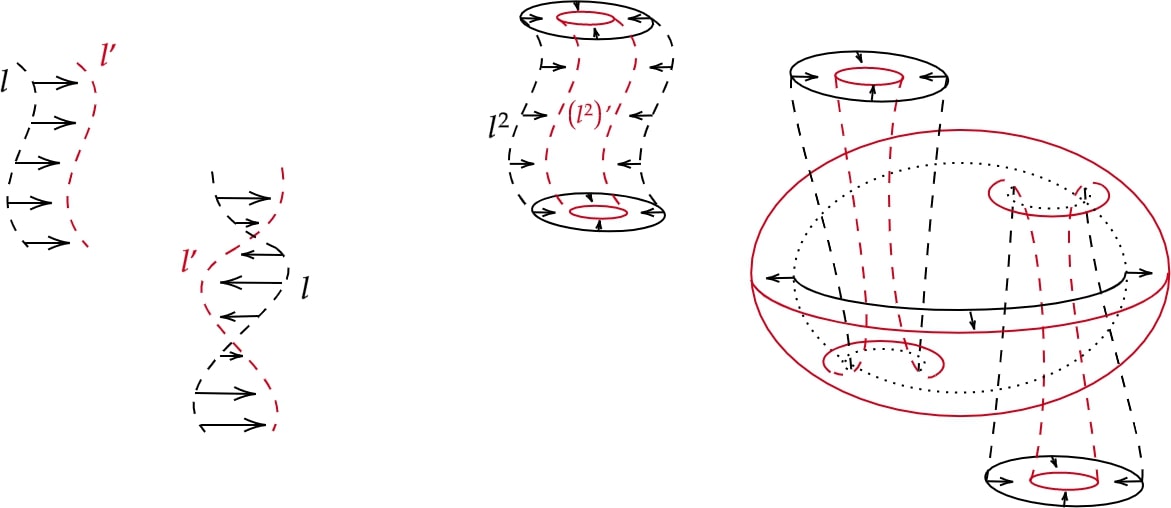}
\caption{Twists in the framing structure: the ``push-offs'' $l',(l^2)'$ of the worldvolumes $l,l^2$ along their framings are shown in red. {\bf Left}: The quasiparticle described by $c$, and a twist of its 1-dimensional worldline $l$. {\bf Right}: The quasistring described by $\Sigma$, and a twist in its 2-dimensional worldsheet $l^2$.}
\label{fig:twist}
\end{figure}

{$\Sigma,c$ are} 
 fields associated to $w_{3},w_2$, hence they ``detect'', respectively, via their values $\pm1$ on $1,2$-dimensional submanifolds $l,l^2\subset X$, twists in the framings of $l,l^2$. Conversely, those submanifolds that exhibit these twistings are interpreted as the worldvolumes of the quasi-particle/string.

\begin{remark}
Many topological orders, such as the 5D quasistring order $\mathcal{C}_\text{5d}$ \eqref{5dtopord} here, are \textit{bordism invariants} \cite{JuvenWang,Witten:2019,Thorngren2015,Freed:2014}. This means, in particular, that they all vanish $\mathcal{C}_\text{5d}=0$ on bounding 5-manifolds. Bordism invariants are elements of the {\it (framed) bordism group} $\Omega^O_\ast$ \cite{Guo_2020,Freed_2021}. They constitute the {\it non-perturbative} part of the anomalies that appear in QFTs \cite{Freed:2014,Witten:2019}.
\end{remark}

One of the most interesting properties of the order $\mathcal{C}_\text{5d}$ given in \eqref{5dtopord} is that it detects a {gravitational anomaly} {(Namely, an anomaly under diffeomorphism,)} { {aside}} {from the chiral anomaly due to time-reversal symmetry $\bbZ_2^T$.)} \cite{JuvenWang,Thorngren2015}: taking $Y=\mathbb{C}P^2$ and $X_\varphi = Y\times_\varphi S^1$ 
 and the diffeomorphism $\varphi:(z_1,z_2)\mapsto (\bar{z}_1,\bar z_2)$, where the $z_i$ are coordinates on $\mathbb{C}P^2$, one has $\mathcal{C}_\text{5d}=1 \mod 2$, which evaluates to a non-trivial anomaly
\begin{equation}
    \mathcal{Z}(Y)=(-1)^{\mathcal{C}_\text{5d}} = \exp \left(i\pi \int_{X_\varphi} w_2\cup w_3\right) = -1\label{gravanom}
\end{equation}
associated to the diffeomorphism $\varphi$. 

In fact, this mapping torus $X_\varphi=\mathbb{C}P^2\times_\varphi S^1$ generates the framed bordism group $\Omega^O_5$; in other words, any other 5-dimensional cobordism $X$ that evaluates to $-1$ in \eqref{gravanom} is cobordant to the mapping torus $X_\varphi$ of $Y=\mathbb{C}P^2$.

The partition function $\mathcal{Z}$ in  \eqref{gravanom} defines an invertible { {fermionic}} topological quantum field theory (TQFT) $\mathcal{Z}$ \cite{Freed:2014,Witten:2019,Freed_2021} given by the order $\mathcal{C}_\text{5d}$. In general, whether a TQFT is bosonic or fermionic is determined by the {self-braiding statistics} of its defects \cite{XieChen:2013,Zhu:2019}.


In dimensions $\geq 3$, point-like defects can be braided such that their worldlines $l,l'$ are { {linked}}, as shown in Figure \ref{fig:twist} {(Left).} 
 This procedure is encoded by a {linking number} $\operatorname{lk}(l,l')$, which changes by 1 upon a twist \cite{Guo_2020}. In dimensions $\geq 4$, one can braid { {worldsheets}} $l^2,(l^2)'$ with each other, as shown in Figure \ref{fig:twist} ({Right}). We also have a corresponding { {surface}}-linking number $\operatorname{lk}(l^2,(l^2)')$, which also changes by 1 upon a twist \cite{Thorngren2015}. 

The spin-TQFT $\mathcal{Z}$ exhibiting the gravitational anomaly given in  \eqref{gravanom} defines a topological order with {{fermionic}} quasiparticle and quasistring excitations \cite{JuvenWang,Thorngren2015}. What this means is that the Wilson loop and surface operators corresponding to these quasiparticles and quasistrings are accompanied by the following phases,
\begin{equation}
    (-1)^{\operatorname{lk}(l,l')} \qquad  (-1)^{\operatorname{lk}(l^2,(l^2)')}, \label{selfstat}
\end{equation}
in the quantum theory; if these phases are present, then the exictations are {\it bosonic}.

\newpage

\printbibliography

\end{document}